\newtheorem{theorem}{Theorem}
\newtheorem{lemma}[theorem]{Lemma}
\newtheorem{claim}[theorem]{Claim}
\newtheorem{corollary}[theorem]{Corollary}
\newtheorem{definition}{Definition}
\newcommand{\XOR}{\oplus}
\newcommand{\eps}{\epsilon}
\newcommand{\abs}[1]{\left| #1 \right|}
\newcommand{\T}{\mathcal{T}}
\newcommand{\I}{\mathcal{I}}
\newcommand{\A}{\mathcal{A}}
\renewcommand{\O}{\mathcal{O}}
\newcommand{\M}{\mathcal{M}}
\newcommand{\W}{\mathcal{W}}
\newcommand{\C}{\mathcal{C}}
\newcommand{\D}{\mathcal{D}}
\renewcommand{\S}{\mathcal{S}}
\newcommand{\calE}{\mathcal{E}}
\newcommand{\poly}{\mathsf{poly}}
\newcommand{\pre}{\mathtt{pre}}
\newcommand{\str}{\mathtt{str}}
\newcommand{\LCP}{{LCP}}
\newcommand{\ST}{\mathtt{ST}}
\newcommand{\PE}{\mathtt{PE}}
\title{Edit Distance: Sketching, Streaming and Document Exchange\footnote{Full  
version of an article to be presented at the 57th Annual IEEE Symposium on 
Foundations of Computer Science (FOCS 2016).}}
\author{Djamal Belazzougui \thanks{DTISI, CERIST Research Center, Algiers, Algeria. {dbelazzougui@cerist.dz}} \quad \quad \quad \quad \quad Qin Zhang \thanks{Indiana University Bloomington, Bloomington, IN, United States. {qzhangcs@indiana.edu}.  Work supported in part by NSF CCF-1525024, and IU’s
Office of the Vice Provost for Research through the FRSP.}}
\begin{document}

\begin{titlepage}
\maketitle

\begin{abstract}
We show that in the document exchange problem, where Alice holds $x \in \{0,1\}^n$ and Bob holds $y \in \{0,1\}^n$, Alice can send Bob a message of size $O(K(\log^2 K+\log n))$ bits such that Bob can recover $x$ using the message and his input $y$ if the edit distance between $x$ and $y$ is no more than $K$, and output ``error'' otherwise.  Both the encoding and decoding can be done in time $\tilde{O}(n+\poly(K))$.  This result significantly improves the previous communication bounds under polynomial encoding/decoding time.  We also show that in the referee model, where Alice and Bob hold $x$ and $y$ respectively,
they can compute sketches of $x$ and $y$ of sizes  $\poly(K \log n)$ bits (the encoding), and send to the referee, who can then compute the edit distance between $x$ and $y$ together with all the edit operations if the edit distance is no more than $K$, and output ``error'' otherwise (the decoding).  To the best of our knowledge, this is the {\em first} result for sketching edit distance using $\poly(K \log n)$ bits.  Moreover, the encoding phase of our sketching algorithm can be performed by scanning the input string in one pass. Thus our sketching algorithm also implies the {\em first} streaming algorithm for computing edit distance and all the edits exactly using $\poly(K \log n)$ bits of space.  
\end{abstract}
\end{titlepage}

\section{Introduction}
In this paper we study the problem of {\em edit distance}, where given two strings $s,t \in \{0,1\}^n$, we want to compute $ed(s,t)$, which is defined to be the minimum number of insertions, deletions and substitutions to convert $s$ to $t$.  We also want to find all the edit operations.  This problem has been studied extensively in the literature due to its numerous applications in bioinformatics (comparing the similarity of DNA sequences), natural language processing (automatic spelling correction), information retrieval, etc.  In this paper we are interested in the small distance regime, that is, given a threshold $K$, we want to output $ed(s,t)$ together with all the edit operations if $ed(s,t) \le K$, and output ``error'' otherwise.  We will explain shortly that this is the interesting regime for many applications.   We consider three different settings:
\begin{itemize}
\item {\em Document Exchange}.   We have two parties Alice and Bob, where Alice holds $s$ and Bob holds $t$. The task is for Alice to compute a message $msg$ based on $s$ (the encoding) and send to Bob, and then Bob needs to recover $s$ using $msg$ and his input $t$ (the decoding).  We want to minimize both the message size and the encoding/decoding time.

\item  {\em Sketching}.  We have Alice and Bob who hold $s$ and $t$ respectively, and a third party called the {\em referee}, who has no input. The task is for Alice and Bob to compute sketches $sk(s)$ and $sk(t)$ based on their inputs $s$ and $t$ respectively (the encoding), and send them to the referee.  The referee then computes $ed(s,t)$ and all the edits using $sk(s)$ and $sk(t)$ (the decoding).  The goal is to minimize both the sketch size and the encoding/decoding time.

\item  {\em Streaming}.   We are allowed to scan string $s$ from left to right once, and then string $t$ from left to right once, using a memory of small size. After that we need to compute $ed(s,t)$ and all the edits using the information retained in the memory.  The goal is to minimize the memory space usage and the processing time.
\end{itemize}
\vspace{-5mm}

\paragraph{Motivations.}
Document exchange is a classical problem that has been studied for decades.  This problem finds many applications, for example, two versions of the same file are stored in two different machines and need to be synchronized, or we want to restore a file transmitted through a channel with erroneous insertions, deletions and substitutions.  It is useful to focus on the small distance regime since one would expect that in the first application the two files will not differ by much, and in the second application the channel will not introduce too many errors. Otherwise we can detect the exception and ask the sender to transmit the whole string again, which is a low probability event thus we can afford.

Sketching the edit distance is harder than document exchange because in the decoding phase the referee does not have access to any of the original strings, which, however, also makes the problem more interesting and useful. For example, in the {\em string similarity join}, which is a fundamental problem in databases,\footnote{See, for example, a competition on string similarity join held in conjunction with EDBT/ICDT 2013: \url{http://www2.informatik.hu-berlin.de/~wandelt/searchjoincompetition2013/}} one needs to find all pairs of strings (e.g., genome sequences) in a database that are close (no more than a given threshold $K$) with respect to edit distance.  This is a computationally expensive task. With efficient sketching algorithms we can preprocess each string to a small size sketch, and then compute the edit distances on those small sketches without losing any accuracy (all of our algorithms aim at exact computations). Note that this preprocessing step can be fully parallelized in modern distributed database systems such as MapReduce and Spark. Moreover, we will show that the encoding phase can be done by scanning the input string once in the streaming fashion, and is thus time and space efficient.

\vspace{-2mm}

\paragraph{Our Results.}
In this paper we push the frontiers further for all of the three problems.  For the convenience of the presentation we use $\tilde{O}(f)$ to denote $f \poly(\log f)$, and further assume that $K \le n^{1/c}$ for a sufficiently large constant $c > 0$ (thus $n$ absorbs $\poly(K)$ factors).  We list our results together with previous results in Table~\ref{tab:results}.  Our contribution includes:
\begin{enumerate}
\item  We have improved the communication cost of the document-exchange problem to $O(K(\log^2 K + \log n))$ bits while maintaining almost linear running time.  Note that in the case when $\log K = O(\sqrt{\log n})$, our communication bound matches the information theoretic lower bound $\Omega(K \log n)$.

\item  We have obtained a sketch of size $O(K^8 \log^5 n)$ bits for edit distance, which,  to the best of our knowledge, is the {\em first} result for sketching edit distance in size $\poly(K \log n)$. This result answers an open problem in \cite{Jo12}, and is in contrast to the lower bound result in \cite{AGMP13} which states that any {\em linear} sketch for edit distance has to be of size $\Omega(n/\alpha)$ even when we want to distinguish $ed(s,t) \ge 2\alpha$ or $ed(s,t) \le 2$.

\item Our sketching algorithm can be implemented in the standard streaming model.  To the best of our knowledge, this is the {\em first} efficient streaming algorithm for exact edit distance computation.  We also show that if we are allowed to scan $s$ and $t$ simultaneously in the coordinated fashion, we can compute the edit distance using only $O(K \log n)$ bits of space, which significantly improves the result in \cite{CGK16}.
\end{enumerate}

\begin{table*}[t]
  \centering
  \begin{tabular}{|c|c|c|c|c|}
	\hline
	problem & comm. / size / space (bits) & running time & rand. or det. & ref. \\
	\hline
	document-  & $O(K\log n)$ & $n^{O(K)}$ & D & \cite{Orlitsky91}   \\
	exchange & $O(K \log(n/K) \log n)$ &  $\tilde{O}(n)$ & R & \cite{IMS05} \\
	& $O(K \log^2 n \log^* n)$ & $\tilde{O}(n)$ & R & \cite{Jo12} \\
	& $O(K^2 + K \log^2 n)$ & $\tilde{O}(n)$ & D & \cite{belazzougui2015} \\
	& $O(K^2 \log n)$ & $\tilde{O}(n)$ & R & \cite{CGK16} \\
	& $O(K(\log^2 K + \log n))$ & $\tilde{O}(n)$ & R & \bf new \\
	\hline
	sketching & $O(K^8 \log^5 n)$ & $\tilde{O}(K^2 n)$ (enc.),  & R & \bf new\\
	&&$\poly(K \log n)$ (dec.) && \\
	\hline
       streaming & $O(K^8 \log^5 n)$ & $\tilde{O}(K^2 n)$ & R & \bf new\\
	\hline
	simultaneous-  & $O(K^6 \log n)$ & $\tilde{O}(n)$ & R & \cite{CGK16} \\
	
	 streaming & $O(K \log n)$ & $O(n)$ & D & \bf new\\
	\hline
  \end{tabular}
  \caption{Our results for computing edit distance in different models.  $n$ is the input size and $K$ is a given upper bound  of the edit distance. We have assumed that $K \le n^{1/c}$ for a sufficiently large constant $c > 0$, and thus $n$ absorbs $\poly(K)$ factors.
R stands for {\em randomized} and  D stands for {\em deterministic}. }  
  \label{tab:results}
\end{table*}


\vspace{-4mm}
\paragraph{Related Work}
We survey the previous work in the settings that we consider in this paper. 
\smallskip

\noindent{\em Document Exchange.}\ \    The first algorithm for document exchange was proposed by Orlitsky~\cite{Orlitsky91}. The idea is that using graph coloring Alice can (deterministically) send Bob a message of size $O(K \log n)$, and then Bob can recover $x$ exactly using the received message and his input $y$.  However, the decoding procedure requires time exponential in $K$.   Alternatively, Alice and Bob can agree on a random hash function $h : \{0,1\}^n \to [c_K K \log n]\ (c_K = O(1))$; Alice simply sends $h(x)$ to Bob, and then Bob enumerates all vectors in the set $\{z\ |\ ed(y, z) \le K, z \in \{0,1\}^n \}$, and tries to find a $z$ such that $h(z) = h(x)$.  This protocol can succeed with high probability by choosing the constant $c_K$ large enough, but again the decoding time is exponential in $K$.  Orlitsky left the following question: 
{\em Can we design a communication {\em and} time efficient protocol for document exchange?}

Progress has been made since then~\cite{CPSV00,IMS05,Jo12}. Irmak et al.~\cite{IMS05} proposed a randomized protocol using the erasure-code that achieves $O(K \log(n/K) \log n)$ bits of communication and $\tilde{O}(n)$ encoding/decoding time, and (independently) Jowhari~\cite{Jo12} gave a randomized protocol using the ESP-tree~\cite{cormode2007string} that achieves $O(K \log^2 n \log^* n)$ bits of communication and $\tilde{O}(n)$ encoding/decoding time.  Very recently Chakraborty et al.~\cite{CGK16} obtained a protocol with  $O(K^2 \log n)$ bits of communication and $\tilde{O}(n)$ encoding/decoding time, by first embedding strings in the edit space to the Hamming space using random walks, and then perform the document exchange in the Hamming space.  We note that random walk has also been used for computing the Dyck language edit distance in an earlier paper by Saha~\cite{saha2014dyck}.

The document exchange problem is closely related to the theory of error correcting code: a deterministic protocol for document exchange naturally leads to an error correcting code.  The first efficient deterministic protocol for document exchange has been proposed very recently~\cite{belazzougui2015}; it uses $O(K^2 + K \log^2 n)$ bits of communication and runs in $\tilde{O}(n)$ time, and thus immediately gives an efficient error correcting code with redundancy $O(K^2 + K \log^2 n)$.  Also very recently, Brakensiek et al.~\cite{brakensiek2015efficient} showed an efficient error correcting code over a channel of at most $K$ insertions and deletions with a redundancy of $O(K^2 \log K \log n)$.  We note that our protocol is randomized. It remains an interesting open problem whether we can derandomize our protocol and obtain a better error correcting code.
\smallskip

\noindent{\em Sketching.}\ \    While the approximate version has been studied extensively in the literature \cite{BYJKK04,CK06,OR07,AIK09}, little work has been done for sketching edit distance without losing any accuracy.  Jowhari~\cite{Jo12} gave a sketch of size $\tilde{O}(K\log^2 n)$ bits for a special case of edit distance called the Ulam distance, where the alphabet size is $n$ and each string has no character repetitions.\footnote{In this paper when considering edit distance we always assume binary alphabet, but our results can be easily carried to non-binary alphabets as long as the alphabet size is no more than $\poly(n)$.  We note that the embedding result by Chakraborty et al. can be applied to strings with non-binary alphabets (see \cite{CGK16} for details).
}   Andoni et al.~\cite{AGMP13} showed that if we require the sketch for edit distance to be linear, then its size must be $\Omega(n/\alpha)$ even when we want to distinguish whether the distance is at least $2\alpha$ or no more than $2$.

Sketching is naturally related to embedding, where we want to embed the edit space to another metric space in which it is easier to compute the distance between two objects.  Ostrovsky and Rabani~\cite{OR07} gave an embedding of the edit space to the $\ell_1$ space with an $\exp{(O(\sqrt{\log n \log\log n}))}$ distortion, which was later shown to be at least $\Omega(\log n)$ by Krauthgamer and Rabani~\cite{KR09}.  In the recent work Chakraborty et al.~\cite{CGK16} obtained a weak embedding from the edit space to the Hamming space with an $O(K)$ distortion.\footnote{By ``weak'' we mean that the distortion for each pair $(s,t)$ in the edit space is bounded by $O(K)$ with constant probability.} 
\smallskip

\noindent{\em Streaming.}\ \   To the best of our knowledge, computing exact edit distance has not been studied in the streaming model.  Chakraborty et al.~\cite{CGK16} studied this problem in a variant of the streaming model where we can scan the two strings $x$ and $y$ simultaneously in the coordinated fashion, and showed that $O(K^6 \log n)$ bits of space is enough to computing $ed(x,y)$. A number of problems related to edit distance (e.g., longest increasing/common subsequence, edit distance to monotonicity) have been studied in the streaming literature~\cite{GJKK07,SW07,EJ08,GG10,CLLPTZ11,MS13}, most of which consider approximate solutions.
\smallskip

\noindent{\em Computation in RAM.}\ \    Computing edit distance in the small distance regime has been studied in the RAM model, and algorithms with $O(n + K^2)$ time have been proposed \cite{LMS98,PP09}. 
On the other hand, it has recently been shown that the general edit distance problem cannot be solved in time better than $n^{2 - \eps}$ for any constant $\eps > 0$ unless the strong Exponential Time Hypothesis is false~\cite{BI15}.

\vspace{-2mm}

\paragraph{Techniques Overview.}
We now summarize the high level ideas of our algorithms.  For simplicity the parameters used in this overview are just for illustration purposes.
\smallskip

\noindent{\em Document Exchange.}\ \  As mentioned, the document exchange problem can be solved by the algorithm of Irmak et al.~\cite{IMS05} (call it the IMS algorithm) using $O(K \log^2 n)$ bits of communication.  Intuitively speaking, IMS first converts strings $s$ and $t$ to two binary parse trees $\T_s$ and $\T_t$ respectively, where the root of the tree corresponds to the hash signature of the whole string, the left child of the root corresponds to the hash signature of the first half of the string, and the right child corresponds to that of the second half, and so on. 
IMS then tries to {\em synchronize} the two parse trees and for the receiver to identify on $\T_s$ at most $K$ root-leaf paths which lead to the at most $K$ edits. The synchronization is done using error-correcting codes at each level of the tree.  

The main idea of our new algorithm is that if we can identify those large common blocks in some optimal alignment between $s$ and $t$, then we can skip those common blocks and effectively reduce $s, t$ to two strings $s', t'$ of much smaller sizes, say, each consisting of at most $K$ substrings each of size at most $K^{99}$.  Now if we apply the IMS algorithm on $s'$ and $t'$ we only need $O(K \log^2 K^{99}) = O(K \log^2 K)$ bits of communication.  The question now is how to identify those large common blocks, which turns out to be quite non-trivial.  Note that Alice has to do this step independently in the one-way communication model, and she does not even have a good alignment between $s$ and $t$.  

Our main tool is the embedding result by Chakraborty et al.~\cite{CGK16} (denoted by the {\em CGK embedding}): we can embed binary strings $s$ and $t$ of size $n$ to binary strings $s'$ and $t'$ of size $3n$ independently, such that if $ed(s,t) = k\ (\le K)$, then with probability $0.99$ we have $\Omega(k) \le ham(x,y) \le O(k^2)$, where $ham(\cdot,\cdot)$ denotes the Hamming distance.  In the (good) case that after the embedding, the $O(k^2)$ mismatches in $s'$ and $t'$ (in the Hamming space) are distributed into at most $K$ pairs of trunks each of length at most $K^{99}$,  then we can identify those mismatched trunks as follows: We partition $s'$ and $t'$ to blocks of size $100 \sqrt{n}$, and then map those blocks back to substrings in $s$ and $t$ (the edit space). We next use an error-correcting code to identify those (up to $K$) pairs of substrings of $s$ and $t$ that differ, and recurse on those mismatched pairs.  By doing this we will have effectively reduced $s$ and $t$ to at most $K$ substrings each of length $100\sqrt{n}$ after the first round.  Then after $O(\log \log n)$ recursion rounds we can reduce the length of each of the (at most) $K$ substrings to $K^{99}$, at which moment we apply the IMS algorithm.  

The subtlety comes from the fact that if the strings $s$ and $t$ contain long common periodic substrings of sufficiently short periods, then the  $O(k^2)$ mismatches will possibly be distributed into  $O(k^2)$ remote locations in $s'$ and $t'$ (note that we cannot recurse on $k^2$ substrings since that will introduce a factor of $k^2$ in the message size). More precisely, the random walk used in the CGK embedding may get ``stuck'' in the common periodic  substrings in $s'$ and $t'$, and consequently spread the mismatches to remote locations.  We thus need to first carefully remove those long common periodic substrings in $s$ and  $t$ (again Alice has to do this independently), and then apply the above scheme to reduce the problem size.
\medskip 

\noindent{\em Sketching.}\ \  
We can view an alignment $\A$ between $s$ and $t$ as a bipartite matching, where nodes are characters in $s$ and $t$, and edges correspond to those aligned pairs $(i,j)$ in $\A$.  
The matching can naturally be viewed as a group of clusters each consisting of a set of consecutive edges $\{(i, j), (i+1, j+1), \ldots \}$, plus some singleton nodes in between.  Now let $sk(\A)$ be a sketch of $\A$ containing the first and last edges of each cluster in $\A$ plus all the singleton nodes. Intuitively, if $ed(s,t) \le K$ then for a good alignment $\A$, the size of $sk(\A)$ can be much smaller than that of $\A$.

Given a collection of matchings $\{\A_1, \ldots, \A_\rho\}$, letting $\I = \bigcap_{j \in [\rho]} \A_j$ be the set of common edges of $\A_1, \ldots, \A_\rho$, our main idea is the following: if there exists an optimal alignment that goes through all edges in $\I$, then we can produce an optimal alignment for strings $s$ and $t$ using $\{sk(\A_1), \ldots, sk(\A_\rho)\}$. 


We now again make use of the CGK embedding, which can be thought of as a random walk running on two strings $s$ and $t$ of size $n$ in the edit space, and producing two strings $s'$ and $t'$ of size $3n$ in the Hamming space such that $ham(s', t') = \poly(K \log n)$ with high probability. Each random walk consists of a set of 
states $\{(p_1, q_1), \ldots, (p_m, q_m)\}\ (p_j, q_j \in [n])$, which naturally corresponds to an alignment between $s$ and $t$.  We say a
random walk passes a pair $(u,v)$ if there exists some $j \in [m]$ such that $(p_j, q_j) = (u,v)$. Our key observation is that given $\rho = K^2 \log n$ random walks according to the CGK embedding, for any pair $(u,v)$ with $s[u] = t[v]$, if all the $\rho$ random walks pass $(u,v)$, then $(u,v)$ must be part of a particular optimal alignment (see  Lemma~\ref{lem:anchor} and its proof idea in Section~\ref{sec:anchor}).  We thus only need to compute the sketches of the alignments corresponding to those random walks, each of which corresponds to the differences between $s'$ and $t'$ in the Hamming space, for which efficient sketching algorithms have already been obtained. Moreover, the size of each sketch can be bounded by $\poly(K \log n)$ using the fact that $ham(s', t') = \poly(K \log n)$. The last step is to map these differences in the Hamming space back to the edit space for computing an optimal alignment, which requires us to add to the sketches some position-aware structures to assist the reverse mapping.  The whole sketch consists of $\rho = K^2 \log n$ sub-sketches each of size $\poly(K \log n)$, and is thus of size $\poly(K \log n)$.

\medskip 

\noindent{\em Streaming.}\ \   Our algorithm for the standard streaming model follows directly from our sketching algorithm, since the encoding phase of our sketching algorithm can be performed using a one-pass scan on the input string.  Our result in the simultaneous streaming model is obtained by implementing the classic dynamic programming algorithm for edit distance in a space and time efficient way, more precisely, by only trying to compute those must-know cells in the alignment matrix.

\vspace{-2mm}

\paragraph{Notations.}
We use $n$ as the input size, and $K$ as the threshold under which we can compute $ed(s,t)$ and the edit operations successfully.

Denote $[i..j] = \{i, i+1, \ldots, j\}$ and $[n] = [1..n]$.  When we write $x[i..j]$ for a string $x$ we mean the substring $(x[i], \ldots, x[j])$.  We use ``$\circ$'' for string concatenation.   All logs are base-$2$ unless noted otherwise. 

\section{Preliminaries}
\label{sec:preliminary}

In this section we present some tools and previous results that we need in our algorithms.

\vspace{-2mm}
\paragraph{The CGK Embedding.}
A basic procedure in our algorithms is the embedding from edit space to Hamming space introduced in~\cite{CGK16}. The embedding is parameterized with a random string $r \in \{0,1\}^{6n}$, and maps $s \in \{0,1\}^n$ to $s' \in \{0,1\}^{3n}$. We use two counters $i$ and $j$ both initialized to $1$. Counter $i$ points to $s$ and counter $j$ points to $s'$. The algorithm proceeds in steps $j = 1, 2, \ldots$.  At step $j$ it does:
\begin{enumerate}
\vspace{-1mm}
\item $s'[j] \leftarrow s[i]$.

\vspace{-1.5mm}
\item If $r[(2j-1) + s[i]]=1$,  then $i \leftarrow i+1$.  Stop when $i = n+1$.

\vspace{-1.5mm}
\item $j \leftarrow j+1$. 
\end{enumerate}
\vspace{-1.5mm}
At the end, if $j < 3n$, then we append $(3n-j)$ `0's to $s$ to make it a string of length $3n$.  In words, at each step $j$ the algorithm reads a bit from $s$ indexed by $i$ and copies it to the output string $s'$ at position $j$.  We then decide whether to increment the index $i$ using the $((2j-1) + s[i])$-th bit of string $r$.

We are interested in comparing the Hamming distance between strings $s'$ and $t'$ produced by the embeddings on $s  \in \{0,1\}^n$ and $t  \in \{0,1\}^n$ respectively.  Let $i_0$ be a counter for $s$, $i_1$ be a counter for $t$, and $j$ be a counter denoting the current time step. At time step $j$, the bit $s[i_0]$ is copied to $s'[j]$, and the bit $t[i_1]$ is copied to $t'[j]$.  Then one of the following four cases will happen: (1) neither $i_0$ nor $i_1$ increments; (2) only $i_0$ increments; (3) only $i_1$ increments; (4) both $i_0$ and $i_1$ increment. Note that if $s[i_0]=t[i_1]$, then only first and last cases can happen, and the bits copied into $s'$ and $t'$ are the same. Otherwise if $s[i_0]\neq t[i_1]$, then the bits copied into $s'$ and $t'$  are different. Each of the four cases happens with probability $1/4$ depending on the random string $r$.  We have the following definitions.
\begin{definition}[{\bf State and Progress Step}]
We call the sequence of $(i_0, i_1)$ the {\em states} of the random walk.  We say that we have a \emph{progress step} if $s[i_0]\neq t[i_1]$ and at least one of $i_0$ and $i_1$ increments. 
\end{definition}

We can model the evolution of the ``shift'' $d = (i_0 - i_1)$ as (a different) random walk on the integer line, where at each time step, if $s[i_0]\neq t[i_1]$ then $d$ stays the same with probability $1/2$, decrements by $1$ with probability $1/4$, and increments by $1$ with probability $1/4$, otherwise if $s[i_0] = t[i_1]$ then $d$ always stays the same.
We can focus on the cases when one of $i_0$ and $i_1$ increments, and view the change of $d$ as a {\em simple} random walk on the integer line, where at each step the shift $d$ increments or decrements with equal probability.  

The following two lemmas have been shown in~\cite{CGK16} by using the properties of simple random walks.
\begin{lemma}
\label{lem:CGK}
If $ed(s,t) = k$, then for any $\ell \in \mathbb{N}^+$ the total number of progress steps in a walk according to the CGK embedding is bounded by $\ell$ with probability at least $1 - O(k/\sqrt{\ell})$. 
\end{lemma}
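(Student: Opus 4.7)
The plan is to reduce the bound on progress steps to a classical first-passage-time estimate for a simple symmetric random walk on $\mathbb{Z}$. First, I would isolate the $\pm 1$ movements of the shift $d = i_0 - i_1$. The two bits $r[2j-1], r[2j]$ consumed at step $j$ are independent of the walk's history up to time $j-1$. When $s[i_0] = t[i_1]$, only the ``both-advance'' or ``neither-advance'' outcomes are possible (each with probability $1/2$), so $d$ is unchanged. When $s[i_0] \neq t[i_1]$, each of the four outcomes has probability $1/4$; conditioned on exactly one counter advancing, $d$ moves by $+1$ or $-1$ with equal probability. Restricted to the subsequence of steps at which exactly one counter advances, $d$ thus performs a simple symmetric random walk $W$ on $\mathbb{Z}$ starting at $0$. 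Every such $\pm 1$ step is a progress step, while the ``both-advance mismatch'' progress steps occur with conditional probability $1/4$ versus $1/2$ for the $\pm 1$ steps given a mismatch; hence the total number of progress steps is at most a constant multiple of the number of $\pm 1$ steps, which is absorbed into the $O(\cdot)$.

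Next, I would use the assumption $ed(s,t)=k$ to pin down $W$. Fix an optimal alignment $\A^*$ of $s$ and $t$ with exactly $k$ edits, and let $\sigma(i_0,i_1)\in [-k,k]$ denote the ``correct shift'' prescribed by $\A^*$ at state $(i_0,i_1)$. The function $\sigma$ is piecewise constant, changing only at the $k$ edit positions of $\A^*$ and by $\pm 1$ each time. Whenever $(i_0, i_1)$ is a matched pair of $\A^*$, we have $s[i_0]=t[i_1]$ and the shift is conserved: both counters eventually advance together and no progress step occurs. Consequently, the walk makes progress steps only while $d \neq \sigma$, and in order to ``absorb'' one of the $k$ edits of $\A^*$, the shift $d$ must change by exactly $\pm 1$ in the direction indicated by that edit.

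The key step is then to argue that the total number of $\pm 1$ steps of $W$ executed throughout the walk is stochastically dominated by the first-passage time of $W$ to level $\pm k$ — equivalently, by a sum of $k$ i.i.d.\ first-passage times to $\pm 1$. Granted this, the standard reflection-principle tail bound $\Pr[\tau_k > \ell] = O(k/\sqrt{\ell})$ for the simple symmetric walk on $\mathbb{Z}$ immediately yields the lemma. The main obstacle is formalizing this stochastic-domination step; I would attempt it by induction on $k$, conditioning on the first edit of $\A^*$: the walk must first move its shift by $\pm 1$ to pass this edit, which by the analysis of Step 1 takes a number of $\pm 1$ steps distributed as a first-passage time to $\pm 1$; thereafter the remaining walk runs over two strings of edit distance at most $k-1$, and the inductive hypothesis applies. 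Carefully handling the interleaving of matched-pair ``rest'' periods with the $\pm 1$ stages in this induction is where the bookkeeping becomes delicate.
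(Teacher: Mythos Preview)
The paper does not supply its own proof here; it simply attributes the lemma to \cite{CGK16} with the remark that it follows ``by using the properties of simple random walks.'' Your reconstruction is essentially the CGK argument: view the shift $d=i_0-i_1$ as a lazy simple symmetric walk on $\mathbb{Z}$, and bound the total number of progress steps by a first-passage time to distance $k$, whose tail is $O(k/\sqrt{\ell})$ via the reflection principle.

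One correction worth making: your assertion that ``the walk makes progress steps only while $d\neq\sigma$'' fails in the presence of substitutions in $\A^*$. If $\A^*$ aligns $s[i_0]$ with $t[i_1]$ by a substitution, then at that state $d=\sigma$ yet $s[i_0]\neq t[i_1]$, so a progress step occurs and $d$ drifts off by $\pm 1$ before it can return. The cleaner invariant---and the one underlying Lemma~\ref{lem:converge}, also cited from \cite{CGK16}---is the potential $\phi(i_0,i_1)=ed(s[i_0..n],t[i_1..n])$: progress steps can occur only while $\phi>0$, and from any state the walk reaches a state with $\phi$ reduced by one within a number of progress steps stochastically dominated by the first hitting time of $\pm 1$ for the simple walk. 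Chaining $k$ such phases gives exactly your $\tau_k$ bound, and the induction you sketch becomes routine once you track $\phi$ rather than $\sigma$; the matched-pair ``rest'' periods cost no progress steps and can simply be ignored.
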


\begin{lemma}
\label{lem:converge}
Let $(p_0, q_0)$ be a state of a random walk $\W$ according to the CGK embedding in which $d = ed(s[p_0..n], t[q_0..n]) \ge 1$, then with probability $1 - O(1/\sqrt{\ell})$, $\W$ reaches within $\ell$ progress steps a state $(p_1, q_1)$ with $ed(s[p_1..n], t[q_1..n]) \le d - 1$.
\end{lemma}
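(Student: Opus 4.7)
The plan is to couple the random walk $\W$ with an optimal alignment $\A^*$ of $s[p_0..n]$ and $t[q_0..n]$ and reduce the claim to the classical hitting-time behavior of a lazy simple random walk on $\mathbb{Z}$. While the walk is in a match state (i.e.\ $s[i_0] = t[i_1]$), both counters deterministically advance in lockstep without incurring a progress step, so one may assume without loss of generality that the walk is currently at a mismatch state: either $(p_0, q_0)$ itself, or the first mismatch reached after deterministically skipping a common prefix.

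At such a mismatch state, $\A^*$ prescribes exactly one of three possible moves for the next step: advance only $i_0$ (deletion), advance only $i_1$ (insertion), or advance both (substitution); any one of these, when taken, transitions the walk to a state $(i_0', i_1')$ with $ed(s[i_0'..n], t[i_1'..n]) = d - 1$. The embedding's random bits make each of these three progress moves occur with probability $1/4$, while a fourth outcome ``wait'' (neither counter advances) also occurs with probability $1/4$. Conditioned on the step being a progress step, the walk therefore makes the correct $\A^*$-move with probability $1/3$. If this happens at any mismatch state visited within $\ell$ progress steps, the conclusion follows immediately.

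If the walk instead takes a wrong move, the shift $D = i_0 - i_1$ deviates from the value dictated by $\A^*$. The key observation is that at each mismatch step, the shift evolves as a lazy simple random walk on $\mathbb{Z}$: it moves by $+1$ or $-1$ with probability $1/4$ each, and stays put with probability $1/2$. By the standard hitting-time bound for simple random walks, the shift returns to any fixed target value within $\ell$ mismatch steps with probability $1 - O(1/\sqrt{\ell})$. Whenever the shift returns to the target value dictated by $\A^*$ after a wrong excursion, the walk has reached a state from which $\A^*$'s remaining edits can still be completed using at most $d - 1$ operations, so $ed \le d - 1$ at that state.

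The main obstacle I anticipate is formalizing the re-alignment claim: we must ensure that whenever the shift coincides with the target value after a detour of wrong moves, the walk's current state really does realize the reduced edit distance $d - 1$, rather than merely having the right shift. Establishing this requires an invariant linking the prefixes of $s$ and $t$ consumed so far by the walk to the initial segment of $\A^*$, together with an application of the triangle inequality for edit distance to absorb the cost of intermediate wrong moves. Carrying out this bookkeeping carefully, perhaps by induction on the number of wrong moves preceding the shift's recovery, is the technical crux of the proof.
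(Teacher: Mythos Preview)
The paper does not prove this lemma: it merely states that Lemma~\ref{lem:converge} (together with Lemma~\ref{lem:CGK}) ``have been shown in~\cite{CGK16} by using the properties of simple random walks.'' So there is no in-paper proof to compare against, and your proposal is being measured against the argument in the cited source.

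Your high-level plan---couple the walk to a fixed optimal alignment $\A^*$, skip the common prefix, observe that at each mismatch the shift $D=i_0-i_1$ performs a (lazy) simple random walk, and invoke the $1-O(1/\sqrt{\ell})$ hitting-time bound---is exactly the mechanism behind the CGK argument, so you are on the right track.

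However, the obstacle you flag at the end is genuine and is precisely the point that needs work. Your sentence ``whenever the shift returns to the target value dictated by $\A^*$ after a wrong excursion, the walk has reached a state from which $\A^*$'s remaining edits can still be completed using at most $d-1$ operations'' is \emph{not} obvious: having the correct shift $p-q$ at some later state $(p,q)$ does not by itself guarantee $ed(s[p..n],t[q..n])\le d-1$, because $(p,q)$ may lie off the path of $\A^*$ entirely. The way this is handled is geometric rather than via a triangle-inequality bookkeeping: view both $\W$ and $\A^*$ as monotone lattice paths in the $(i_0,i_1)$ grid starting from $(p_0,q_0)$. The path of $\A^*$ after its first edit lies on the diagonal of shift $D^*$; since $\W$'s shift changes by at most $1$ per progress step, the \emph{first} time $\W$ attains shift $D^*$ it must do so at a lattice point that lies on the path of $\A^*$ (it cannot jump over the diagonal), and at that point the remaining portion of $\A^*$ furnishes an alignment of the two suffixes with at most $d-1$ edits. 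This monotone-path/first-crossing observation is what replaces the inductive ``absorb the cost of wrong moves'' argument you sketch; once you see it, the re-alignment issue dissolves and the hitting-time bound finishes the proof.
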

We will also make use of the following properties of the simple random walk on the integer line where at each step the walk goes to left or right with equal probability.  The two lemmas are typically presented in the setting called the {\em gambler's ruin}.

\begin{lemma}
\label{lem:shift}
Suppose that the simple random walk starts at position $0$ and runs for $\ell$ steps, with probability $0.9$ it will not reach a position with absolute value larger than $c \sqrt{\ell}$ for a large enough constant $c$.
\end{lemma}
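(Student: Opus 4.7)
The plan is to bound the maximum absolute value of the walk over all $\ell$ steps via Doob's maximal inequality applied to a suitable (sub)martingale. Write $S_k = \sum_{i=1}^{k} X_i$, where the $X_i$ are i.i.d.\ uniform $\pm 1$ increments, so $S_0 = 0$ and $(S_k)$ is a martingale with respect to its natural filtration.

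First I would observe that $(S_k^2)$ is a nonnegative submartingale (equivalently, $S_k^2 - k$ is a martingale, a fact that follows from $\E[X_i] = 0$ and $\E[X_i^2] = 1$ together with the independence of increments). Then I would apply Doob's $L^2$ maximal inequality to get
\[
\Pr\!\left[\max_{0 \le k \le \ell} |S_k| \ge a\right] \;=\; \Pr\!\left[\max_{0 \le k \le \ell} S_k^2 \ge a^2\right] \;\le\; \frac{\E[S_\ell^2]}{a^2}.
\]
Since the increments are independent with mean $0$ and variance $1$, $\E[S_\ell^2] = \ell$.

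Setting $a = c\sqrt{\ell}$ yields $\Pr[\max_{k\le \ell} |S_k| \ge c\sqrt{\ell}] \le 1/c^2$. Choosing any constant $c \ge \sqrt{10}$ makes this probability at most $0.1$, which gives the claimed $0.9$ success probability.

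I do not expect any real obstacle here; the only subtlety is that the statement bounds the \emph{running maximum} of $|S_k|$ rather than $|S_\ell|$ alone, so a plain Chebyshev bound on $S_\ell$ is insufficient and we genuinely need a maximal inequality (or, alternatively, the reflection principle combined with a Hoeffding bound on $S_\ell$, which would give the same conclusion with a better constant but is not needed for the stated result).
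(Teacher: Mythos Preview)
Your argument is correct: applying Doob's (equivalently, Kolmogorov's) maximal inequality to the submartingale $S_k^2$ gives $\Pr[\max_{k\le \ell}|S_k|\ge c\sqrt{\ell}]\le \ell/(c^2\ell)=1/c^2$, and taking $c\ge \sqrt{10}$ yields the desired bound on the running maximum. Your remark that a plain Chebyshev bound on $S_\ell$ alone would not suffice is also on point.

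As for comparison with the paper: the paper does not actually prove this lemma. It simply states Lemmas~\ref{lem:shift} and~\ref{lem:gambler} as standard facts about simple random walks, noting that they ``are typically presented in the setting called the gambler's ruin.'' So there is no proof in the paper to compare against; your write-up supplies a clean self-contained argument where the paper relies on the reader's background.
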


\begin{lemma}
\label{lem:gambler}
Suppose that the simple random walk starts at position $0$, the probability that it reaches position $b > 0$ before reaching position $-a < 0$ is at least $a/(a+b)$.
\end{lemma}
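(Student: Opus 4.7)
The plan is to establish the classical gambler's ruin identity, which in fact yields the stated bound with equality. Let $(S_n)_{n\ge 0}$ denote the position of the walk, so $S_0 = 0$ and $S_{n+1}-S_n \in \{-1,+1\}$ with equal probability independently of the past; then $(S_n)$ is a martingale with bounded increments. Define the hitting time $\tau = \inf\{n\ge 0 : S_n \in \{-a,b\}\}$ and let $p = \Pr[S_\tau = b]$ be the probability that the walk reaches $b$ before $-a$.

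First I would show $\tau < \infty$ almost surely, so that optional stopping is applicable. From any interior position, a run of $a+b$ consecutive $+1$ steps forces the walk to exit, and such a run has probability $2^{-(a+b)}$; partitioning time into disjoint blocks of length $a+b$ gives $\Pr[\tau > N(a+b)] \le (1-2^{-(a+b)})^N \to 0$. In particular, the stopped process $(S_{n\wedge\tau})$ is uniformly bounded by $\max(a,b)$, so the optional stopping theorem yields $\E[S_\tau] = \E[S_0] = 0$. Since $S_\tau \in \{-a,b\}$, this identity rewrites as $p\cdot b + (1-p)\cdot(-a) = 0$, from which $p = a/(a+b)$, establishing the lemma (which only asks for a lower bound).

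The main (and essentially only) subtlety is verifying the hypotheses of optional stopping, which is trivial here because the stopped walk is uniformly bounded. As an alternative, a purely elementary proof by one-step analysis defines $p_k = \Pr[\text{reach } b \text{ before } -a \mid S_0 = k]$ and exploits the harmonic recursion $p_k = \tfrac12 p_{k-1} + \tfrac12 p_{k+1}$ together with the boundary conditions $p_{-a} = 0$ and $p_b = 1$; the unique affine solution is $p_k = (k+a)/(a+b)$, which at $k=0$ again gives $p_0 = a/(a+b)$.
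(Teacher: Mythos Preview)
Your proof is correct and entirely standard. Note, however, that the paper does not actually prove this lemma: it is stated in the preliminaries as a well-known fact from the gambler's ruin setting and left without proof. Your martingale/optional-stopping argument (and the alternative harmonic-function argument you sketch) are exactly the textbook derivations one would supply if asked to fill in the details.
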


%

\vspace{-4mm}
\paragraph{The IMS Algorithm.}
As mentioned, we will use the IMS algorithm proposed in \cite{IMS05}.  In fact, we can slightly improve the original IMS algorithm in \cite{IMS05} to get the following result.  
\begin{theorem}
\label{thm:IMS}
There exists an algorithm for document exchange having communication cost $O(K(\log K+\log \log n)\log n))$, running time $\tilde{O}(n)$, and success probability $1 - 1/\poly(K \log n)$, where $n$ is the input size and $K$ is the distance upper bound.
\end{theorem}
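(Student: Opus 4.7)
The plan is to take the IMS framework --- a binary Karp--Rabin fingerprint tree of depth $\log n$ on each string, with Alice transmitting Reed--Solomon-protected fingerprint information at each level so Bob can locate and correct the at most $O(K)$ corrupted nodes --- and replace the $O(\log n)$-bit fingerprints used in the original analysis by short fingerprints of only $h = O(\log K + \log\log n)$ bits. Since each level's cost is dominated by an RS codeword of length $O(K)$ over a field of size $2^h$, the per-level cost drops from $O(K\log n)$ to $O(K(\log K + \log\log n))$, and summing over $\log n$ levels gives the claimed $O(K(\log K + \log\log n)\log n)$ bound.

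The key to allowing such short fingerprints is that at every level, the universe of fingerprints Bob ever compares has size only $\poly(K\log n)$, not $\poly(n)$. Concretely, the at most $O(K)$ corrupted blocks at level $\ell$ must be children of the $O(K)$ corrupted blocks identified at level $\ell-1$ (whose positions in $s$ Bob already knows), and for each one Bob only needs to try the at most $O(K)$ candidate shifts in $t$ consistent with a total of $K$ edits. This gives $O(K^2)$ fingerprint comparisons per level and $O(K^2\log n)$ across the whole execution; a random hash of $h = O(\log K + \log\log n)$ bits makes the probability of any collision at most $1/\poly(K\log n)$, matching the advertised failure probability. For the top $L_0 = O(\log K + \log\log n)$ levels the tree has only $\poly(K\log n)$ total nodes, so instead of shift-tracking one can simply send an RS codeword of length $O(K)$ over $\mathrm{GF}(2^h)$ covering all of them at cost $O(Kh)$ per level and $O(Kh^2)$ in total, which is subsumed by the deep-level bound.

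For running time, a bottom-up pass over $\T_s$ computes all Karp--Rabin fingerprints in $\tilde O(n)$ time, each level's RS encoding/decoding costs $\tilde O(K)$, and Bob's per-level shift probes are each a single fingerprint lookup, so the whole protocol runs in $\tilde O(n)$. Correctness and the claimed success probability $1 - 1/\poly(K\log n)$ follow by union-bounding collision and RS-decoding failure events against the $O(K^2\log n)$ pairs considered across all levels.

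The main obstacle is establishing the claim that only $O(K^2)$ fingerprint comparisons per level are needed. One has to carefully track how the $K$ edits propagate down the tree so that Bob's candidate shifts at level $\ell$ are always derivable from the corrupted-node list at level $\ell-1$ without extra communication; in particular one must handle the case where the two children of a corrupted parent are aligned to different shifts in $t$. The resolution is to let Bob inherit the parent's shift as a starting point, enumerate only the $O(K)$ shifts within distance $K$ of it, and use the RS syndrome at level $\ell$ to disambiguate in a single decoding step.
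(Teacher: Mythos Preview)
Your main insight --- shrinking fingerprints to $h=O(\log K+\log\log n)$ bits because Bob only makes $\poly(K\log n)$ comparisons total --- is correct and is exactly the paper's key idea. But your per-level communication accounting has a real gap. You assert the cost is ``an RS codeword of length $O(K)$ over a field of size $2^h$,'' yet at level $\ell$ Alice's fingerprint vector has length $M_\ell = 2^\ell K$, which at deep levels is $\Theta(n/h) \gg 2^h$. A Reed--Solomon code over $\mathrm{GF}(2^h)$ supports codewords of length at most $2^h$, and Alice does not know which $O(K)$ positions Bob will need, so she cannot simply send those entries either. As written, your per-level cost at deep levels is $\Omega(K\log M_\ell)=\Omega(K\log n)$, which nullifies the improvement.

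The paper closes this gap with an index-hashing compression that your proposal omits: Alice hashes each block index via a pairwise-independent $f_1:[M_\ell]\to[(K\log n)^c]$ and XORs the block's short fingerprint into a table $U_\ell$ of size only $(K\log n)^c$; Bob does the same for his matched blocks to form $V_\ell$, and $W_\ell=U_\ell\oplus V_\ell$ then isolates exactly the $\le 2K$ unmatched fingerprints (collision-free among them with probability $1-1/\poly(K\log n)$). The error-correcting scheme of Lemma~\ref{lem:ECC} is now applied to this $\poly(K\log n)$-length table, legitimately costing $O(Kh)$ bits per level. This compression is the missing ingredient in your argument. Incidentally, the shift-tracking you flag as ``the main obstacle'' is handled in the paper by simply trying all $2K+1$ shifts in $[-K,K]$ for each of the $\le 2K$ candidate blocks at every level; the actual subtlety is in the compression step, not there.
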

We delay the description of our improved IMS and the analysis to Appendix~\ref{proof:thm:IMS}.

%

\vspace{-4mm}
\paragraph{Periods and Random Walk.}
When performing the CGK embedding, common periods in the strings may ``slow down'' the random walk, and consequently distribute the mismatches into remote locations (which is undesirable for our algorithm for document exchange; see the techniques overview in the introduction). On the other hand,  if two strings do not share long periodic substrings of small periods, then the random walk induced by the embedding will make steady progress.  We have the following lemma whose proof is delayed to Appendix~\ref{proof:lem:walk-break}.

\begin{lemma}
\label{lem:walk-break}
Suppose that we have $w=s[i..i+m]=t[j..j+m]$ and that the substring $w$ has no substring of length $\ell<m$ with period at most $\theta$. Then, the random walk induced by the CGK embeddings of $s$ and $t$ will have the following property: suppose at a given step the random walk is in the state $(p,q)$ such that $|(p-i)-(q-j)|\in[1..\theta]$, $p\in [i..i+m-\ell]$ and $q\in [j..j+m-\ell]$, then there will be a progress step for some pair $(u_0,u_1)$ with $u_0\in [p,p+\ell-1]$ and $u_1=u_0+(q-p)$. 
\end{lemma}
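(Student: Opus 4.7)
The plan is to combine a short combinatorial fact about periods in $w$ with a structural observation about how the CGK walk behaves at matching characters. I will translate to ``$w$-coordinates'' by setting $a:=p-i$, $b:=q-j$, and $\Delta:=a-b$, so that $|\Delta|\in[1..\theta]$; without loss of generality I assume $\Delta>0$ (the opposite sign is symmetric). In these coordinates, at any state $(p+k,q+k)$ the comparison of $s[p+k]$ with $t[q+k]$ is exactly the comparison of $w[a+k]$ with $w[a+k-\Delta]$, since $s[i..i+m]=t[j..j+m]=w$.

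The first substantive step is the combinatorial claim that there exists some $k^{*}\in[0..\ell-1]$ with $w[a+k^{*}]\neq w[a+k^{*}-\Delta]$. I would prove this by contradiction: if $w[a+k]=w[a+k-\Delta]$ for every $k\in[0..\ell-1]$, then the substring $w[a-\Delta..a+\ell-1]$ (which is well-defined inside $w[0..m-1]$ because $p\in[i..i+m-\ell]$ and $q\in[j..j+m-\ell]$) has length $\ell+\Delta$ and period $\Delta$. Its leftmost window of length $\ell$ is then a substring of $w$ of length $\ell$ with period $\Delta\le\theta$, contradicting the hypothesis on $w$.

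The remaining step exploits how the CGK walk moves when the two characters being read agree: at any state $(p+k,q+k)$ with $k<k^{*}$ we have $s[p+k]=t[q+k]$, so the two counters consult the \emph{same} bit of $r$ to decide whether to increment, meaning either both counters are incremented or neither is. Consequently the walk is ``locked'' to the diagonal $\{(p+k,q+k):k\ge 0\}$, and starting from $(p,q)$ it can only stutter or slide uniformly along this diagonal until it reaches $(p+k^{*},q+k^{*})$. At that state $s[p+k^{*}]\neq t[q+k^{*}]$, so any subsequent time step at which at least one counter increments is by definition a progress step; taking $u_0=p+k^{*}\in[p..p+\ell-1]$ and $u_1=u_0+(q-p)=q+k^{*}$ gives the pair demanded by the lemma. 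The main obstacle in writing this cleanly is keeping the $a,b,\Delta$ bookkeeping tidy and verifying that the length-$\ell$ window of the putative periodic block really fits inside $w$; the probabilistic assertion that the walk does eventually advance from $(p+k^{*},q+k^{*})$ (each step has probability at least $3/4$ of doing so once the characters disagree) is routine and tangential to the combinatorial content.
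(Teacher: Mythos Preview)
Your approach is correct and essentially identical to the paper's: both arguments boil down to observing that $s[p..p+\ell-1]$ and $t[q..q+\ell-1]$ cannot coincide (else $w$ would contain a length-$\ell$ substring of period $\Delta\le\theta$), so the CGK walk, which stays on its diagonal while the two characters agree, must encounter a mismatch---and hence a progress step---within the first $\ell$ positions. One small fix: take $k^{*}$ to be the \emph{smallest} index with $w[a+k^{*}]\neq w[a+k^{*}-\Delta]$, so that your assertion ``for $k<k^{*}$ we have $s[p+k]=t[q+k]$'' is actually justified.
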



\vspace{-4mm}
\paragraph{Error-Correcting.}
We consider the following problem: Alice holds a vector $a$ and Bob holds a vector $b$, where $a$ and $b$ are of length $u$ over an alphabet of size $\sigma$. Let $k$ be a given threshold.   Bob knows that the differences between $a$ and $b$ fall into a set $S$ of $\lambda \ge k$ coordinates, but Alice does not know the set $S$. The task is for Alice to send a message $msg$ to Bob so that Bob can recover $a$ exactly based on $msg$ and his input string $b$ if $ham(a,b) \le k$, and output ``error'' otherwise.  We have the following lemma whose proof is delayed to Appendix~\ref{proof:lem:ECC}.

\begin{lemma}
\label{lem:ECC}

There exists an algorithm for the above problem having communication cost $O(\log\log u + k(\log\sigma + \log \lambda + \log (1/p))$, running time $\tilde{O}(u +  k\lambda)$, and success probability $1 - p$.
\end{lemma}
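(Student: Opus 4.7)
}
The plan is to combine a universe-reduction hash with an invertible Bloom lookup table (IBLT). First, Alice and Bob share a $2$-universal hash function $h:[u]\to[L]$ with $L=\Theta(\lambda^2/p)$; choosing the seed from a suitable prefix-free family lets us specify $h$ in $O(\log\log u + \log(\lambda/p))$ bits, and a union bound over the $\binom{\lambda}{2}$ pairs in $S$ shows that $h$ is injective on $S$ with probability at least $1-p/2$. Note that $\log L=O(\log\lambda+\log(1/p))$.

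Next, Alice builds an IBLT with $M=\Theta(k\log(1/p))$ cells and $d$ independent assignments $g_1,\ldots,g_d:[L]\to[M]$ for a small constant $d$, XOR-inserting the pair $(h(i),a[i])$ into each of the $d$ cells a position hashes to; every cell stores a $\log L$-bit \texttt{keySum} and a $\log\sigma$-bit \texttt{valueSum}. Bob builds the analogous IBLT on pairs $(h(i),b[i])$. Because XOR-insertion is linear, subtracting Bob's IBLT from Alice's cancels every position outside $S$ (where $a[i]=b[i]$) and every position in $S\setminus T$ where $T=\{i\in S:a[i]\ne b[i]\}$; what remains is the IBLT sketch of the $|T|\le k$ differing items. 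Standard IBLT peeling then recovers the $(h(i),a[i]\oplus b[i])$ entries; the $\log(1/p)$ blow-up in $M$ boosts peeling success to $1-p/2$. Using the precomputed table $\{h(i)\mapsto i\}_{i\in S}$ (which takes $O(\lambda)$ time and is well-defined whenever $h$ is injective on $S$), Bob inverts each hashed position to its original index in $[u]$, reads off $a[i]=b[i]\oplus(a[i]\oplus b[i])$, and reconstructs $a$.

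To handle the case $ham(a,b)>k$, I would have Alice also append an $O(\log(1/p))$-bit Rabin--Karp fingerprint of $a$; Bob outputs ``error'' whenever any cell fails to peel, any recovered $h(i)$ lies outside $h(S)$, or the fingerprint of the reconstructed candidate disagrees. The communication is $O(\log\log u)$ for the hash seed plus $O(M(\log L+\log\sigma))=O(k(\log\sigma+\log\lambda+\log(1/p)))$ for the IBLT cells, matching the claim. Alice's encoding is $O(du)=O(u)$ to scan her input; Bob's decoding is $O(u)$ to build his IBLT, $O(k\log(1/p))$ for peeling, $O(\lambda)$ to build the $h$-inverse on $S$, and an $O(k\lambda)$ verification sweep over the $\lambda$ positions of $S$ with the $\le k$ recovered edits to confirm consistency, giving the advertised $\tilde O(u+k\lambda)$.

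The delicate part is the joint analysis: we must ensure that the peeling-failure event and the $h$-non-injectivity event are each bounded by $p/2$ with the chosen constants, and that in every failure mode Bob can actually \emph{detect} the failure and output ``error'' rather than silently commit to a wrong $a$ --- hence the fingerprint safety net. Driving the peeling failure down to $p/2$ is precisely what forces the $\log(1/p)$ factor inside each of the $k$ cell slots, accounting for the $k\log(1/p)$ contribution in the communication bound; the remaining constant factors in $d$ and $M$ follow from standard IBLT analysis.
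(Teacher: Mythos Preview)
There is a genuine gap in your communication analysis. You take $M=\Theta(k\log(1/p))$ cells, each storing $\log L+\log\sigma$ bits with $\log L=\Theta(\log\lambda+\log(1/p))$. Multiplying gives
\[
M(\log L+\log\sigma)\;=\;\Theta\bigl(k\log(1/p)\cdot(\log\lambda+\log(1/p)+\log\sigma)\bigr),
\]
which is a multiplicative $\log(1/p)$ too large; the lemma's bound has $\log(1/p)$ \emph{additively} inside the per-item cost, not multiplying the whole thing. The line ``$O(M(\log L+\log\sigma))=O(k(\log\sigma+\log\lambda+\log(1/p)))$'' is simply wrong as written. (Separately, it is not clear that $M=\Theta(k\log(1/p))$ cells with a constant number of hash functions actually pushes the peeling failure to $p/2$; the standard analysis gives $1/\poly(k)$ failure with $O(k)$ cells, and strengthening to arbitrary $p$ usually costs you either more hash functions or more cells---but even granting you that, the arithmetic above still overshoots.)

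The root issue is that IBLT peeling is itself a randomized step, so driving its failure probability to $p/2$ costs you an extra factor in the number of cells, and that factor then multiplies the per-cell size. The paper avoids this entirely. After the same style of universe-reduction hash $f:[u]\to[v]$ (with $v=\Theta(k\lambda/p)$, seed cost $O(\log\log u+\log v)$), Alice XOR-accumulates her symbols into a length-$v$ vector and sends only the redundancy of a systematic Reed--Solomon code correcting $k$ errors. Reed--Solomon decoding is \emph{deterministic} once parameters are fixed, so the redundancy is exactly $O(k)$ field symbols over a field of size $\Theta(v\sigma)$, i.e.\ $O(k(\log\sigma+\log\lambda+\log(1/p)))$ bits, with no blow-up. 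All of the failure probability is confined to the hash: collisions among the $\le k$ true differences, and false positives among the $\lambda$ candidate positions Bob scans. Enlarging $v$ by a $1/p$ factor to control both events costs only an additive $\log(1/p)$ per symbol. If you want to salvage the IBLT route, you would need an IBLT variant whose failure probability reaches $p$ without paying a multiplicative $\log(1/p)$ in sketch size; that is not standard.
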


\vspace{-4mm}
\paragraph{Sketching Hamming Distance.}
We will use the result for sketching Hamming distance as a building block in our sketching algorithm.  In particular we will use the one in~\cite{porat2007improved}, and state it for a general alphabet. 

\begin{lemma}[\cite{porat2007improved}]
\label{lem:ham}
There exists a sketching algorithm which, given a vector of length $n$ over an alphabet of size $\sigma=O(\poly(n))$ and a threshold $k$, outputs a vector of length $O(k \log n)$, such that given the sketches of two vectors $a$ and $b$, one can recover with probability $(1 - 1/\poly(n))$ the coordinates (indices and values) where they differ in time $O(k \log n)$ if $ham(a,b) \le k$, and outputs ``error" otherwise. The sketching process can be done in the one-pass streaming fashion and runs in $O(\log n)$ time per element.
\end{lemma}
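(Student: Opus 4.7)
The plan is to reduce the task to a streaming $k$-sparse recovery problem over a large prime field, for which a Prony/syndrome-style linear sketch can be made both short and updatable in $O(\log n)$ time per element. Fix a prime $q = \Theta(n^c)$ for a sufficiently large constant $c$, chosen so that the alphabet $\Sigma$ embeds into $\mathbb{F}_q$ (we have $\sigma = O(\poly(n))$) and so that all failure probabilities stay $1/\poly(n)$. Identify $a, b \in \Sigma^n$ with vectors in $\mathbb{F}_q^n$ and set $c = a - b$; then $\|c\|_0 = ham(a,b)$, so it suffices to design a linear sketch $M \colon \mathbb{F}_q^n \to \mathbb{F}_q^{O(k\log n)}$ together with a decoder that recovers any $k$-sparse $c$ exactly from $Mc$ and otherwise reports ``error''. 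Linearity gives $Ma - Mb = Mc$, which is the input to the decoder.

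For the sketch, I would use the classical Prony construction. Pick a primitive element $\alpha$ of order at least $n$ in $\mathbb{F}_q$ and maintain the power sums $S_j(v) = \sum_{i=1}^{n} v_i \alpha^{ij}$ for $j = 0, 1, \ldots, 2k$. From $S_j(c)$ one runs Berlekamp--Massey to extract the error-locator polynomial $\Lambda(x) = \prod_{i:\,c_i \neq 0}(1 - \alpha^i x)$, finds its roots by Chien search (or FFT root-finding), and solves a $k \times k$ transposed Vandermonde system for the nonzero values, all in $\tilde{O}(k)$ field operations and hence $O(k \log n)$ bit operations. A handful of extra check syndromes $S_{2k+1}, S_{2k+2}, \ldots$ serve as a Reed--Solomon consistency test: if $\|c\|_0 > k$, any attempted $k$-sparse fit will disagree with an independently chosen check position except with probability $1/\poly(n)$, so the decoder can safely emit ``error''. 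This already yields a sketch of the stated size, but not the update time.

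The main obstacle is attaining $O(\log n)$ time per element, since naively refreshing all $2k{+}1$ power sums costs $\Theta(k)$ per arrival. I plan to follow the two-level structure that underlies \cite{porat2007improved}: apply $L = \Theta(\log n)$ pairwise-independent hash functions $h_1, \ldots, h_L \colon [n] \to [\Theta(k)]$ and within each bucket of each table keep only a constant-size Prony subsketch capable of recovering an $O(1)$-sparse fiber. A standard isolation argument shows that under each $h_\ell$ a constant fraction of the $k$ mismatches sit alone in their buckets, and iterative peeling across the $L$ hash tables then recovers all of $c$ with probability $1 - 1/\poly(n)$. Each arriving character touches exactly one bucket per hash table, giving $O(L) = O(\log n)$ constant-time updates, while the total sketch length remains $O(k \log n)$. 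A final cross-check against the outer Prony syndromes detects any leftover inconsistency, producing the error flag whenever $ham(a,b) > k$, as required.
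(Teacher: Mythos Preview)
The paper does not prove this lemma at all: it is stated with the citation \cite{porat2007improved} and used as a black box throughout Sections~\ref{sec:doc} and~\ref{sec:sketch}. There is therefore no ``paper's own proof'' to compare your proposal against.

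That said, your reconstruction is essentially the standard way to obtain such a result, and it is close in spirit to what the cited work does: reduce Hamming-difference recovery to $k$-sparse recovery over a prime field via linearity, use a Reed--Solomon/Prony syndrome sketch for exact recovery with a consistency check for the ``error'' case, and achieve $O(\log n)$ per-element update time by hashing into $\Theta(k)$ buckets across $\Theta(\log n)$ independent tables, keeping only an $O(1)$-sparse recoverer in each bucket and peeling. One small caution: in your first paragraph you describe solving the full degree-$k$ locator polynomial via Chien search or FFT root-finding in ``$O(k\log n)$ bit operations,'' which is optimistic---Chien search is $\Theta(n)$ evaluations, and fast root-finding over $\mathbb{F}_q$ still carries nontrivial $\poly\log q$ overheads. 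Fortunately your second paragraph sidesteps this entirely: in the bucketed scheme each bucket only needs $O(1)$-sparse recovery, so no large root-finding is ever required, and the $O(k\log n)$ decoding time follows from peeling across $O(k\log n)$ buckets. You should present the bucketed construction as the actual sketch rather than as a speedup layered on top of the monolithic Prony sketch.
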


%

\vspace{-4mm}
\paragraph{Reducing Random Bits.}
In our algorithms Alice and Bob may use many shared random bits for the CGK embeddings. We can use the standard probabilistic method to reduce the total number of random bits to $O(\log n)$ in the non-uniform case. To reduce the number of random bits in the uniform case, we can also make use of Nisan's pseudo-random generator~\cite{nisan1992pseudorandom}, which states that any bounded-space randomized algorithm for a decision problem can be solved using $O(s\log t)$ random bits where $s$ is the space used by the algorithm and $t$ is the total running time of the algorithm. The random bit generation in Nisan's pseudo-random generator can be done in the streaming fashion and in linear time. We refer readers to \cite{CGK16} for more details on reducing random bits.  

\section{Document Exchange}
\label{sec:doc}

In this section we prove the following theorem.
\begin{theorem}
\label{thm:doc}

There exists an algorithm for document exchange having communication cost $O(K (\log^2 K + \log n))$, running time $\tilde{O}(n)$, and success probability $1 - 1/\poly(K \log n)$, where $n$ is the input size and $K$ is the distance upper bound.
\end{theorem}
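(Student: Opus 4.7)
The plan is to reduce the original document-exchange instance on strings of length $n$ to at most $K$ surviving subinstances on substrings of length $\poly(K)$ at total reduction cost $O(K\log n)$ bits, and then invoke the improved IMS protocol of Theorem~\ref{thm:IMS} on each surviving pair. Since Theorem~\ref{thm:IMS} on inputs of length $L = \poly(K)$ costs $O\!\left(K(\log K + \log\log L)\log L\right) = O(K\log^2 K)$ bits, the two pieces combine into the claimed $O(K(\log^2 K + \log n))$ bound.

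The workhorse of one reduction round is the CGK embedding coupled with error-correcting. Alice and Bob, using shared randomness, embed $s, t$ into $s', t' \in \{0,1\}^{O(n)}$; by Lemma~\ref{lem:CGK} applied with $\ell = \poly(K)$, $ham(s',t') = O(K^2)$ with high probability. Both parties partition $s'$ and $t'$ into blocks of length $\sqrt{n}$ and, by tracking the walk, identify for each block the substring of $s$ (respectively $t$) consumed while that block was produced. Alice hashes her block-substrings down to $O(\log n)$-bit symbols and then runs the error-correcting protocol of Lemma~\ref{lem:ECC} so Bob learns which at most $K$ block-pairs mismatch; the protocol recurses on those pairs. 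The per-round cost is $O(K(\log\sigma + \log\lambda)) = O(K\log n)$ at the top, where $\sigma = n^{O(1)}$ from the hash range and $\lambda = O(\sqrt{n})$ from the block count.

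The main obstacle, and the reason a naive iteration of the above scheme fails, is that if $s$ and $t$ share a long common substring with a short period, the CGK random walk can stall inside that region and scatter the $O(K^2)$ Hamming mismatches over $\Omega(K^2)$ well-separated blocks, which would blow the recursion out of budget. To defuse this, Alice and Bob first apply the same canonical preprocessing to their inputs: detect every maximal run with period at most $\theta = \poly(K)$ and length at least $\ell = \poly(K)$, and replace it by a short descriptor (position, period, length). The two parties' choices of removed runs agree except in the $O(K)$ neighborhoods of the true edits, so the compacted strings $\tilde s, \tilde t$ still satisfy $ed(\tilde s, \tilde t) = O(K)$, and by construction contain no long common substring of period $\leq \theta$. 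Lemma~\ref{lem:walk-break} now guarantees that inside any surviving common substring the walk makes a progress step within every window of length $\ell$, so each of the $O(K)$ edits in $\tilde s, \tilde t$ pollutes only a contiguous window of length $\poly(K)$ in $s', t'$, falling inside a single block once $\sqrt{n}$ dominates $\poly(K)$.

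A round therefore maps a pair of length $L$ carrying $K'$ edits to at most $K'$ pairs of length $\sqrt{L}$, at cost $O(K'\log L)$ bits (the period descriptors are absorbed into the same order). Since $\sum_i K_i \leq K$ at every level and the logarithm of the length halves each round, iterating $O(\log\log n)$ times brings every surviving pair down to length $\poly(K)$, and the costs telescope to
\begin{equation*}
\sum_{\ell=0}^{O(\log\log n)} O\!\left(K \cdot \frac{\log n}{2^\ell}\right) \;=\; O(K\log n).
\end{equation*}
Adding $O(K\log^2 K)$ from the final IMS invocations on the $\poly(K)$-size tails yields the claimed bound. The running time is $\tilde O(n)$ because each round is linear in the current surviving total length (dominated by the first round), and the failure probability is $1/\poly(K\log n)$ by a union bound of Lemma~\ref{lem:CGK}, Lemma~\ref{lem:ECC}, and Theorem~\ref{thm:IMS} over the $O(\log\log n)$ recursion levels.
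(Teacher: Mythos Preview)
Your high-level plan mirrors the paper's: iterate CGK embedding plus block partition plus error correction to shrink the instance geometrically over $O(\log\log n)$ rounds, neutralize short-period runs so the walk cannot stall, and finish with IMS on the short residue. The cost telescoping and the $O(K\log^2 K)$ contribution from the final IMS call are right.

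The real gap is the period-removal step. You let Alice and Bob \emph{independently} replace each maximal short-period run by a descriptor $(\text{position},\text{period},\text{length})$ and assert that $ed(\tilde s,\tilde t)=O(K)$ because ``the two parties' choices of removed runs agree except in the $O(K)$ neighborhoods of the true edits.'' That assertion fails. A single insertion or deletion early in the string shifts the absolute position recorded in \emph{every} subsequent descriptor, so the two descriptor sequences can disagree in as many places as there are runs (up to $n/\poly(K)$), not $O(K)$. Even restricting to substitutions, one edit inside a run changes its recorded length by up to $\theta=\poly(K)$ or splits it in two, and in your enlarged alphabet each such change is a fresh symbol mismatch. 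Once $ed(\tilde s,\tilde t)$ is no longer $O(K)$, the CGK/error-correction round produces far more than $K$ mismatched blocks and the recursion blows up. The paper avoids this by \emph{synchronizing} the removal: in Phase~I of every level Alice sends (via the scheme of Lemma~\ref{lem:ECC}) a redundancy of the vector $U_\ell$ describing which of her blocks are periodic and with what period; Bob decodes $U_\ell$ against his own guess $U'_\ell$ (Lemma~\ref{lem:remove-period}), and both parties then delete \emph{identical positions}. Lemma~\ref{lem:period_elim} certifies that this coordinated deletion never increases the edit distance, so the $\le K$ invariant is preserved throughout the recursion. Your uncoordinated compaction has no such guarantee.

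A secondary point: the block size cannot be bare $\sqrt{L}$. The random-shift argument that confines each of the $K$ progress-phase footprints to a single block (Lemma~\ref{lem:remove-common}) needs a multiplicative $\poly(K\log n)$ slack in $b'_\ell$, so the recursion bottoms out at roughly $(K\,2^{\sqrt{\log n}})^{O(1)}$ rather than $\poly(K)$. This does not affect the final $O(K(\log^2 K+\log n))$ bound, but the IMS stage then costs $O\bigl((\log K+\sqrt{\log n})\cdot K(\log K+\log\log n)\bigr)$, not a clean $O(K\log^2 K)$.
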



\subsection{The Algorithm}
\label{sec:alg}

Let $k = ed(s,t)$ be the edit distance between $s$ and $t$ that we want to compute.  Our algorithm for document exchange consists of two stages.  
After the first stage we effectively reduce the problem to a much smaller size, more precisely, to that on two strings each consisting of at most $k\leq K$ substrings each of size at most $(K 2^{\sqrt{\log n}})^{O(1)}$, while preserving the edit distance.
We will then in the second stage run the IMS algorithm on the reduced problem to compute the distance and all the edits. We thus only describe the first stage of the algorithm.

The first stage consists of $L = O(\log \log n)$ levels, each of which consists of two phases.  These levels and phases are executed in synchrony between Alice and Bob, but the whole communication is still one-way.   We now describe the two phases at each level $\ell \in [L]$. We will use the following parameters, which are known to both parties before running the algorithm.  Let $c_1, c_2$ be two sufficiently large constants.
\begin{itemize}
\vspace{-1mm}
\item $b_\ell = \sqrt{n_\ell} 2^{c_1 (\log K + \sqrt{\log n})}$: block size in the first phase.

\vspace{-1mm}
\item $b'_\ell = \sqrt{n_\ell} 2^{(c_1+c_2) (\log K + \sqrt{\log n})}$: block size in the second phase.

\vspace{-1mm}
\item $\theta_\ell = b_\ell/2$: upper bound of period length.
\end{itemize}
Let $n_\ell$ be an upper bound of the effective size of the problem at the beginning of the $\ell$-th level. $n_1 = n = \abs{s} = \abs{t}$, and $n_\ell = K b'_{\ell -1}\ (\ell \ge 2)$.

\vspace{-2mm}

\paragraph{Phase I:}  
Alice partitions her string $x$ to blocks of size $b_\ell$, except that the first block is of a uniformly random size $\Delta_\ell \in [b_\ell/2, b_\ell]$, and the last block is of size in the range $[1..b_\ell]$.
Denote these blocks by $B_1, \ldots, B_{m}$. Alice sends $\Delta_\ell$ to Bob.

Next, Alice creates a vector $U_\ell = (e_1, \ldots, e_{m})$ where $e_j = (\chi_j, w_j)$ contains the following information of the $j$-th block $B_j$:  If $B_j$ is part of a periodic substring with period length at most $\theta_\ell$, then we set $\chi_j = 1$ and $w_j$ to be the period length;\footnote{We say that substring $s[i..j]$ is part of periodic substring of $s$ with period $w_j>0$ iff $s[i..j]=s[i-w_j..j-w_j]$ and $w_j$ is the smallest number with this property.} otherwise we set $\chi_j = 0$ and $w_j = 0$.   Alice then sends to Bob a redundancy of $U_\ell$ (denoted as $sk(U_\ell)$) that allows to recover up to $2K$ errors using the scheme in Lemma~\ref{lem:ECC} (setting $p = 1/\poly(K \log n)$).

Bob maintains a vector $\tilde{x}$ based on his decoding of Alice's string $x$ in the previous $(\ell-1)$ levels such that with high probability $\tilde{x}$ only differs from $x$ at no more than $n_\ell$ coordinates (those that Bob hasn't recovered), for which Bob marks `$\perp$'. ($\tilde{x}$ is initialized to be an all-`$\perp$' vector of length $n$ at the beginning of the algorithm.)  After receiving the offset $\Delta_\ell$ he does the same partitioning on both $\tilde{x}$ and $y$, getting $(P_1, \ldots, P_m)$ and $(Q_1, \ldots, Q_m)$ respectively.  He then examines each block $P_j$. If $P_j$ or the $\theta_\ell$ positions in $\tilde{x}$ preceding $P_j$ contain a `$\perp$', then he builds $e'_j = (\chi'_j, w'_j)$ the same way as $e_j = (\chi_j, w_j)$ by looking at $Q_j$'s context in $y$; otherwise he builds $e'_j = (\chi'_j, w'_j)$ the same way as $e_j = (\chi_j, w_j)$ by looking at $P_j$'s context in $\tilde{x}$.  Let $U'_\ell = (e'_1, \ldots, e'_m)$.  We will show in the analysis that Bob can recover Alice's vector $U_\ell$ using $sk(U_\ell)$ and his vector $U'_\ell$ with high probability. 

\begin{figure}[t]
\centering
\includegraphics[height = 0.6in]{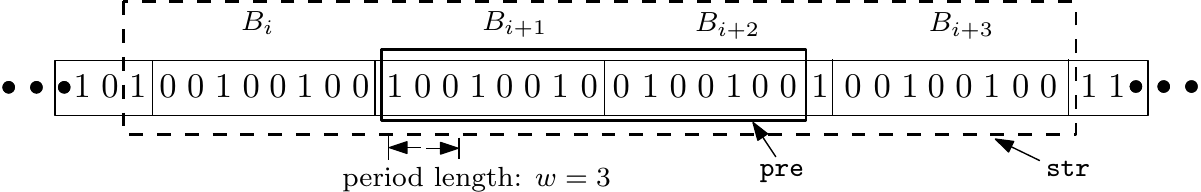}
\caption{Periods are removed in Phase I.}
\label{fig:remove-period}
\end{figure}

Finally, for the consecutive blocks contained in each periodic substring $\str$ of $x$, denoted by $B_1 \circ  B_2 \cdots \circ B_{z-1} \circ B_z$, Alice removes the longest prefix $\pre$ of $B_2 \circ \cdots \circ B_{z-1}$ (i.e., excluding the first block $B_1$ and the last block $B_z$) such that $\abs{\pre}$ is a multiple of $w$ where $w$ is the length of the period of $\str$.\footnote{Formally two consecutive blocks $B_{j}$ and $B_{j+1}$ are contained in the same periodic substring iff $w_j=w_{j+1}$.}   See Figure~\ref{fig:remove-period} for an illustration.  Bob, after decoding Alice's vector $U_\ell$, does the same thing, that is, he removes in his input string $y$ and the maintained string $\tilde{x}$  those characters of the same indices that Alice removes.

\vspace{-3mm}
\paragraph{Phase II:}  
Alice maps her string $x$ into the Hamming space using the CGK embedding (see Section~\ref{sec:preliminary}), getting a string $x'$, and then partitions $x'$ to blocks of size $b'_\ell$, except that the first block is of a uniformly random size $\Delta'_\ell \in [b'_\ell/2, b'_\ell]$, and the last block is of size in the range $[1..b'_\ell]$. 
She then maps these blocks back to the edit space, getting a partition of the original string $x$. Denote these blocks by $A_1, \ldots, A_{d}$.  Alice sends $\Delta'_\ell$ to Bob, and Bob does the same partitioning to his string $y$, getting $A'_1, \ldots, A'_d$.

Next, Alice creates a vector $V_\ell = (g_1, \ldots, g_{d})$ where $g_j = (h(A_j), r_j, \calE_j)$ contains the following information on the $j$-th block $A_j$: $h : \{0,1\}^* \to \left[(K n_i)^{\Theta(1)}\right]$ is a Karp-Rabin hash signature of $A_j$; $r_j$ is the length of $A_j$; and $\calE_j = 1$ if the first character of $A_j$ is shared with the last character of $A_{j-1}$ (this could happen due to the copy operations in the CGK embedding; $\calE_j$ is needed for Bob to identify the boundaries of the unmatched substrings in $x$ accurately), and $\calE_j = 0$ otherwise.  Alice then sends to Bob a redundancy of $V_\ell$ (denoted as $sk(V_\ell)$) that allows to recover up to $K$ errors using the scheme in Lemma~\ref{lem:ECC} (setting $p = 1/\poly(K \log n)$).  Bob does the same for $(A'_1, \ldots, A'_d)$, getting a vector $V'_\ell = (g'_1, \ldots, g'_d)$. We will show in the analysis that Bob can recover Alice's vector $V_\ell$ using $sk(V_\ell)$ and his vector $V'_\ell$ with high probability.  Bob then updates the string $\tilde{x}$ by replacing the `$\perp$'s with actual contents for those matched blocks, and
at the next level he will do the decoding recursively on those unmatched blocks (i.e., those still marked with `$\perp$') whose sizes sum up to no more than $n_{\ell+1}= K b'_\ell$. 


The first stage concludes when the length of blocks in the second phase becomes $b'_\ell \le (K 2^{\sqrt{\log n}})^{10(c_1+c_2)}$, from where Alice and Bob apply IMS directly to compute the edit distance and all the edits.  The message Alice sends to Bob in the first stage includes the offsets $\{\Delta_\ell, \Delta'_\ell\ |\ \ell \in [L] \}$ and the redundancies $\{sk(U_\ell), sk(V_\ell)\ |\ \ell \in [L]\}$.  Note that Alice can compute these independently using parameters $b_\ell, b'_\ell$ and $\theta_\ell$ at each level $\ell \in [L]$.  In the second stage, Alice sends Bob the IMS sketch on her string $x$, but omits all the top-levels in the IMS sketch  at which the block sizes are larger than $b'_L$, since Bob does {\em not} need to do the recovery at those levels given the first stage. 
At the end, Bob can recover Alice's input string $s$ by adding back the removed periods at all levels. 

\subsection{The Analysis}
\label{sec:analysis}

\paragraph{Correctness.}  
We focus on the case $k = ed(s,t) \le K$; otherwise if $k > K$ then Bob can detect this during the decoding (in particular, various recoveries using Lemma~\ref{lem:ECC} will report ``error'') and output ``error'' with probability $1 - 1/\poly(K \log n)$.  

We establish few lemmas.  The first lemma shows that the period-removal step in Phase I at each level preserves the edit distance.  The proof is technical and is delayed to Appendix~\ref{proof:lem:period_elim}.
\begin{lemma}
\label{lem:period_elim}
Given two strings $s=ppp$ and $t$ of the same length, letting $\pi=|p|\leq ed(s,t)$ be the length of the period of $s$, the edit distance between $s'=pp$ and $t'=t[1..\pi] \circ t[2\pi+1..3\pi]$ is at most $ed(s,t)$.
\end{lemma}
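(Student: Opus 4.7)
\medskip

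\noindent\textbf{Proof plan.} The plan is to construct an explicit alignment between $s'$ and $t'$ whose cost is at most $k := ed(s,t)$, by starting from an optimal alignment of $s$ with $t$ and exploiting the periodicity $s = ppp$ to ``re-route'' part of it. View the optimal alignment as a monotone path $\mathcal{P}$ in the $(3\pi+1) \times (3\pi+1)$ alignment grid from $(0,0)$ to $(3\pi, 3\pi)$. Because $\mathcal{P}$ sweeps the $t$-coordinate monotonically from $0$ to $3\pi$, it passes through at least one cell $(i_1,\pi)$ and at least one cell $(i_2,2\pi)$; fix such a pair. These cells split $\mathcal{P}$ into three consecutive portions aligning respectively $s[1..i_1]$ with $t[1..\pi]$, $s[i_1+1..i_2]$ with $t[\pi+1..2\pi]$, and $s[i_2+1..3\pi]$ with $t[2\pi+1..3\pi]$; let their costs be $c_1, c_2, c_3$ so that $c_1 + c_2 + c_3 = k$. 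Set $\delta := (i_2 - i_1) - \pi$, and note that the standard length-difference lower bound for edit distance gives $c_2 \ge |\delta|$.

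\medskip

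\noindent The second step is to use periodicity to decompose $s'$ in a way compatible with the first and third portions of $\mathcal{P}$. Since $s = ppp$, we have $s[i_1+1..2\pi] = s[i_1+\pi+1..3\pi]$, so
\[
s' \;=\; s[1..2\pi] \;=\; s[1..i_1] \,\circ\, s[i_1+\pi+1..3\pi].
\]
Pairing this with the natural split $t' = t[1..\pi]\circ t[2\pi+1..3\pi]$ and using subadditivity of edit distance under concatenation yields
\[
ed(s',t') \;\le\; ed\bigl(s[1..i_1],\, t[1..\pi]\bigr) \;+\; ed\bigl(s[i_1+\pi+1..3\pi],\, t[2\pi+1..3\pi]\bigr).
\]
The first term is at most $c_1$, witnessed by the first portion of $\mathcal{P}$. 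For the second term, I would apply the triangle inequality with $s[i_2+1..3\pi]$ as the intermediate string:
\[
ed\bigl(s[i_1+\pi+1..3\pi],\, t[2\pi+1..3\pi]\bigr) \;\le\; ed\bigl(s[i_1+\pi+1..3\pi],\, s[i_2+1..3\pi]\bigr) + c_3,
\]
and observe that the two substrings $s[i_1+\pi+1..3\pi]$ and $s[i_2+1..3\pi]$ share the same right endpoint $3\pi$ and differ in length by exactly $|\delta|$, so one is a suffix of the other and their edit distance is at most $|\delta|$.

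\medskip

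\noindent Putting the pieces together gives $ed(s',t') \le c_1 + c_3 + |\delta| \le c_1 + c_2 + c_3 = k$, which is the desired bound. The one step that needs any care is the periodicity-based rewrite $s[i_1+1..2\pi] = s[i_1+\pi+1..3\pi]$, which holds for every $i_1 \in [0..2\pi]$ and is what lets us ``skip'' the middle $p$ in $s$ while re-using the third portion of $\mathcal{P}$; the rest is a straightforward chain of triangle and subadditivity inequalities. I do not expect the hypothesis $\pi \le ed(s,t)$ to enter this argument directly; it presumably matters for the surrounding application (e.g., to justify that removing a period is ``affordable''), but the inequality $ed(s',t') \le ed(s,t)$ itself is established purely from the path decomposition and the length bound $c_2 \ge |\delta|$.
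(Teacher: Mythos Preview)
Your argument is correct and considerably more direct than the paper's. The only point worth flagging is that the periodicity rewrite $s' = s[1..i_1]\circ s[i_1+\pi+1..3\pi]$ requires $i_1 \le 2\pi$, which you note but do not justify. This is harmless: if every cell $(i_1,\pi)$ on the optimal path has $i_1 > 2\pi$, then the segment from $(0,0)$ to $(i_1,\pi)$ already costs more than $\pi$, and the segment from $(i_1,\pi)$ to $(3\pi,3\pi)$ (which advances at most $\pi$ in $s$ but $2\pi$ in $t$) costs more than $\pi$ as well, so $k > 2\pi$; but then $ed(s',t') \le |s'| = 2\pi < k$ trivially. One sentence disposes of that boundary case.

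The paper takes a genuinely different route. It works through the longest common subsequence, proving $\mathrm{LCS}(s',t') \ge \mathrm{LCS}(s,t) - \pi$ via a five-case analysis on the bipartite matching graph of an optimal LCS: for each possible pattern of how matching edges can be incident to the deleted middle blocks $p_2$ and $q_2$, it counts how many edges are lost and how many can be \emph{restored} by re-routing them using the periodicity of $s$. It then treats substitutions separately by introducing a second edge type (``mismatching edges'') and redoing the bookkeeping. Your path-splitting argument avoids all of this case analysis and handles substitutions for free, since you work with edit-distance costs directly rather than with LCS. The price is the small side case above; the benefit is a one-paragraph proof in place of a page-long one. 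You are also right that the hypothesis $\pi \le ed(s,t)$ is never invoked; neither proof uses it.
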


The following two lemmas show that the redundancies sent by Alice in the two phases are sufficient for Bob to recover Alice's vectors $U_\ell$ and $V_\ell$. 
\begin{lemma}
\label{lem:remove-period}
At each level $\ell$ in Phase I, let $x$ and $y$ be the strings held by Alice and Bob respectively. Suppose that $ed(x,y) \le K$, then with probability $1 - 1/\poly(K \log n)$,  Bob can recover Alice's vector $U_\ell$ using his vector $U'_\ell$ and Alice's message $sk(U_\ell)$.   Moreover, after the period-removal step the edit distance between the two resulting strings $x'$ and $y'$ does not increase, that is, $ed(x',y') \le ed(x,y)$.
\end{lemma}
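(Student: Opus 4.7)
The statement splits into two independent claims, and I would handle them separately.

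For the recovery of $U_\ell$, the plan is to show $ham(U_\ell, U'_\ell) \le 2K$ and then invoke Lemma~\ref{lem:ECC} with threshold $2K$ and failure probability $1/\poly(K\log n)$. Fix a coordinate $j$ and consider when $e_j \ne e'_j$. If $P_j$ together with the $\theta_\ell$ positions of $\tilde{x}$ immediately preceding it contain no ``$\perp$'', then $\tilde{x}$ agrees with $x$ on this window by the invariant maintained at earlier levels, so Bob deterministically recomputes the same $(\chi_j, w_j)$ as Alice. Hence any discrepant $j$ must be one where Bob falls back on the $Q_j$-context of $y$, and that context can disagree with the $B_j$-context of $x$ only when at least one of the $k \le K$ edits lies inside the length-$(b_\ell+\theta_\ell)$ window that determines $e'_j$. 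Because $|x|=|y|$ is preserved throughout the protocol, same-index windows are shifted by at most $K$ characters; consecutive-block windows overlap only in a $\theta_\ell=b_\ell/2$ segment, so each single edit can lie in at most two such windows. Charging discrepant coordinates to the edits they touch then yields $ham(U_\ell, U'_\ell) \le 2K$, and Lemma~\ref{lem:ECC} delivers the decoding.

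For the edit-distance non-increase, the plan is to peel off one period at a time and apply Lemma~\ref{lem:period_elim} at each step. For each maximal periodic substring $\str = B_1 \circ \cdots \circ B_z$ of $x$ with period $p$ of length $w$, Alice removes $\lfloor |B_2 \circ \cdots \circ B_{z-1}|/w\rfloor$ whole copies of $p$ from positions strictly interior to $\str$, and Bob removes the characters at the same absolute indices in $y$. I would implement a single period-removal as follows: locate three consecutive copies $ppp$ that still lie inside the current occurrence of $\str$ in $x$, and invoke Lemma~\ref{lem:period_elim} on this length-$3w$ window of $x$ against the length-$3w$ window of $y$ at the matching absolute indices, leaving the rest of $x$ and $y$ untouched. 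Since characters outside the window are unmodified, any alignment decomposes into an outside portion (unchanged) and an inside portion (whose cost cannot grow, by Lemma~\ref{lem:period_elim}), so the global edit distance does not increase. Iterating over every removed period in every periodic substring established in Phase I gives $ed(x',y') \le ed(x,y)$.

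The main obstacle is the tightness in Part~1: because edits shift positions, the $Q_j$-context is not literally aligned with the $B_j$-context, and one must argue that an edit can pollute only $O(1)$ consecutive context windows despite cumulative shifts of up to $K$. Making the bound exactly $2K$ (rather than an unspecified $O(K)$) is what justifies using Lemma~\ref{lem:ECC} with the threshold $2K$ announced in Phase~I, and this is where the specific choice $\theta_\ell = b_\ell/2$ is used. A smaller but real subtlety in Part~2 is that Lemma~\ref{lem:period_elim} is stated for one removed period on a length-$3\pi$ prefix; the peeling strategy above reduces the general case (removing many periods inside a long periodic substring) to repeated single-period applications on disjoint or telescoping length-$3w$ windows, so that distinct applications do not interfere.
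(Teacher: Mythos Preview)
Your Part~2 plan (peel off one period at a time and invoke Lemma~\ref{lem:period_elim}) is essentially what the paper does; the paper simply says the claim ``follows directly from Lemma~\ref{lem:period_elim}'' without spelling out the iteration.

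The gap is in Part~1. You assert that $e_j\neq e'_j$ can occur only when some edit of the optimal alignment lies inside the length-$(b_\ell+\theta_\ell)$ window around block $j$, and you then charge each edit to at most two windows to get a deterministic bound $ham(U_\ell,U'_\ell)\le 2K$. That implication is false. Insertions and deletions that occur far to the left of block $j$ still shift the correspondence between same-index positions of $x$ and $y$: the substring $Q_j=y[a..b]$ is aligned (under the optimal alignment) with $x[a+\delta..b+\delta]$ for some $|\delta|\le K$, not with $B_j=x[a..b]$. If $B_j$ happens to be the last full block inside a periodic region of $x$, the shifted block $Q_j$ can straddle the (translated) boundary of that region in $y$, giving $\chi_j=1$ but $\chi'_j=0$ even though no edit touches the window. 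So the purely combinatorial bound $ham(U_\ell,U'_\ell)\le 2K$ does not hold deterministically, and invoking Lemma~\ref{lem:ECC} alone does not give the stated probability.

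The paper closes this exactly here using the random offset $\Delta_\ell\in[b_\ell/2,b_\ell]$, which your proposal never mentions. For any good block that is periodic in $x$, the probability over $\Delta_\ell$ that the same-index block in $y$ fails to lie entirely inside the translated periodic region is $O(K/b_\ell)$, since the translation is at most $K$ and the period is at most $\theta_\ell=b_\ell/2$. A union bound over the at most $2n_\ell/b_\ell$ blocks where Bob falls back to $y$ then yields success probability $1-O(n_\ell K/b_\ell^2)\ge 1-1/\poly(K\log n)$. The random shift is thus not cosmetic: it is precisely the device that makes the $2K$-error budget hold with high probability, and your argument needs it.
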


\begin{proof}
For the first part of the lemma, we just need to show that $U_\ell$ and $U'_\ell$ differ in at most $2K$ pairs $(e_j, e'_j)$.  Note that we only need to look at those $e'_j$ built from $Q_j$ in $y$, since otherwise if  $e'_j$ is built from $P_j$ in $\tilde{x}$ then we always have $e_j = e'_j$.  We thus only need to consider at most $n_\ell / \theta_\ell (= 2n_\ell / b_\ell)$ pairs $(e_j, e'_j)$.

We call a pair of block $(B_j, Q_j)$ {\em good} if there is no edit in both $B_j$ and $Q_j$ as well as their preceding $\theta_\ell$ characters in $x$ and $y$ respectively; we call the pair {\em bad} otherwise.  Since $ed(x,y) \le K$, there are at most $2K$ bad pairs.  
We call a pair $j$ {\em periodic} if $\chi_j = 1$ or $\chi'_j = 1$ (i.e., at least one of $B_j$ or $Q_j$ is part of a periodic subtring in $s$ or $t$ with period length at most $\theta_\ell$), and {\em non-periodic} otherwise.  Clearly for a good and non-periodic pair $j$ we must have $(\chi_j, w_j) = (\chi'_j, w'_j) = (0,0)$.

We now show that for a good pair $j$, if $\chi_j = 1$ or $\chi'_j = 1$, then with probability $1 - O(K/b_\ell)$ we have $e_j = e'_j$.  We prove only for the case when $\chi_j = 1$; the proof for the other case is the same. The observation is that we have a random shift $\Delta_\ell \in [b_\ell/2, b_\ell]$ in the block partition, and in any optimal alignment the indices of two matching characters in $x$ and $y$ can differ by at most $ed(x,y) \le K$, thus for a good block $B_j$ that is part of a periodic substring of period length at most $\theta_\ell = b_\ell/2$ in $x$, with probability $1 - O(K/b_\ell)$, $Q_j$ is also part of a substring with the same period.  By a union bound on at most $2n_\ell / b_\ell$ pairs $(e_j, e'_j)$, we have that with probability 
$1 - 2n_\ell/b_\ell \cdot K/b_\ell \ge 1 - 1/\poly(K\log n)$,
for all good and periodic pairs $j$, $e_j = e'_j$.  The first part of the lemma follows.

The second part of the lemma follows directly from Lemma~\ref{lem:period_elim}. Note that when removing periods in each periodic substring $\str$ we have kept the first and the last blocks that are contained in $\str$, and thus Bob can recover those periods that have been removed. 
\end{proof}

\begin{lemma}
\label{lem:remove-common}

At each level $\ell$ in Phase II, let $x$ and $y$ be the strings held by Alice and Bob respectively. Suppose that $ed(x,y) \le K$, then with probability $1 - 1/\poly(K \log n)$,  Bob can recover Alice's vector $V_\ell$ using his vector $V'_\ell$ and Alice's message $sk(V_\ell)$.
\end{lemma}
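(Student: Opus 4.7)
The plan is to reduce the claim, via Lemma~\ref{lem:ECC} with error threshold $K$, to showing that $V_\ell$ and $V'_\ell$ disagree in at most $K$ coordinates with probability $1-1/\poly(K\log n)$. Since the Karp--Rabin hash $h$ has collision probability $1/\poly(Kn_\ell)$ and there are only $d=O(n_\ell/b'_\ell)$ blocks, a union bound lets me (up to an additional $1/\poly(K\log n)$ failure term) treat hash equality as string equality. Thus coordinate $j$ is in agreement whenever $A_j$ and $A'_j$ are the same string, the lengths $r_j=r'_j$, and the boundary indicators $\calE_j=\calE'_j$.

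First I would bound the progress-step budget of the CGK walk on $(x,y)$. Note that Lemma~\ref{lem:remove-period} and Lemma~\ref{lem:period_elim} guarantee $ed(x,y)\le K$ at the start of every level. Applying Lemma~\ref{lem:CGK} with $\ell_0=K^C$ for a sufficiently large constant $C$, the walk performs at most $\ell_0$ progress steps except with probability $1/\poly(K\log n)$. Outside of progress steps the walk copies identical bits into $x'$ and $y'$ and the shift $d=i_0-i_1$ is constant, so all mismatches of $x'$ versus $y'$ lie in ``disruption segments'' surrounding progress steps, which are separated by long ``matching segments.'' Since each disruption segment corresponds to one of the $\le K$ edit operations of an optimal alignment, there are at most $K$ disruption segments in total.

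The main step is to cap the length of a single disruption segment. Phase~I guarantees that neither $x$ nor $y$ contains a substring of length $\Omega(b_\ell)$ whose period is at most $\theta_\ell=b_\ell/2$. Applying Lemma~\ref{lem:walk-break} with $\ell=O(b_\ell)$, whenever the walk is in a non-zero-shift state inside such a substring it makes a progress step within $O(b_\ell)$ CGK steps. Combining this with Lemma~\ref{lem:converge} (each unit decrement of the shift needs only $O(\ell_0)$ progress steps) and Lemma~\ref{lem:shift} (the absolute value of the shift never exceeds $O(\sqrt{\ell_0})$ with high probability), every disruption segment has length at most $b_\ell\cdot\poly(K)$ CGK steps.

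Finally, by the choice $b'_\ell/b_\ell=2^{c_2(\log K+\sqrt{\log n})}$, taking $c_2$ sufficiently large makes each disruption segment much shorter than $b'_\ell$. Because the Phase~II partition is driven by CGK-step position with a uniformly random offset $\Delta'_\ell\in[b'_\ell/2,b'_\ell]$, a fixed segment of length $L$ straddles a block boundary with probability $O(L/b'_\ell)$; union-bounding over the $\le K$ disruption segments shows that with probability $1-1/\poly(K\log n)$ every segment is contained in a single block of $x'$ (and of $y'$, since the partition is the same on both sides). Any block disjoint from all disruption segments lies entirely within matching regions, so $x'$ and $y'$ agree on it bit-for-bit, the shift is equal at its two endpoints, and hence $A_j=A'_j$ as strings with identical lengths and boundary indicators. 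Thus at most $K$ coordinates of $V_\ell$ and $V'_\ell$ disagree, and Lemma~\ref{lem:ECC} lets Bob recover $V_\ell$. The hardest part will be formalizing the disruption-segment length bound, which requires carefully stitching together the three random-walk lemmas with the precise no-short-period guarantee produced by Phase~I.
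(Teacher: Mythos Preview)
Your approach is essentially the paper's: reduce via Lemma~\ref{lem:ECC} to showing at most $K$ block pairs differ, confine the mismatches of $x'$ versus $y'$ to at most $K$ short segments using the period-removal guarantee together with the random-walk lemmas, and then use the random offset $\Delta'_\ell$ to place each segment inside a single Phase~II block.

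The one place to tighten is your justification that ``each disruption segment corresponds to one of the $\le K$ edit operations.'' This correspondence is not direct: a single edit can spawn several separate mismatch runs in $x',y'$ when the misaligned walk drifts through a short coincidental match and then hits another mismatch. The paper handles this by defining \emph{progress phases} via Lemma~\ref{lem:converge} rather than geometrically: a phase ends precisely when the remaining edit distance $ed(x[p..n],y[q..n])$ drops by one, so there are at most $K$ phases by construction, and between phases the walk is necessarily in matching states. Within a phase the intervening matching runs are short --- at most $O(b_\ell)$ characters by Lemma~\ref{lem:walk-break} after Phase~I period removal --- so each phase spans at most $\gamma\cdot O(b_\ell)$ positions, where $\gamma=(K\log n)^{c_2/2}$ is the per-phase progress-step budget from Lemma~\ref{lem:converge}. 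With this definition your length bound and random-shift argument go through unchanged. Your explicit use of Lemma~\ref{lem:shift} to cap the absolute shift (so that the hypothesis $|(p-i)-(q-j)|\in[1..\theta_\ell]$ of Lemma~\ref{lem:walk-break} is met) is a detail the paper leaves implicit.
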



\begin{proof}
We again just need to show that $V_\ell$ and $V'_\ell$ differ on at most $K$ pairs $(g_j, g'_j)$. 

Let $x'$ and $y'$  be two strings after performing the CGK embedding on $x$ and $y$ respectively.  Let $\W$ be the random walk corresponding to the embedding.  Recall that starting from a state $(p_0, q_0)$ where a progress step happens,  by Lemma~\ref{lem:converge} we have that with probability $1 - O(1/\sqrt{\gamma})$,  after at most $\gamma$ progress steps $\W$ will reach a state $(p_1, q_1)$ such that $ed(x[p_1..n], y[q_1..n]) \le ed(x[p_0..n], y[q_0..n]) - 1$.  We call such a sequence of walk steps a {\em progress phase}.  Since $ed(x,y) \le K$, the total number of progress phases is upper bounded by $K$.   By a union bound, with probability $1 - O(K/\sqrt{\gamma})$, each of the  at most $K$ progress phases ``consumes'' at most $\gamma$ progress steps.   Note that for all indices $j$ between two progress phases, we have $x'[j] = y'[j]$, that is, the two substrings of $x'$ and $y'$ are perfectly matched.

The key observation is that after the period-removal process in Phase I, by Lemma~\ref{lem:walk-break} we will have a ``break'' after passing at most two blocks (less than $4b_\ell$ characters when counting the periods crossing the two block boundaries) allowing for at least one progress step to execute.  Therefore the total number of pairs of coordinates in $x'$ and $y'$ that are involved in one of the at most $K$ progress phases can be bounded by $K \cdot \gamma \cdot 4b_\ell$ with probability $1 - O(K/\sqrt{\gamma})$.  Setting $\gamma = (K \log n)^{c_2/2}$.  By our choices of parameters we have
$$b'_\ell \ge  2^{c_2 (\log K + \sqrt{\log n})} b_\ell \ge (K \cdot \gamma \cdot 4b_\ell) \cdot (K\log n)^{\Theta(1)}.$$ 
Also recall that there is a random shift $\Delta'_\ell \in [b'_\ell/2, b'_\ell]$ at the beginning of the block partition in Phase II.  We thus have with probability at least 
$$1 - \left( O(K/{(K \log n)^{c_2/4}}) + (K \cdot \gamma  \cdot 4b_\ell)/(b'_\ell/2) \right) \ge 1 - {1}/{\poly(K\log n)}
$$ 
that at most $K$ pairs $(A_j, A'_j)$ differ, where $(A_1, \ldots, A_d)$ is the block partition of $x$ in Phase II, and $(A'_1, \ldots, A'_d)$ is that of $y$.  The lemma follows.
\end{proof}

The correctness of the algorithm follows from Lemma~\ref{lem:remove-period} and Lemma~\ref{lem:remove-common}.  Note that the first stage will finish in at most $O(\log \log n)$ levels since $\log n_{\ell}$ decreases at each level by a factor of 
\begin{eqnarray*}
{\log n_\ell}/{\log n_{\ell+1}} = {\log n_\ell}/{\log(K b'_\ell)} &\ge& {\log n_\ell}\left/{\left(\log \left(\sqrt{n_\ell}2^{(c_1 + c_2) (\log K + \sqrt{\log n})} \right) \right)} \right.  \ge 1.5,
\end{eqnarray*}
where in the last inequality we used the fact that $n_\ell \ge (K 2^{\sqrt{\log n}})^{10(c_1+c_2)}$ (otherwise we go to the second stage and apply the IMS algorithm).  The overall success probability is at least
$$1 - \left(1/\poly(K \log n) + 1/\poly(K \log n)\right) \cdot O(\log\log n) - 1/\poly(K \log n) \ge 1 - 1/\poly(K \log n),
$$
where the last term on the left hand side counts the error probability of the IMS algorithm (Theorem~\ref{thm:IMS}).

%
%

\vspace{-4mm}
\paragraph{Complexities.}
We now analyze the communication cost and the running time of our algorithm.  In the first stage, at each level $\ell$, the communication cost is dominated by the size of $sk(U_\ell)$ and $sk(V_\ell)$, both of which can be bounded by $O(K \log n_\ell)$ by Lemma~\ref{lem:ECC} (where $\lambda \le n_\ell$). Since $\log n_\ell$ decreases by a constant factor at each level, the total size of the message in the first stage is bounded by $O(K \log n)$.  In the second stage,  the total number of levels in the IMS sketch is bounded by $\log b'_L = O(\log K + \sqrt{\log n})$, and the sketch size per level is $O(K(\log K + \log \log n))$ (these are the same as Theorem~\ref{thm:IMS} except that the number of levels has been reduced from $\log n$ to $\log b'_L$). Thus the total size of the IMS sketch is bounded by $O(\log K + \sqrt{\log n}) \cdot O(K(\log K + \log \log n)) = O(K(\log^2 K + \log n))$.  Summing up, the total communication is bounded by $O(K(\log^2 K + \log n))$.  

The running time of the first stage is again dominated by the encoding and decoding time of the scheme in Lemma~\ref{lem:ECC}, which is bounded by $\tilde{O}(n)$. The running time of the IMS algorithm in the second stage is also bounded by $\tilde{O}(n)$ (Theorem~\ref{thm:IMS}).

\section{Sketching}
\label{sec:sketch}

In this section we show the following theorem.  
\begin{theorem}
\label{thm:sketching}

There exists a sketching algorithm for computing edit distance and all the edit operations having sketch size $O(K^8 \log^5 n)$, encoding time $\tilde{O}(K^2 n)$, decoding time $\poly(K \log n)$, and success probability $0.9$, where  $n$ is the input size and $K$ is the distance upper bound. 
When the distance is above the upper bound $K$, the decoding algorithm outputs ``error'' with probability $1-1/\poly(n)$. 
\end{theorem}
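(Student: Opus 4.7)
The plan is to instantiate the anchor-based strategy from the techniques overview. Alice and Bob share $\rho = \Theta(K^2 \log n)$ random strings $r_1, \ldots, r_\rho \in \{0,1\}^{6n}$ (the total randomness is compressed to $O(\log n)$ bits via Nisan's generator, as in Section~\ref{sec:preliminary}); for each $i$ they independently apply the CGK embedding to produce $s'_i, t'_i \in \{0,1\}^{3n}$. By Lemma~\ref{lem:CGK} applied with $\ell = \Theta(K^2 \log n)$, each walk has at most $\ell$ progress steps (and hence $ham(s'_i, t'_i) \le \ell$) with probability $1 - 1/\poly(n)$, and a union bound gives this simultaneously for all $i \in [\rho]$.

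The sketch is a concatenation of $\rho$ parallel pieces. For each $i$, Alice computes the Porat--Porat Hamming sketch (Lemma~\ref{lem:ham}) of $s'_i$ at threshold $O(K^2\log n)$, of size $\poly(K\log n)$ and constructible in one pass in $\tilde O(n)$ time. She augments it with a \emph{position-aware companion} of comparable size that encodes the monotone step function $j \mapsto p_j$ relating Hamming-space positions back to edit-space indices, together with Karp--Rabin fingerprints of $s$; the companion will allow the referee, once the $i$-th walk's mismatches are decoded, to recover the full edit-space state sequence $\{(p_j, q_j)\}_j$ and hence the entire alignment $\A_i$ induced by that walk. Bob performs the symmetric construction for $t$. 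The aggregate encoding time is $\tilde O(K^2 n)$ and the aggregate sketch size is $\poly(K\log n)$, which can be tuned to $O(K^8 \log^5 n)$ by setting exponents carefully.

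Decoding proceeds walk by walk followed by an aggregation. For each $i$, the referee combines Alice's and Bob's sketches to recover the Hamming mismatches of $(s'_i, t'_i)$ together with their positions and character values, and the position-aware companions then yield the induced alignment $\A_i$ between $s$ and $t$. The referee forms the intersection $\I = \bigcap_{i \in [\rho]} \A_i$; by the anchor lemma (Lemma~\ref{lem:anchor}, proved in Section~\ref{sec:anchor}), with probability at least $0.9$ every pair $(u,v) \in \I$ with $s[u]=t[v]$ lies on a single optimal alignment. The referee uses these anchors to decompose the problem into at most $O(K)$ inter-anchor segments, each containing only characters already exposed by the recovered mismatch sets, and runs a local edit-distance computation to assemble an optimal global alignment. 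All of this runs in $\poly(K\log n)$ time.

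The principal obstacle is the position-aware sketching: the Porat--Porat guarantee identifies \emph{which Hamming coordinates} $s'_i$ and $t'_i$ disagree at, but not \emph{which original coordinates of $s$ and $t$} those positions correspond to, since the CGK copy pattern depends jointly on the input and on $r_i$. Overcoming this requires designing auxiliary sketchable data whose mismatches, once decoded, furnish the missing state information $(p_j, q_j)$ along each walk without blowing up the sketch size. A secondary technical point is the anchor lemma itself: for any matching pair $(u,v)$ that does not lie on any optimal alignment, a hitting-time/gambler's-ruin argument based on Lemma~\ref{lem:shift} and Lemma~\ref{lem:gambler} shows that a single CGK walk avoids $(u,v)$ with constant probability, so the probability that all $\rho$ independent walks pass through $(u,v)$ is at most $1/\poly(n)$, and a union bound over the $O(n^2)$ candidate pairs closes the argument. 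Finally, when $ed(s,t) > K$ the referee outputs ``error'' as soon as any Porat--Porat decoding fails its threshold test or the per-walk alignments disagree on a deterministic consistency check; by Lemma~\ref{lem:ham} this occurs with probability $1 - 1/\poly(n)$.
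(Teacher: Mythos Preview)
Your overall strategy matches the paper's: run $\rho=\Theta(K^2\log n)$ CGK walks, sketch each so the referee can recover the induced (effective) alignment, intersect, invoke the anchor lemma, and finish by local dynamic programming. But two steps in your write-up are genuinely off, and one of them is exactly the technical heart of the construction.

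\textbf{The position-aware companion is the proof, not a footnote.} You flag it as ``the principal obstacle'' and then leave it unspecified. Sketching $s'_i$ directly for Hamming mismatches and separately sketching the map $j\mapsto p_j$ does not work naively: after the first progress step the counters $p_j$ and $q_j$ can stay permanently shifted, so Alice's and Bob's position vectors differ in $\Omega(n)$ coordinates, not $\poly(K\log n)$. The paper's solution is to partition $s'$ into blocks of size $B=4\log n$ and build a \emph{hierarchical} binary tree over these blocks: at every level one records, for each node, the Karp--Rabin hash and the \emph{length} of the pre-image in $s$. At each of the $\log(3n/B)$ levels the number of nodes where Alice and Bob differ is at most $N$ (they lie on at most $N$ root-to-leaf paths), so a Hamming sketch of threshold $N$ per level suffices. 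The lengths are what let the referee convert tree positions back into edit-space indices; a separate content sketch recovers the actual characters of the mismatched leaf blocks. This two-part $(P,Q)$ structure is what yields the $\rho\cdot O(N\log^2 n)=O(K^8\log^5 n)$ bound.

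\textbf{The anchor lemma does not give constant miss probability.} You assert that if a matching pair $(u,v)$ is not on an optimal alignment then a single walk misses it with constant probability; this would mean $\rho=O(\log n)$ suffices, contradicting your own choice of $\rho$. The actual bound (Section~\ref{sec:anchor}) is only $\Omega(1/K^2)$: one must analyze a specific \emph{greedy} optimal alignment, show that in the ``stable zone'' to the left of $(u,v)$ the walk's shift is confined by periodic clusters to a band of width $O(K^2)$, and then apply the gambler's-ruin Lemma~\ref{lem:gambler} to get miss probability $\ge 1/\Theta(K^2)$. This $1/K^2$ is precisely why $\rho=\Theta(K^2\log n)$, and it forces the progress-step threshold to be $N=\Theta(K^6\log^2 n)$ (so that $\rho\cdot K/\sqrt{N}=O(1)$ in the union bound of Claim~\ref{cla:good-walk}); your $\ell=\Theta(K^2\log n)$ gives only $1-O(1/\sqrt{\log n})$ per walk, which does not survive the union bound over $\rho$ walks. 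Relatedly, the decoding is not a simple $O(K)$-segment decomposition: the gaps between anchor clusters can contain many more than $\poly(K\log n)$ characters, and the paper handles them via Lemma~\ref{lem:sketch-main} (recovering unknown characters inductively across the $\rho$ alignments) together with a periodicity argument to keep the LCP data structure at $\poly(K\log n)$ size.
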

Note that the success probability can be boosted to high probability by using the standard parallel repetition (and then take the one with the smallest distance).

\subsection{The Algorithm and Analysis}
We first introduce a concept called {\em effective alignment}, and then show that a set of effective alignments satisfying a certain property can be used to construct an optimal alignment between the two strings.  

\begin{definition}[{\bf Effective Alignment}]
\label{def:effective}

Given two strings $s, t \in \{0,1\}^n$, we define an effective alignment between $s$ and $t$ as a triplet $(G, g_s, g_t)$, where
\vspace{-1mm}
\begin{itemize}
\item $G  = (V_s, V_t, E)$ is a bipartite graph where nodes $V_s = \{1, \ldots, n\}$ and $V_t = \{1, \ldots, n\}$ correspond to indices of characters in $s$ and $t$ respectively, and if $(i, j) \in E$ then $s[i] = t[j]$.  Moreover,  edges in $E$ are non-crossing, that is, for every pair of edges $(i,j)$ and $(i', j')$, we have $i < i'$ iff $j < j'$.

\vspace{-1mm}
\item $g_s$ is a partial function defined on the set of singletons (unmatched nodes) $U_s \subseteq V_s$; for each $i \in U_s$, define $g_s(i) = s[i]$. Similarly, $g_t$ is a partial function defined on the set of singletons $U_t \subseteq V_t$; for each $j \in U_t$, define  $g_t(i) = t[i]$.
\end{itemize} 
\end{definition}
Intuitively, an effective alignment can be seen as a {\em summary} of an alignment after removing the information of those matched nodes.  The following lemma gives the main idea of our sketching algorithm.
\begin{lemma}
\label{lem:sketch-main}

We can compute an optimal alignment for $s$ and $t$ using a set of effective alignments under the promise that there exists an optimal alignment going through all edges that are common to all  effective alignments.
\end{lemma}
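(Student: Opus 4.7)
The approach is to use the common edges $\I = \bigcap_k E(G_k)$ as anchors that split the alignment problem into independent gap subproblems, solve each locally using the character information supplied by the effective alignments, and then concatenate.

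First, I would observe that $\I$ is non-crossing, being an intersection of non-crossing edge sets, and sort its edges as $(i_1,j_1),\ldots,(i_r,j_r)$ with $i_\ell<i_{\ell+1}$ and $j_\ell<j_{\ell+1}$; introducing sentinels $(i_0,j_0)=(0,0)$ and $(i_{r+1},j_{r+1})=(n{+}1,n{+}1)$ yields $r{+}1$ pairs of gap substrings $s[i_\ell{+}1..i_{\ell+1}{-}1]$ and $t[j_\ell{+}1..j_{\ell+1}{-}1]$. By the promise there is an optimal alignment $\A^*$ with $E(\A^*)\supseteq\I$, and because $\A^*$ is non-crossing its restriction to each gap is itself an optimal alignment of the two gap substrings. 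Hence it suffices to compute one optimal alignment per gap and concatenate the results with $\I$.

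Within a single gap, each $G_k$ restricts to an effective alignment of the gap substrings (non-crossing prevents any edge from escaping the gap). I would then recover the characters of the gap substrings position by position: if position $p$ is a singleton in some $G_k$, then $g_{s,k}(p)$ (or $g_{t,k}(p)$) gives its character directly; otherwise $p$ is matched in every $G_k$, and following matched edges (each of which equates its two endpoints' characters) traces a walk that either reaches a singleton—revealing the common value—or enters a cycle of positions sharing a common but undetermined character. Grouping positions into equivalence classes in this way and labeling each cycle class with an arbitrary value (say $0$), I would run the classical $O(m^2)$ edit-distance dynamic program on the reconstructed gap substrings to obtain a local alignment, and concatenate these local alignments with $\I$.

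The main obstacle I expect is justifying the arbitrary labeling of undetermined cycle classes: I must show that any optimal alignment on the reconstructed substrings lifts to an alignment on the true substrings of equal cost. I plan to handle this with a short exchange argument based on the fact that the equivalence classes encode precisely which pairs of positions are matchable in the true strings, so that flipping an unknown class's label merely relabels every position in the class consistently and preserves every feasibility constraint of the DP. Combined with the optimality of $\A^*$ restricted to each gap, this ensures the concatenation of $\I$ with the local optima is a non-crossing alignment whose total matched count equals $|E(\A^*)|$, and is therefore globally optimal.
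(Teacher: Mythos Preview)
Your overall strategy---intersect to get $\I$, split into gaps, recover the gap characters from the effective alignments, then run a local DP and concatenate---is exactly the paper's approach. The difference lies in the character-recovery step, and there your handling of the ``cycle'' case is where the proposal breaks down.

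Your exchange argument for undetermined cycle classes does not work. If you label every cycle class with $0$, the DP on the reconstructed gap strings may match a cycle position against a \emph{determined} position whose true value happens to be $0$, or against a position in a \emph{different} cycle class (also labeled $0$). Such a match is feasible in your reconstructed strings but need not be feasible in the true strings, so the resulting alignment can be invalid---and your ``flip the label'' argument does nothing to prevent this, because flipping one class's label changes exactly which cross-class matches the DP sees as feasible. The claim that ``the equivalence classes encode precisely which pairs of positions are matchable in the true strings'' is not what you actually use; you would need distinct fresh symbols per class, not a common value like $0$, and even then you must argue the DP cost transfers.

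The clean fix---and what the paper does---is to show the cycle case never arises: every gap character is in fact recoverable. The paper proves this by a left-to-right induction: fix one effective alignment $\A$, process its gap edges $(a_1,b_1),\ldots,(a_z,b_z)$ in order, and for each $(a_x,b_x)\notin\I$ pick some $\A'$ not containing it; non-crossing then forces either $a_x$ or $b_x$ to be a singleton in $\A'$, or to be matched in $\A'$ to a strictly earlier (hence already recovered) position. Equivalently, in your component language: if some component $C$ had no singleton, take its leftmost $s$-node $p$ and leftmost $t$-node $q$; in every $\A_k$ both are matched inside $C$, and non-crossing forces $(p,q)\in\A_k$, hence $(p,q)\in\I$, contradicting that $p$ lies in a gap. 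So replace the exchange argument by this no-cycle lemma and your proof goes through.
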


\begin{proof}
Let $\I$ be the set of edges that are common to all effective alignments. We will show that we can reconstruct using these effective alignments all characters $s[i]$'s and $t[j]$'s for which $i, j$ that are not adjacent to any edges in $\I$.  

For convenience we add two dummy edges $(0,0), (n+1, n+1)$ to all the effective alignments to form the boundaries.   We can view the edges in $\I$ as a group of clusters $\C_1, \ldots, \C_\kappa$ (counting from left to right) each of which consists of a set of consecutive matching edges $\{(i, j), (i+1, j+1), \ldots \}$, plus some singleton nodes  between these clusters.   Now consider a particular effective alignment $\A$ and an $\ell \in [\kappa - 1]$.  Let $(a_1, b_1), \ldots, (a_z, b_z)$ be the set of edges in $\A$ that lie between $\C_\ell$ and $\C_{\ell+1}$.  By the definition of the effective alignment we can learn directly from $\A$ all the singletons in $\A$ that lie between $\C_\ell$ and $\C_{\ell+1}$.  It remains to show that we can recover the characters in $s$ and $t$ that correspond to the nodes $a_1, \ldots, a_z$ and $b_1, \ldots, b_z$.  For convenience we will identify nodes and their corresponding characters in the strings.

 Let us consider edges $(a_1, b_1), \ldots, (a_z, b_z)$ one by one from left to right.  Note that for each $x \in [z]$, we just need to recover one of $s[a_x]$ and $t[b_x]$ because they are equal.  Since $(a_x, b_x)$ is not in $\I$, we know that there exists another effective alignment $\A'$ which does not contain $(a_x, b_x)$.  We have the following cases:
\vspace{-1mm}
\begin{enumerate}
\item $a_x$ is a singleton in $\A'$.  In this case we can recover $s[a_x]$ directly from $\A'$.

\vspace{-1mm}
\item $a_x$ is connected to a node $u$ in $\A'$ such that $u < b_x$.  This case is again easy since $t[u]$ has already been recovered, and thus we just need to set $s[a_x] = t[u]$.

\vspace{-1mm}
\item $a_x$ is connected to a node $u$ in $\A'$ such that  $u > b_x$.  In this case $b_x$ is either a singleton or is connected to a node $v < a_x$ in $\A'$. In the former case we can recover $t[b_x]$ directly from $\A'$, and in the latter case since we have already recovered $s[v]$,  we can just set $t[b_x] = s[v]$.
\end{enumerate}
\vspace{-1mm}

We thus have shown that we can recover all nodes that are not adjacent to any edges in $\I$.  Since by the promise that there exists an optimal alignment containing the edges that are common to all effective alignments (i.e., $\I$), we can construct such an optimal alignment by aligning characters in $s$ and $t$ in the gaps between clusters $\C_1, \ldots, \C_\kappa \subseteq \I$ in the optimal way.
\end{proof}

The rest of our task is to design sketches $sk(s)$ and $sk(t)$ for $s$ and $t$ respectively so that using $sk(s)$ and $sk(t)$ we can extract a set of effective alignments satisfying the promise in Lemma~\ref{lem:sketch-main}.  Intuitively, the size of $sk(s)$ and $sk(t)$ can be small if (1) the information contained in each effective alignment is small, and (2) the number of effective alignments needed is small.  Our plan is to construct $\rho = \poly(K \log n)$ effective alignments $\A_1, \ldots, \A_\rho$, each of which only contains $\poly(K \log n)$ singletons, and there is an optimal alignment going through all edges in $\I = \bigcap_{j \in [\rho]} \A_j$.  Note that we can compress the information of consecutive edges in each $\A_j$ by just writing down the first and the last edges, whose number is bounded by the number of singletons. We thus can bound the sketch size by $\poly(K \log n)$. In the rest of this section we show how to carry out this plan.
\smallskip

We again make use of the random walk in the CGK embedding.  Recall that a random walk on the two strings $s$ and $t$ can be represented as $\W = ((p_1, q_1), \ldots, (p_m, q_m))$ where $(p_j, q_j)\ (p_j, q_j \in [n])$ are states of $\W$.  $\W$ naturally corresponds an alignment (not necessarily optimal) between $s$ and $t$. 
More precisely, $\W$ corresponds to the alignment $\A$ constructed greedily by adding states $(p_m, q_m), \ldots, (p_1, q_1)$ as edges one by one whenever $s[p_j] = t[q_j]$~\footnote{Notice that the same $p_j$ (resp. $q_j$) can not appear in two distinct edges since state $(p_j,q_j)$ is always 
followed by state $(p_j+1,q_j+1)$ whenever $s[p_j]=t[q_j]$.}. We will give a detailed description of this connection in the proof of Lemma~\ref{lem:recover}.

Set $\rho = c_\rho K^2 \log n$ for a large enough constant $c_\rho$, and $N = c^4_\rho K^6 \log^2 n$. 
We say that a random walk {\em passes} a pair $(u,v)$ if there exists some $j \in [m]$ such that $(p_j, q_j) = (u,v)$.  We call a random walk $\W$ {\em good} if the total number of progress steps in $\W$ is at most $N$ (recall that we have a progress step at state $(p_j, q_j)$ only if $s[p_j] \neq t[q_j]$).  By Lemma~\ref{lem:CGK} a random walk is good with probability $1 - O(1/(c_\rho^2 K^2 \log n))$.   We will generate $\rho$ random walks. By a union bound, we have
\begin{claim}
\label{cla:good-walk}
The probability that all the $\rho$ random walks are good is at least $0.99$.
\end{claim}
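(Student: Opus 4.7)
The plan is to invoke Lemma~\ref{lem:CGK} for a single walk and then take a union bound over all $\rho$ walks, carefully tracking how the constant $c_\rho$ controls the overall failure probability.

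First I would apply Lemma~\ref{lem:CGK} to a single CGK walk with parameter $\ell = N = c_\rho^4 K^6 \log^2 n$. Since $ed(s,t) \le K$, the lemma says the probability that the walk uses more than $N$ progress steps is at most
\[
O\!\left(\frac{K}{\sqrt{N}}\right) = O\!\left(\frac{K}{c_\rho^2 K^3 \log n}\right) = O\!\left(\frac{1}{c_\rho^2 K^2 \log n}\right).
\]
This matches the per-walk failure bound already asserted in the paragraph preceding the claim, so no extra work is needed here beyond plugging in $\ell = N$ and simplifying.

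Next I would take a union bound over the $\rho = c_\rho K^2 \log n$ independently generated random walks (the walks are independent because each uses its own fresh random string $r$ in the CGK embedding). The probability that at least one walk fails to be good is at most
\[
\rho \cdot O\!\left(\frac{1}{c_\rho^2 K^2 \log n}\right) = c_\rho K^2 \log n \cdot O\!\left(\frac{1}{c_\rho^2 K^2 \log n}\right) = O(1/c_\rho).
\]
Choosing the constant $c_\rho$ sufficiently large makes this at most $0.01$, so all $\rho$ walks are simultaneously good with probability at least $0.99$, which is the desired bound.

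There is no real obstacle: the claim is essentially a bookkeeping consequence of Lemma~\ref{lem:CGK} combined with a union bound, and the parameters $\rho$ and $N$ were chosen precisely so that the failure probabilities telescope to a constant controlled by $c_\rho$. The only thing to be careful about is making sure the $O(\cdot)$ hidden constant in Lemma~\ref{lem:CGK} is absorbed by $c_\rho$; this is purely a matter of picking $c_\rho$ larger than that absolute constant times $100$.
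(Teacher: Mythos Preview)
Your proposal is correct and matches the paper's own argument essentially verbatim: the paper states just before the claim that a single walk is good with probability $1 - O(1/(c_\rho^2 K^2 \log n))$ by Lemma~\ref{lem:CGK}, and then invokes a union bound over the $\rho$ walks, exactly as you do. There is nothing to add.
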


The following lemma indicates that the alignments corresponding to a set of $\rho$ random walks can be used (after compression) as a set of effective alignments satisfying the promise in Lemma~\ref{lem:sketch-main}.  The proof of the lemma is technical and is presented in Section~\ref{sec:anchor}.  

\begin{definition}[{\bf Anchor}]
Given $\rho$ random walks generated according to the CGK embedding, we say that a pair $(u,v)\ (u,v \in [n])$ is an {\em anchor} if $s[u] = t[v]$, and all the $\rho$ random walks pass $(u,v)$. 
\end{definition}

\begin{lemma}
\label{lem:anchor}
With probability $1 - 1/n^2$, there is an optimal alignment going through all anchors.
\end{lemma}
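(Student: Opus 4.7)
I will decompose the proof of Lemma~\ref{lem:anchor} into three parts: (i) anchors form a strictly monotone chain, (ii) with probability $\geq 1 - 1/n^2$ every anchor lies in some optimal alignment, and (iii) a monotone chain of globally good matches can be stitched into one common optimal alignment. Part (i) is structural: both CGK counters $i_0, i_1$ are non-decreasing, and at any match state $(u, v)$ with $s[u] = t[v]$ they consult the same bit $r[(2j-1) + s[u]]$ and therefore advance in lockstep, so the walk transitions from $(u, v)$ directly to $(u+1, v+1)$. Hence $u_1 < u_2$ forces $v_1 < v_2$ for any two anchors.

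\textbf{Main step: bad-pair bound.} For (ii), call $(u, v)$ with $s[u] = t[v]$ \emph{bad} if $ed(s[1..u-1], t[1..v-1]) + ed(s[u+1..n], t[v+1..n]) \geq k + 1$ where $k = ed(s, t)$; equivalently, no optimal alignment uses the match $(u, v)$. I plan to show that for every bad pair, the probability a single CGK walk passes through $(u, v)$ is at most $1 - c/K^2$ for some absolute constant $c > 0$. The intuition: on progress steps the shift $d = i_0 - i_1$ performs a simple random walk on $\mathbb{Z}$ with steps in $\{-1, 0, +1\}$; reaching $(u, v)$ forces $d = u - v$ precisely at the moment $i_0 = u$, and since $(u, v)$ lies off every optimal trajectory the walk's induced alignment must absorb at least one extra progress step in a window around $(u, v)$ relative to an optimal trajectory. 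Applying Lemma~\ref{lem:gambler} to the shift inside this window---of length $\tilde{O}(K^4)$ progress steps by Lemmas~\ref{lem:CGK} and \ref{lem:converge}---I will argue that with probability $\Omega(1/K^2)$ the shift deflects by at least one unit and steers the walk onto a neighboring optimal trajectory that bypasses $(u, v)$. Independence across the $\rho = c_\rho K^2 \log n$ walks yields $\Pr[\text{any fixed bad pair is an anchor}] \leq (1 - c/K^2)^{\rho} \leq n^{-c c_\rho}$, and a union bound over the at most $n^2$ bad pairs gives total failure probability at most $1/n^2$ for $c_\rho$ sufficiently large.

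\textbf{Stitching and main obstacle.} For (iii), write the chain as $(u_1, v_1), \ldots, (u_\kappa, v_\kappa)$ with $\alpha_i = ed(s[1..u_i-1], t[1..v_i-1])$, $\beta_i = ed(s[u_i+1..n], t[v_i+1..n])$, and $\gamma_i = ed(s[u_i+1..u_{i+1}-1], t[v_i+1..v_{i+1}-1])$. Each anchor satisfies $\alpha_i + \beta_i = k$, triangle-inequality stitching gives $\gamma_i \geq \alpha_{i+1} - \alpha_i$, and the stitched alignment has cost $\alpha_1 + \sum_i \gamma_i + \beta_\kappa \geq k$, with equality iff each $\gamma_i = \alpha_{i+1} - \alpha_i$. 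I plan to establish the reverse inequality by an exchange argument: any optimal alignment of $(s[1..u_{i+1}-1], t[1..v_{i+1}-1])$ can be modified without increasing cost to include the match $(u_i, v_i)$, exploiting $s[u_i] = t[v_i]$ together with the global goodness of both endpoints; if a direct exchange hits an obstruction, I will fall back on strengthening the bad-pair bound to rule out the exceptional configurations by the same deflection technique. The principal obstacle is the per-pair deflection bound in part (ii): because the shift is constant on matching runs, one must isolate a mismatch window of $\Omega(K^2)$ random-walk steps around the putative visit to $(u, v)$ in which gambler's ruin can apply. Long common low-period substrings can freeze the shift and thus require a sub-case analysis in the spirit of Lemma~\ref{lem:walk-break}: either the bad pair lies outside all low-period common regions (where the walk makes steady progress and deflection is easy), or inside one, in which case the pair's off-optimality must manifest as a localized discrepancy that the walk can still resolve by a small biased step.
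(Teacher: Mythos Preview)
Your decomposition into (i)--(iii) has a genuine gap in step (iii): it is \emph{not} true that a monotone chain of matches, each lying on \emph{some} optimal alignment, can always be stitched into a single optimal alignment.  Take $s=0110$, $t=1001$ (so $ed(s,t)=3$).  The pair $(1,2)$ has $\alpha_1=ed(\varepsilon,\text{``1''})=1$ and $\beta_1=ed(\text{``110''},\text{``01''})=2$, so $\alpha_1+\beta_1=3=k$; similarly $(4,3)$ has $\alpha_2=2$, $\beta_2=1$.  These are non-crossing and each sits on an optimal alignment, yet $\gamma=ed(s[2..3],t[3..2])=ed(\text{``11''},\varepsilon)=2$, so the stitched cost is $1+2+1=4>3$.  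Thus your exchange argument in (iii) cannot succeed in general, and your fallback---``strengthen the bad-pair bound to rule out exceptional configurations''---would have to redefine ``bad'' to mean ``not on one \emph{fixed} optimal alignment,'' at which point the decomposition into (ii)$+$(iii) collapses into a single direct argument.

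That direct argument is exactly what the paper does, and it is more delicate than your sketch of (ii).  The paper fixes the \emph{greedy} optimal alignment $\mathcal{O}$ (the one minimizing the sum of matched indices) and shows: if $\mathcal{O}$ does not pass $(u,v)$, then a single walk misses $(u,v)$ with probability $\Omega(1/k^2)$.  The proof does not use a generic $\tilde O(K^4)$ window or Lemma~\ref{lem:walk-break}; instead it identifies the maximal ``stable zone'' $s[u-z..u]=t[v-z..v]$ to the left of $(u,v)$, analyzes the restriction $\mathcal{M}$ of $\mathcal{O}$ to that zone, and proves (using greediness) that $\mathcal{M}$ decomposes into at most $k+1$ clusters, each periodic with period $\le O(k)$, with $O(k)$ singletons between them.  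This structural claim is what bounds the total shift change of the walk inside the zone by $O(k^2)$, so that a single gambler's-ruin application (Lemma~\ref{lem:gambler}) with target $O(k^2)$ versus $0$ gives the $\Omega(1/k^2)$ miss probability.  Your plan does not mention the greedy alignment, the stable zone, or the periodic-cluster decomposition; without those, the deflection bound in (ii) has no footing, and without fixing a single target alignment, (iii) is false.
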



\vspace{-4mm}

\paragraph{The Sketch.}  
We now show how to design sketches for $s$ and $t$ from which we can extract the $\rho$ effective alignments corresponding to the $\rho$ random walks.  For recovering each of the $\rho$ effective alignments, we prepare a pair of structures we call the {\em hierarchical} structure and the {\em content} structure, as follows.  Let $B = 4\log n$ be a parameter denoting a basic block size.  
\vspace{-0.5mm}
\begin{enumerate}
\item {\em The hierarchical structure $P$.} \ \  
Let $s' \in \{0,1\}^{3n}$ be the image of $s \in \{0,1\}^n$ after the CGK embedding. W.l.o.g.\ assume $3n/B$ is a power of $2$ (otherwise we can pad $0$s to the image $s'$). We build a binary tree of depth $L = \log (3n/B)$ on top of $s'$, whose leaves (at level $1$) correspond to a partition of $s'$ into blocks of size $B$, and internal nodes at level $\ell$ correspond to substrings of $s'$ of size $2^\ell B$ (i.e., the concatenation of the blocks corresponding to all the leaves in its subtree).  For each level $\ell \in [L]$, setting $d_\ell = 3n/(2^\ell B)$,  we create a vector $V_\ell = ((h_1, \eta_1), \ldots, (h_{d_\ell}, \eta_{d_\ell}))$ where $h_j$ is a hash signature of the pre-image (in $s$) of the substring in $s'$ corresponding to the $j$-th node at level $\ell$, and $\eta_j$ is the length of the pre-image.  We then build a sketch $P_\ell$ of $V_\ell$ that allows to recover up to $N$ errors using the scheme in Lemma~\ref{lem:ham}.  Let $P = (P_1, \ldots, P_{L})$.  The size of $P$ is $O(L \cdot N \log n) = O(N \log^2 n)$ by Lemma~\ref{lem:ham}.

\vspace{-1mm}
\item {\em The content structure $Q$.}  \ \ 
Again let $s$ and $s'$ be repsectively the original string and the string after the embedding.  We partition $s'$ into blocks of size $B$,  and create a vector $U = (x_1, \ldots, x_{3n/B})$ where $x_j$ is the pre-image (in $s$) of the $j$-th block of $s'$.
We then build a sketch $Q$ of $U$ that allows to recover up to  $N$ errors using the scheme in Lemma~\ref{lem:ham}.  The size of $Q$ is $O(N (B + \log n)) = O(N \log n)$ by Lemma~\ref{lem:ham}.
\end{enumerate}
The final sketch $sk(s)$ for $s$ consists of $\rho$ independent copies of $(P, Q)$.  Clearly the size of $sk(s)$ is bounded by $\rho \cdot O(N \log^2 n) = O(K^8 \log^5 n)$.   Similarly, the sketch $sk(t)$ for $t$ consists of $\rho$ independent copies $(P', Q')$, each of which is constructed in the same way as $(P, Q)$ (but for string $t$).  The time for computing the sketch is bounded by $\tilde{O}(\rho \cdot n) = \tilde{O}(K^2 n)$.

Now we show that we can extract an effective alignment for $s$ and $t$ from $(P, Q)$ and $(P', Q')$.  We again focus on the case when $k = ed(s,t) \le K$.  Otherwise if $k > K$ then various recoveries using Lemma~\ref{lem:ham} in the decoding will report error with probability $1 - 1/\poly(n)$.

\begin{lemma}
\label{lem:recover}

Given $(P, Q)$ and $(P', Q')$,  with probability at least $0.98$, we can extract an effective alignment $\A$ for $s$ and $t$ corresponding to a random walk $\W$ according to the CGK embedding such that for any edge $(u,v) \in \A$ there exists some state $(p,q) \in \W$ such that $(p,q) = (u,v)$.
\end{lemma}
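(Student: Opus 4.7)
My plan is to argue that, assuming the relevant random walk $\W$ is good (a high-probability event from Lemma~\ref{lem:CGK}), the sketches $(P,Q)$ and $(P',Q')$ let the decoder reconstruct the entire evolution of $\W$, and hence read off the alignment $\A$ of the lemma.

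First I would establish the key structural observation: within any block at any level in which no progress step of $\W$ occurs, the pre-images of $s$ and $t$ covered by that block are identical substrings, both in length and in content. Indeed, a non-progress step is precisely a step with $s[i_0]=t[i_1]$, and at such steps the two counters are either both incremented or neither (the embedding consults the same random bit $r[2j-1+s[i_0]]=r[2j-1+t[i_1]]$). A short induction on the steps inside the block then gives $h_j=h'_j$ and $\eta_j=\eta'_j$ for every block free of progress steps. Because $\W$ is good it has at most $N$ progress steps, so at every level $\ell$ at most $N$ coordinates of $V_\ell$ and $V'_\ell$ can disagree, and likewise for the leaf-content vectors $U$ and $U'$.

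Next I would invoke Lemma~\ref{lem:ham} on each of the $L+1=O(\log n)$ sketches $P_1,\dots,P_L,Q$ with threshold $N$, using Karp--Rabin hashes over a field of size $\poly(n)$. A union bound over these recoveries together with all $\poly(n)$ potential spurious Karp--Rabin collisions gives, with probability $1-1/\poly(n)$, that the decoder exactly learns (i) the per-block pre-image lengths $\eta_j$ and $\eta'_j$ at every level, and (ii) the bits of the pre-image in $s$ and in $t$ for each of the at most $N$ leaf blocks where they disagree. Prefix-summing the lengths then recovers the state $(i_0,i_1,j)$ of $\W$ at every leaf-block boundary.

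The final step is to replay $\W$ block-by-block to reconstruct $\A$. In a non-differing leaf block the pre-image in $s$ equals that in $t$, so the emitted alignment edges are simply the $\eta_j$ consecutive diagonal pairs $(i_0+k,i_1+k)$ for $k=0,\dots,\eta_j-1$, and no characters are needed to list them. In a differing leaf block the decoder owns both pre-images and shares $r$, so it can simulate the CGK embedding deterministically step by step and record every state $(p,q)$, noting whether $s[p]=t[q]$ (an alignment edge) or not (contributing singleton characters to $g_s$ or $g_t$). Concatenating the per-block outputs yields an alignment whose edge set is exactly the distinct states $(p,q)$ of $\W$ with $s[p]=t[q]$, which is what the lemma asks for; the effective alignment $(G,g_s,g_t)$ is then read off by a single linear scan. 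The main obstacle is the structural observation of step one---once that is in hand the rest is black-box use of Lemma~\ref{lem:ham} plus deterministic simulation. Combining the $0.99$ probability that $\W$ is good (Claim~\ref{cla:good-walk}) with the $1-1/\poly(n)$ success of the sketch recoveries comfortably clears the stated $0.98$ bound.
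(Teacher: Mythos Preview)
Your overall architecture matches the paper's: bound the number of ``bad'' blocks by the number of progress steps, use the Hamming sketches to find and fill those blocks, then read off the alignment. The structural observation in your first step is correct and is exactly the right starting point.

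The gap is in how you recover positions. Lemma~\ref{lem:ham} returns only the \emph{indices and values at which the two vectors differ}; it tells you nothing about the (common) value of $\eta_j=\eta'_j$ at a block where they agree. Consequently you do \emph{not} learn ``the per-block pre-image lengths $\eta_j$ and $\eta'_j$ at every level,'' and you cannot ``prefix-sum the lengths'' to obtain the walk state at every leaf-block boundary: for the vast majority of leaf blocks the length is simply unknown to the decoder. Your downstream plan of replaying $\W$ block-by-block then breaks, because you have no way to align the recovered pre-image contents of a differing leaf block to absolute positions in $s$ and $t$.

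This is precisely why the paper needs the \emph{hierarchical} structure rather than just the leaf-level lengths. At the root the interval is $[1..n]$; at every internal node on a root--leaf path leading to a differing leaf, the hash difference is detected and both $\eta_j,\eta'_j$ are revealed by Lemma~\ref{lem:ham}, so one can descend one level and pin down which half of the parent interval the child occupies, using only the lengths at differing nodes. Iterating top-down locates each differing leaf block exactly in $s$ and in $t$. Once you have the positions and contents of the (at most $N$) differing leaf blocks, the alignment between two consecutive differing blocks is just a run of diagonal edges $(u,v),(u-1,v-1),\ldots$ determined by the endpoints---you do not need the state at every intermediate boundary. Replace your prefix-sum step by this top-down localization and the rest of your argument goes through.
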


\begin{proof} (sketch)
First by Claim~\ref{cla:good-walk} we know that with probability $0.99$ all random walks are good, conditioned on which we can compute the differences between $(P, Q)$ and $(P', Q')$ successfully with probability $1 - o(1)$.  Note that each pair of mismatched blocks between $s$ and $t$ corresponds to two root-leaf paths in the binary trees (in the hierarchical structures) constructed for $s$ and $t$  where the contents of nodes differ.  By computing the differences between $P$ and $P'$ (using the hash signatures), we can use the length information recovered at each level of the root-leaf paths to find in $s$ and $t$ the locations of the at most $N$ blocks where they differ.  Next, by computing the differences between $Q$ and $Q'$ we can fill the actual contents of those mismatched blocks. Finally, after getting the positions and contents of mismatched blocks, we can add edges corresponding to states $(p_m, q_m), \ldots, (p_1, q_1)$ with $s[p_j] = t[q_j]$ in $\W$ in a greedy fashion to obtain an effective alignment for $s$ and $t$.  The full proof is given in Appendix~\ref{proof:lem:recover}.
\end{proof}

By Lemma~\ref{lem:anchor} and Lemma~\ref{lem:recover} we have the following immediate corollary.
\begin{corollary}
\label{cor:walk}
With probability $0.97$, we can extract $\rho$ effective alignments  from $sk(s)$ and $sk(t)$ such that there is an optimal alignment between $s$ and $t$ going through all edges that are common to all of the $\rho$ alignments.
\end{corollary}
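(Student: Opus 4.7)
The plan is to combine Lemma~\ref{lem:anchor} (which hands us an optimal alignment through every anchor) with $\rho$ applications of Lemma~\ref{lem:recover} (which hands us an effective alignment per sketched walk), where the bridge between the two is the observation that every common edge of the extracted effective alignments must be an anchor. First I would apply the extraction guaranteed by Lemma~\ref{lem:recover} to each of the $\rho$ pairs $(P_i, Q_i)$ and $(P'_i, Q'_i)$ packaged inside $sk(s)$ and $sk(t)$, obtaining effective alignments $\A_1,\ldots,\A_\rho$ together with the underlying CGK random walks $\W_1,\ldots,\W_\rho$ that they correspond to.

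Next I would argue the bridge. Suppose $(u,v)\in \bigcap_{i\in[\rho]}\A_i$. By Definition~\ref{def:effective}, the existence of the edge $(u,v)$ in any effective alignment forces $s[u]=t[v]$. Also, the property asserted by Lemma~\ref{lem:recover} says that for each $i$ there is a state $(p,q)\in \W_i$ with $(p,q)=(u,v)$, so every one of the $\rho$ walks passes $(u,v)$. These two facts together match the definition of an anchor, so $\bigcap_i \A_i \subseteq \{\text{anchors}\}$. Consequently, any optimal alignment that passes through all anchors also passes through every edge common to $\A_1,\ldots,\A_\rho$, which is precisely the promise we need for Lemma~\ref{lem:sketch-main}.

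Finally I would collect the failure probabilities by a union bound. Claim~\ref{cla:good-walk} gives that all $\rho$ walks are simultaneously good except with probability at most $0.01$. Conditioned on that event, each invocation of Lemma~\ref{lem:recover} on a good walk fails only through sketch-recovery/hash-collision errors, which the scheme of Lemma~\ref{lem:ham} drives down to $1/\poly(n)$ per walk; union-bounding over the $\rho = \Theta(K^2 \log n)$ walks still leaves an $o(1)$ failure contribution. Lemma~\ref{lem:anchor} adds a further $1/n^2$ failure term. Summing these, the overall success probability is at least $1 - 0.01 - o(1) - 1/n^2 \ge 0.97$, which is the stated bound.

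I do not expect any real obstacle here: the two invoked lemmas are tailored to dovetail, and the only non-mechanical step is the ``common edge $\Rightarrow$ anchor'' observation, which is immediate from the definitions. The one place where care is needed is the conditioning structure in the probability accounting—making sure that the $1-1/n^2$ guarantee of Lemma~\ref{lem:anchor} is applied to the same $\rho$ walks we actually extract from the sketches, rather than to an independent set—but since the sketches $(P,Q)$ and $(P',Q')$ are built from the same walks used in both lemmas, this is an issue of bookkeeping rather than a genuine technical hurdle.
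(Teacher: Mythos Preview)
Your proposal is correct and follows exactly the approach the paper intends: the paper states the corollary as immediate from Lemma~\ref{lem:anchor} and Lemma~\ref{lem:recover}, and you have simply spelled out the implicit bridge (common edge $\Rightarrow$ anchor) and the probability bookkeeping. Your decomposition of the $0.98$ in Lemma~\ref{lem:recover} into the single global good-walk event (Claim~\ref{cla:good-walk}) plus per-walk $1/\poly(n)$ sketch-recovery errors is precisely what is needed to avoid paying the constant failure probability $\rho$ times.
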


The next lemma finishes the proof of Theorem~\ref{thm:sketching}.  We comment that the fact that the decoding time can be reduced to $\poly(K \log n)$ is very useful in the distributed/parallel computation models where the decoding phase is performed in a central server which collects sketches produced from a number of machines.
\begin{lemma}
\label{lem:decode}
With probability $0.97$, we can extract $\rho$ effective alignments  from $sk(s)$ and $sk(t)$ and use them to construct an optimal alignment between $s$ and $t$.  The running time of the construction is $\poly(K \log n)$.
\end{lemma}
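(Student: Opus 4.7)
The plan is to combine Corollary~\ref{cor:walk} with an algorithmic implementation of the constructive proof of Lemma~\ref{lem:sketch-main}. Correctness is immediate: by Corollary~\ref{cor:walk}, with probability at least $0.97$ the $\rho$ alignments extractable from $sk(s)$ and $sk(t)$ satisfy the promise of Lemma~\ref{lem:sketch-main}, so an optimal alignment consistent with $\I=\bigcap_{j\in[\rho]}\A_j$ exists. The remaining task is therefore purely algorithmic: to make both the extraction of the $\rho$ effective alignments and the subsequent construction of an optimal alignment run in $\poly(K\log n)$ time.

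First, I would extract each of the $\rho$ effective alignments from the corresponding $(P,P')$ and $(Q,Q')$ pair as described in Lemma~\ref{lem:recover}. Every hierarchical level and the content layer are decoded via Lemma~\ref{lem:ham} with error budget $N=\poly(K\log n)$, giving per-level cost $O(N\log n)$ and per-alignment cost $O(L\cdot N\log n)=\poly(K\log n)$. Summed over $\rho=\poly(K\log n)$ copies this is still $\poly(K\log n)$, and each $\A_j$ is stored compactly as a sorted list of $O(N)$ maximal edge-runs interleaved with $O(N)$ singleton runs (compressing a consecutive matched block to its first and last edges).

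Second, I would compute the intersection $\I$ and its cluster decomposition $\C_1,\ldots,\C_\kappa$ by a simultaneous linear sweep across the $\rho$ compact lists, which runs in $O(\rho N)=\poly(K\log n)$ time. Then, walking left-to-right through the positions of $s$ and $t$ that are not incident to an edge of $\I$, I fill in the character at each such position by the case analysis in the proof of Lemma~\ref{lem:sketch-main}: at a position $a_x$ without an incident $\I$-edge, at least one other effective alignment exhibits either a singleton at $a_x$, a match $(a_x,u)$ with $u$ already recovered, or a match $(a_x,u)$ paired with a distinct partner $v$ of $b_x$; a single pass over the $\rho$ alignments with pointers resolves each position in $O(\rho)$ time.

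Finally, for each gap between consecutive clusters $\C_\ell$ and $\C_{\ell+1}$, I would run the classical $O(a_\ell b_\ell)$ (or the sparser Landau--Myers--Schmidt $O(a_\ell+b_\ell+k_\ell^2)$) edit-distance dynamic program on the recovered substrings, and concatenate the resulting local optimal alignments with the cluster edges of $\I$ to output the final optimal alignment. The main obstacle is verifying that this DP stage stays within $\poly(K\log n)$: a position of $s$ lies in a gap only if in some alignment it is a singleton, or it is matched to partners that differ across the $\rho$ alignments. The first kind is charged directly to the at most $O(N)$ singletons of a particular alignment, while the second can be charged to the at most $O(N)$ edge-run boundaries (where the walk shifts), so each alignment contributes at most $O(N)$ ``disagreement witnesses''. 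Hence the total length of all gap substrings is $O(\rho N)=\poly(K\log n)$, and the aggregate DP cost is polynomial. Combining the three stages gives the claimed $\poly(K\log n)$ decoding time.
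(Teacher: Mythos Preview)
Your first two stages (extracting the $\rho$ effective alignments and computing $\I$ by a $\rho$-way sweep) are fine and essentially match the paper. The gap is in your final stage, specifically in the claim that the total length of the gap substrings is $O(\rho N)$.

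Your charging argument says that a position $i$ of $s$ not incident to an edge of $\I$ is either a singleton in some $\A_j$ or is matched to different partners across the $\A_j$'s, and that the latter can be charged to the $O(N)$ edge-run boundaries per alignment. But this charging fails: two alignments can maintain \emph{constant but different} shifts over an arbitrarily long interval, producing no run boundaries at all yet forcing every position in that interval out of $\I$. Concretely, if $s$ and $t$ share a long periodic substring of period $\pi$, one random walk may enter it with shift $0$ and another with shift $\pi$; both walks then match perfectly through the entire periodic stretch, but every edge $(i,i)$ in one walk disagrees with $(i,i+\pi)$ in the other, so the whole stretch (possibly $\Theta(n)$ positions) lies in a gap. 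The paper's proof sketch states this explicitly: ``directly recovering all the remaining nodes \ldots\ may take time more than $\poly(K\log n)$, simply because we may have more than $\poly(K\log n)$ such nodes.''

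The paper's fix is exactly to exploit this periodic structure rather than to bound the gap length. It partitions $[1..n]$ into alternating \emph{short} regions (total length $\poly(K\log n)$, handled with ordinary suffix trees) and \emph{long} regions, and proves that inside every long region both $s$ and $t$ are periodic with the same period of length at most $2N$. It then builds a small data structure (period length plus a suffix tree on two periods) per long region so that each $\LCP$ query can be answered by a constant number of suffix-tree lookups plus an arithmetic jump across the periodic part, and finally runs the $O(K^2)$-query Landau--Myers--Schmidt algorithm against this structure. Your proposal would be correct if you replaced the erroneous $O(\rho N)$ gap-length bound by this periodicity argument and the accompanying $\LCP$ data structure.
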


\begin{proof}(sketch)
By Corollary~\ref{cor:walk}, we know that with probability $0.97$ we can extract $\rho$ alignments from sketches $sk(s)$ and $sk(t)$ satisfying the promise of Lemma~\ref{lem:sketch-main}. The construction of the optimal alignment between $s$ and $t$ basically follows from the arguments in the proof of Lemma~\ref{lem:sketch-main}.  We first compute the set of common edges $\I$ of the $\rho$ effective alignments, which can be done in $\poly(K \log n)$ time by a $\rho$-way merging.  Recall that all the edges in $\I$ will be in the optimal alignment although we may not know the values of their adjacent nodes. We then try to recover the rest of the nodes of $s$ and $t$ that are not adjacent to any edges in $\I$ (call them the {\em remaining} nodes), and match them in the optimal way.  The main challenge is that directly recovering all the remaining nodes using the argument in the proof of Lemma~\ref{lem:sketch-main} may take time more than $\poly(K \log n)$, simply because we may have more than $\poly(K \log n)$ such nodes.  The key observation to achieve the claimed $\poly(K \log n)$ decoding time is that most of the remaining nodes (except $\poly(K \log n)$ ones) form at most $\poly(K \log n)$ periodic substrings with periods of lengths at most $\poly(K \log n)$. We thus can make use of suffix trees and longest common prefix queries to speed up the computation of the optimal alignment for the nodes in those periodic substrings. The full proof is given in Appendix~\ref{appendix:proof_decode}.
\end{proof}

\subsection{Proof of Lemma~\ref{lem:anchor}}
\label{sec:anchor}

\paragraph{Proof Idea.}  We say an alignment $\O$ passes (or goes through) a pair $(u,v)$ if $(u,v)$ is an edge in $\O$.  We choose a particular optimal alignment $\O$ we call the {\em greedy} alignment, and show that $\O$ passes all anchors with high probability.  The high level idea is that suppose on the contrary that $\O$ does not pass an anchor $(u,v)$, then we can find a matching $\M$ in the left neighborhood of $(u,v)$ which may ``mislead'' a random walk, that is, with a non-trivial probability the random walk will ``follow'' $\M$ and consequently miss $(u,v)$. We then have that with high probability at least one of the $\rho$ walks will miss $(u,v)$, which means that $(u,v)$ is not an anchor. A contradiction.

We now give some ideas on how to construct $\M$ and then show that it will mislead a random walk.  Define the left neighborhood of $(u,v)$ that we are interested to be $(s[u-z..u], t[v-z..v])$ such that $s[u] = t[v], \ldots, s[u-z] = t[v-z]$  but $s[u-z-1] \neq t[v-z-1]$.  By exploring the properties of the greedy alignment $\O$, we can find a matching $\M \subseteq \O$ in the left neighborhood of $(u,v)$ consisting of a set of clusters (consecutive edges), each of which is periodic\footnote{That is, the two substrings of the cluster in $s$ and $t$ are both periodic, and of the same period length.} with a small period. Moreover, there  are a small number of singleton nodes between those clusters. Our key observation is that once a random walk $\W$ enters a cluster, its shift (i.e., $\abs{p - q}$ for a walk state $(p, q)$) will be changed by at most the length of the period of that cluster.  We thus can show that the maximum shift change of $\W$ after entering the left neighborhood of $(u,v)$ can be bounded by roughly $k^2\ (k = ed(s,t))$. Now if $\O$ does not pass $(u,v)$, then we can show that with a constant probability the first state $(p,q)$ of $\W$ after entering the neighborhood does not align with $(u,v)$ (i.e. $\abs{p - q} \neq \abs{u - v}$). And then by Lemma~\ref{lem:gambler}, we have with probability at least (roughly) $1/(100k^2)$ that during the whole walk in the neighborhood, the shift of $\W$ will not be equal to $\abs{u-v}$, and consequently $\W$ will miss $(u,v)$.  

\vspace{-2mm}
\paragraph{\bf The Full Proof.}
We first consider anchors $(u,v)$ with $\abs{u -v} \le c_s k$ where $c_s$ is a large enough constant.

Define the {\em greedy} alignment between $s$ and $t$, denoted by $\O$, as follows.  Among all optimal alignments between $s$ and $t$, $\O$ is the one that {\em minimizes} the sum of indices of all matched nodes in $s$ and $t$, breaking ties arbitrarily.  We will show that with high probability this particular optimal alignment passes all anchors.  As mentioned earlier, an alignment between the two strings naturally corresponds to a non-crossing matching between the characters (also called nodes) of the two strings. Thus for convenience we also use the notation $\O$ for the corresponding matching between $s$ and $t$, and let $\abs{\O}$ be the number of edges in the matching $\O$.

Consider a particular pair $(u,v)$.  Let $z \ge 0$ be a value such that 
\begin{equation*}
s[u] = t[v], \ldots, s[u-z] = t[v-z], \text{ but } s[u-z-1] \neq t[v-z-1].
\end{equation*}
We can assume that either $u-z-1 \ge 1$ or $v-z-1 \ge 1$ since otherwise if $u-z = v-z = 1$ then $\O$ will just follow $(1,1), (2,2), \ldots, (u,v)$, and consequently pass $(u,v)$ with certainty.  From now on we assume that $\O$ does not pass $(u,v)$ since otherwise we are done. We will show that a random walk starting at $(1,1)$ will miss $(u,v)$ with some non-trivial probability. 

\begin{figure}[t]
\centering
\includegraphics[height = 1.4in]{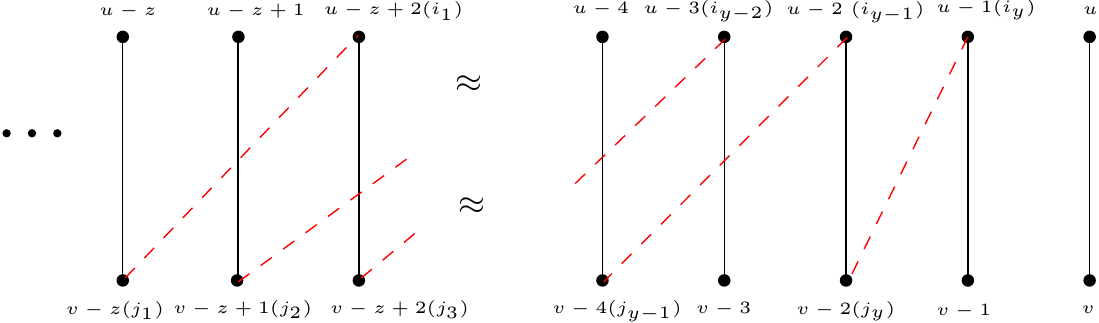}
\caption{The matching $\M$ (red dashed edges) in the greedy optimal alignment $\O$. Black solid edges are those in $\S$.}
\label{fig:matching}
\end{figure}

Let $\S = \{(u-z, v-z), \ldots, (u,v)\}$ be a matching in the left neighborhood of $(u,v)$. We call $s[u-z .. u]$ and $t[v-z..v]$ the {\em stable} zone.  Let
$$
\M = \{(i,j) \in \O\ |\ (u-z \le i \le u) \wedge (v-z \le j \le v)\}
$$
be a subset of the greedy matching of $\O$ in the stable zone.  Let $(i_1, j_1), \ldots, (i_y, j_y)$ be the set of edges in $\M$,  sorted increasingly according to the indices of the ends of the edges.  For a pair $(i,j)\ (i, j \in [n])$, let $d(i,j) = (v-j) - (u-i)$ be the (signed) {\em shift} of $(i,j)$ from $(u,v)$. W.l.o.g.\ we can assume that $d(i_y, j_y) > 0$, since the case $d(i_y, j_y) < 0$ is symmetric, and if $d(i_y, j_y) = 0$ then the greedy matching $\O$ must include $(i_y, j_y), (i_y+1, j_y+1), \ldots$, and consequently include $(u,v)$; again we are done. We refer readers to Figure~\ref{fig:matching} for an illustration of $\M$.  We have the following observations.

\begin{figure}[t]
\centering
\includegraphics[height = 1.1in]{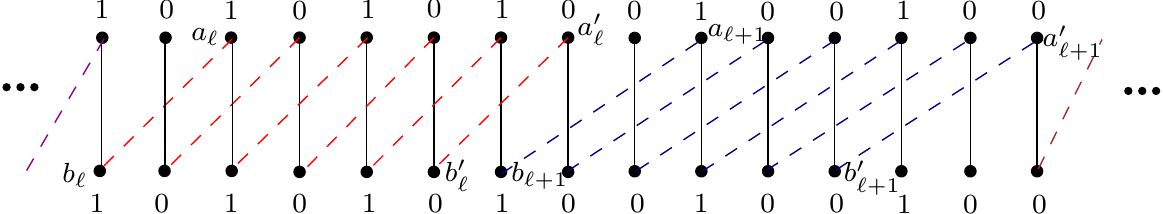}
\caption{(Color dashed) edge-clusters in $\M$. Black solid edges are those in $\S$.}
\label{fig:step2}
\end{figure}

\begin{claim}
\label{cla:cluster}

Given that $d(i_y, j_y) > 0$, the matchings $\M$ and $\O$ have the following properties.
\begin{enumerate}
\item For any edge $(i,j) \in \M, \ d(i, j) >0$.

\item For any edge $(i,j) \in \O$, $\abs{d(i,j)} \le c_d k$ for a large enough constant $c_d$.

\item The edges in $\M$ form $m \le k+1$ clusters, where within each cluster the shifts of all matching edges are the same. Let $\C_1, \ldots, \C_m$ denote the $m$ clusters from left to right.   Let $(a_\ell, b_\ell)$ and $(a'_\ell, b'_\ell)$ be the first and last edges in $\C_\ell$. (see Figure~\ref{fig:step2} for an illustration.) We have that each $\C_\ell$ contains two periodic strings in $s$ and $t$ with the form $\varphi_\ell \cdots \varphi_\ell \varphi'_\ell$ where $\varphi_\ell$ is the period with $\abs{\varphi_\ell} \le c_d k$, and $\varphi'_\ell$ is a prefix of $\varphi_\ell$ (can be $\emptyset$).   Call $\varphi_\ell$ the period of $\C_\ell$.
	
\item The number of unmatched nodes between $s$ and $t$ in $\O$ is at most $2k$.  
\end{enumerate}
\end{claim}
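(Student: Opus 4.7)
I would prove the four parts in the order $4$, $2$, $3$, $1$, using only that $\O$ is optimal with at most $k$ edits, that it is greedy (minimizes the sum of matched-node indices among optimal alignments), and the stable-zone identity $s[u-z..u]=t[v-z..v]$.

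Part~4 follows by accounting: each unmatched character of $s$ corresponds to one deletion and each of $t$ to one insertion, so the combined count is at most $ed(s,t)\le k\le 2k$. Part~2 is the bandwidth observation: for any edge $(i,j)\in\O$, letting $m'$ count the $\O$-edges preceding $(i,j)$, the prefix $s[1..i-1]$ has $i-1-m'$ deletions and $t[1..j-1]$ has $j-1-m'$ insertions, so $|i-j|\le(i-1-m')+(j-1-m')\le k$; combined with the standing hypothesis $|u-v|\le c_s k$, this gives $|d(i,j)|\le(c_s+1)k$, so $c_d:=c_s+1$ suffices.

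For Part~3(a), between adjacent clusters $\C_\ell,\C_{\ell+1}$ with shifts $d_\ell\neq d_{\ell+1}$, let $A_\ell,B_\ell$ count the unmatched $s$- and $t$-characters in the gap; expanding definitions gives $d_{\ell+1}-d_\ell=B_\ell-A_\ell$, hence $|d_{\ell+1}-d_\ell|\le A_\ell+B_\ell$, with each transition contributing at least $1$ to the left side. Summing over the $m-1$ transitions,
\[m-1\;\le\;\sum_\ell|d_{\ell+1}-d_\ell|\;\le\;\sum_\ell(A_\ell+B_\ell)\;\le\;k,\]
so $m\le k+1$. For Part~3(b), the consecutive match-edges in $\C_\ell$ are $(a_\ell+\alpha,b_\ell+\alpha)$, so $s[a_\ell+\alpha]=t[b_\ell+\alpha]$; the stable-zone identity further gives $s[a_\ell+\alpha]=t[a_\ell+\alpha+(v-u)]=t[b_\ell+\alpha+d_\ell]$, hence $t[j]=t[j+d_\ell]$ for all $j\in[b_\ell..b'_\ell]$, witnessing period $d_\ell$ in $t[b_\ell..b'_\ell+d_\ell]$ (and symmetrically in $s$). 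Parts~1 and~2 then pin $d_\ell\in[1,c_d k]$, so the cluster has the advertised form $\varphi_\ell\cdots\varphi_\ell\varphi'_\ell$.

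The main obstacle is Part~1, which I would prove by contradiction. Suppose some $(i^*,j^*)\in\M$ has $d(i^*,j^*)\le 0$; I would exhibit an optimal $\O'$ passing $(u,v)$ with strictly smaller index sum than $\O$, contradicting greediness. In the sub-case $d(i^*,j^*)=0$, the stable-zone identity makes $(i^*,j^*),(i^*+1,j^*+1),\ldots,(u,v)$ a chain of valid match-edges; I would form $\O'$ from $\O$ by removing every edge touching the rectangle $R=[i^*..u]\times[j^*..v]$ and splicing in this diagonal. Non-crossing with $(i^*,j^*)$ forces the removed edges to lie inside $R$ or to straddle the boundary opposite to $(i^*,j^*)$; since any non-crossing sub-matching of $R$ uses at most $u-i^*+1$ edges---exactly the count in the diagonal---the edit distance does not increase, and re-pairing of all rectangle characters with partners inside $R$ strictly decreases the index sum. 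The sub-case $d(i^*,j^*)<0$ reduces to the above: because $d(i_y,j_y)>0$, the shift sequence along the edges of $\M$ must move from non-positive to positive, so either some edge has shift~$0$ (handled above) or two adjacent edges straddle~$0$, contributing $\ge 2$ units of shift change that a stable-zone shift-$0$ diagonal can save, contradicting the optimality of $\O$. The delicate part throughout is the bookkeeping of boundary-straddling edges, which the diagonal's full saturation of $R$ always dominates.
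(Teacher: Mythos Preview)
Your treatment of Parts~2, 3, and~4 matches the paper's argument closely. Two small corrections: in Part~4 you should account for substitutions (each contributes two unmatched nodes, one on each side), which is why the bound is $2k$ rather than~$k$; and in Part~3(a) your inequality $\sum_\ell(A_\ell+B_\ell)\le k$ should read $\le 2k$ for the same reason, yielding only $m\le 2k+1$. The paper gets $m\le k+1$ more directly by noting that each gap between consecutive $\M$-clusters contains at least one edit of $\O$, and there are at most $k$ edits.

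For Part~1, your approach diverges from the paper's and has a genuine gap in the $d(i^*,j^*)<0$ sub-case. You argue that when the shift sequence jumps from $d_x<0$ to $d_{x+1}>0$ at adjacent $\M$-edges, ``a stable-zone shift-$0$ diagonal can save'' enough to contradict optimality---but you do not construct the competing alignment or verify it improves on $\O$. In fact, if $j_{x+1}=j_x+1$ there is no room to insert any $\S$-edge in the gap without crossing, so the local-insertion idea fails as stated; one must instead modify edges outside the gap, and your proposal does not say how.

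The paper avoids your case split entirely with a single uniform construction: take the \emph{last} index $x$ with $d(i_x,j_x)\le 0$, and replace the whole tail $(i_{x+1},j_{x+1}),\ldots,(i_y,j_y)$ (all of positive shift) by the same number of consecutive $\S$-edges $(i_{x+1},i_{x+1}+d),(i_{x+1}+1,i_{x+1}+2+d),\ldots$ starting at $i_{x+1}$. Non-crossing with $(i_x,j_x)$ follows from $j_{x+1}>j_x$ together with $d(i_{x+1},j_{x+1})>0\ge d(i_x,j_x)$; the new $i$-coordinates are pointwise $\le$ the old ones and the new $j$-coordinates strictly smaller (shift~$0$ versus shift~$>0$), so the index sum strictly drops while $|\O'|=|\O|$, contradicting greediness. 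This subsumes both your sub-cases in one move; I would adopt it in place of your $d<0$ argument.
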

  
\begin{proof}
Let $d = v-u$.  For Item $1$, suppose this is not the case, in other words, there exists some $x \in [y-1]$ such that $d(i_{x}, j_{x}) \le 0$ while $d(i_{x'}, j_{x'}) > 0$ for $\forall x' \in (x, y]$.  We can replace the set of matching edges $(i_{x+1}, j_{x+1}), \ldots, (i_y, j_y)$ in $\O$ with edges $(i_{x+1}, i_{x+1}+ d), \ldots, (i_{x+1}+y-(x+1), i_{x+1}+y-(x+1)+d)$ in $\S$, getting a new matching $\O'$.  It is easy to see that after such a replacement we either have $\abs{\O'} > \abs{\O}$, or $\abs{\O'} = \abs{\O}$ but $\O'$ has a smaller sum of indices of all matching nodes in $s$ and $t$ compared with $\O$; both cases contradict the fact that $\O$ is a greedy matching.

Item $2$ is an application of the triangle inequality on $\abs{i - j} \le k$ and $\abs{u-v} \le c_s k$, where the former is because $(i, j)$ is in an optimal alignment, and the latter is due to the type of pair $(u,v)$ that we are currently considering.

For Item $3$, we have at most $k+1$ clusters since $ed(s,t) = k$.  For any $(i,j) \in \C_\ell$, we have $s[i] = t[j]$; and we also have $s[i] = t[i+d]$ since $(i, i+d) \in \S$. We thus have $t[j] = t[i+d]$ by transitivity. Then $\abs{\varphi_\ell} = (i+d) - j = d(i,j) \le c_d k$.

Item $4$ is obvious since $ed(s,t) \le k$ (the constant $2$ is due to the substitutions). 
\end{proof}

We now consider a random walk $\W$.  Since $s[u-z-1] \neq t[v-z-1]$ by the definition of $z$, we have with probability at least $2/3$ that the random walk must pass $(u-z, v-z-\alpha)$ or $(u-z-\alpha, v-z)$ where $\alpha \ge 1$.  
W.l.o.g.\ we can assume the former, since in the latter case we can just consider a mirroring matching $\M'$ of $\M$ by ``flipping'' all the edges in $\M$, that is, for any edge $(u - \beta, v - \gamma)$ in $\M$ we add a mirroring edge $(u - \gamma, v - \beta)$ to $\M'$. Note that $s[u - \gamma]$ and $t[v - \beta]$ can indeed be matched since if $(u - \beta, v - \gamma)$ is an edge in $\M$ then $s[u - \beta] = t[v - \gamma]$, and moreover we have $s[u - \beta] = t[v - \beta]$ and $s[u - \gamma] = t[v - \gamma]$ since both $(u-\beta, v-\beta)$ and $(u-\gamma, v-\gamma)$ are edges in $\S$, and then  by transitivity we must have $s[u - \gamma] = t[v - \beta]$.

Let $d(p,q) = (u-p) - (v-q)$ denote the (signed) {\em shift} of a walk state $(p, q)$ from $(u,v)$.   We consider the first state $(p_0, q_0)$ and the last state $(p_1, q_1)$ of $\W$ that fall into the stable zone (i.e., $u-z \le p_0, p_1 \le u$ and $v-z \le q_0, q_1 \le v$).  We will consider the walk states between (inclusive)  $(p_0, q_0)$ and $(p_1, q_1)$, and call $s[p_0..p_1]$ and $t[q_0..q_1]$ the {\em confusing} zone. Note that if we can show that the shift of $\W$ is never equal to $0$ in this confusing zone, then $\W$ will miss $(u,v)$.  We will show that this happens with some non-trivial probability if $\O$ does not pass $(u,v)$.

We first show some properties of the two boundaries $(p_0, q_0)$ and $(p_1, q_1)$.  For convenience we define the number of nodes in $s[x..y]$ to be $0$ if $y < x$.
\begin{claim}
\label{cla:boundary}
Considering a random walk $\W$ starting from $(u-z, v-z-\alpha)$ with $\alpha \ge 1$, we have:
\begin{enumerate}
\item With probability at least $1/3$, $d(p_0, q_0) \ge 1$ and $q_0 = v - z$.    

\item The total number of nodes in $s[p_0..a_1]$ and $t[q_0..b_1]$ is no more than $2c_d k$, and the total number of nodes in $s[a_m..p_1]$ and $t[b_m..q_1]$ is no more than $2c_d k$.
\end{enumerate}
\end{claim}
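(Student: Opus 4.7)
I would handle the two items separately, with Item~1 being a short one-step probabilistic argument conditioned on the ``former'' branch in which $\W$ passes $(u-z, v-z-\alpha)$, and Item~2 following from the structural facts about the greedy optimal alignment $\O$ already established in Claim~\ref{cla:cluster}.

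For Item~1, the hypothesis that $\W$ passes $(u-z, v-z-\alpha)$ with $\alpha \ge 1$ says the $s$-counter has reached the left stable-zone boundary $u-z$ while the $t$-counter still lags at $v-z-\alpha$. Because the $t$-counter increments by at most one per step, the first state $(p_0,q_0)$ of $\W$ that falls in the stable zone must satisfy $q_0 = v-z$ deterministically, and writing $p_0 = u-z+\beta$ for $\beta \ge 0$ gives $\abs{d(p_0,q_0)} = \beta$. So it suffices to show $\beta \ge 1$ with probability at least $1/3$. I would do this by examining the very first non-``neither'' step of $\W$ out of $(u-z, v-z-\alpha)$: if $s[u-z] = t[v-z-\alpha]$ then the only non-neither outcome is ``both increment'', so $p$ is bumped with conditional probability $1$; if $s[u-z] \neq t[v-z-\alpha]$ then, conditional on at least one counter moving, the three outcomes ``$p$ only'', ``$q$ only'', ``both'' are equally likely, so $p$ moves with probability $2/3$. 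In both cases $\beta \ge 1$ with probability at least $2/3 \ge 1/3$, and $p$ cannot retreat afterwards.

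For Item~2, I would bound how far the leftmost edge $(a_1,b_1)$ of $\M = \O \cap (\text{stable zone})$ can sit from $(p_0,q_0)$ by sorting the $s$-indices $i \in [u-z, a_1-1]$ according to their role in $\O$. By minimality of $a_1$, each such $i$ is either a singleton in $\O$ or matched to some $t$-index $j \notin [v-z, v]$. Claim~\ref{cla:cluster}(4) bounds the singletons globally by $2k$. For the non-singleton case, the shift bound $\abs{d(i,j)} \le c_d k$ from Claim~\ref{cla:cluster}(2), combined with $i \in [u-z, u]$, forces $j$ into a window of $O(k)$ integers immediately outside $[v-z, v]$, so the number of such edges is $O(k)$. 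This gives $a_1 - (u-z) = O(k)$, and since $p_0 \ge u-z$ we obtain $a_1 - p_0 = O(k)$. A symmetric argument on the $t$-side yields $b_1 - q_0 = O(k)$, and the total count is absorbed into $2 c_d k$ by taking $c_d$ large enough. The bound on $s[a_m..p_1]$ and $t[b_m..q_1]$ follows from running the mirror of this counting argument at the right end of the stable zone, using items (1), (2), (4) of Claim~\ref{cla:cluster}.

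The main obstacle is Item~2, specifically ruling out all the ways $\O$ could place edges inside the $s$-stable-zone that do not land in $\M$: besides singletons, $\O$ may match $s$-indices to $t$-indices below $v-z$ or above $v$. The key point is that the shift bound $\abs{d(i,j)} \le c_d k$ caps the number of such out-of-zone matches by $O(k)$ independently of $z$, so all three buckets (singletons, out-to-the-left matches, out-to-the-right matches) each contribute only $O(k)$ and the total distance from $(p_0,q_0)$ to $(a_1,b_1)$ stays $O(k)$.
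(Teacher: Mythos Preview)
Your Item~1 argument is essentially the paper's, phrased a bit more cleanly: you note that once the $s$-counter leaves $u-z$ it cannot return, so it suffices to look at the very first non-trivial step. You even extract probability $2/3$ where the paper only claims $1/3$ (the paper singles out the ``$p$ only'' outcome, while you also count ``both'').

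Your Item~2 argument is correct but takes a genuinely different route. The paper's proof looks at the single edge $(a'_0,b'_0)\in\O$ immediately preceding $(a_1,b_1)$ and uses the \emph{greedy} minimality of $\O$ a second time (beyond Claim~\ref{cla:cluster}) to pin down that $d(a'_0,b_1)=0$, which forces $a'_0$ to sit in the stable zone at a specific location; from there four short inequalities finish the bound. You instead bucket the indices $i\in[u-z,a_1-1]$ into singletons (at most $2k$ by Claim~\ref{cla:cluster}(4)) and indices matched outside the $t$-stable-zone, then use the shift bound $|d(i,j)|\le c_d k$ from Claim~\ref{cla:cluster}(2) together with $j<v-z$ (forced by non-crossing and the minimality of $a_1$) to trap all such $j$ in the window $[v-z-c_d k,\,v-z)$; distinctness of matched $j$'s then caps this bucket at $c_d k$. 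This is more elementary because it avoids re-invoking the greedy property of $\O$ and runs entirely off the structural items already packaged in Claim~\ref{cla:cluster}. The price is a slightly looser constant: your count comes to roughly $(2+c_d)k$ per side rather than the paper's $c_d k + 2k$ total, so ``absorbed into $2c_d k$ by taking $c_d$ large enough'' is not literally right (since $c_d$ is already fixed by Claim~\ref{cla:cluster}(2)); but a constant like $3c_d k$ or $4c_d k$ would do equally well in the downstream argument, so this is cosmetic.
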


\begin{proof}
For Item $1$, since $\W$ starts from $(u-z, v-z-\alpha)$, if $s[u-z] = t[v-z-\alpha]$, then the next walk state of $\W$ will be $((u-z)+1, (v-z-\alpha)+1)$, and after a few more walk steps, the first walk state $(p_0, q_0)$ that fully falls into the stable zone must have the property that $d(p_0, q_0) \ge 1$ and $q_0 = v - z$.  Otherwise if $s[u-z] \neq t[v-z-\alpha]$, then with probability $1/3$, after the first progress step the state of $\W$ will be $((u-z)+1, (v-z-\alpha))$, and then the same argument apply. 

To show the first inequality of Item $2$, let $(a'_0, b'_0)$ be first edge in $\O$ to the left of $(a_1, b_1)$.  It must be the case that (1) $d(a'_0, b_1) = 0$ (thus $a'_0$ is in the stable zone) and (2) $b'_0 < v - z$.  The former is true since otherwise we can replace $(a_1, b_1)$ in $\O$ with $(a'_0, b_1)$ to obtain another alignment $\O'$ with a smaller sum of indices of all matching nodes in $s$ and $t$.  The latter is true since $(a_1, b_1)$ is the first edge in $\O$ in the stable zone (otherwise if $b'_0 \ge v - z$ then $(a'_0, b'_0)$ will be the first edge).  Now we have 
\begin{enumerate}
\vspace{-1mm}
\item $0 < d(a'_0, b'_0) \le d(a'_0, b_1)  +  \abs{b_1 - b'_0} \le 0 + k = k$, where the second term $k$ counts the number of unmatched nodes in $\O$ between $t[b'_0]$ and $t[b_1]$.

\vspace{-1mm}
\item $0 < d(a_1, b_1) \le c_d k$, by Item $2$ of Claim~\ref{cla:cluster}.

\vspace{-1mm}
\item $\abs{a'_0 - (u-z)} \le \abs{b_1 - b'_0} \le k$. 

\vspace{-1mm}
\item $b'_0 < v - z$.
\end{enumerate}
\vspace{-1mm}
These inequalities imply that the total number of nodes in $s[u-z..a_1]$ and $t[v-z..b_1]$ is at most $c_d k + 2k \le 2c_d k$, and consequently the total number of nodes in $s[p_0..a_1]$ and $t[q_0..b_1]$ is at most $2c_d k$ since the walk state $(p_0, q_0)$ is inside the stable zone. A similar proof applies to the second inequality.
\end{proof}

We now bound the maximum change of the shift of the random walk $\W$ in the confusing zone. The key observation is that when the random walk travels through the cluster $\C_\ell$ (i.e., the set of walk states $(p,q)$ with $a_\ell \le p \le a'_\ell$ and $b_\ell \le q \le b'_\ell$), the maximum change of $\W$'s shift is upper bounded by the length of $\C_\ell$'s period $\varphi_\ell$, since the shift will stop changing as soon as the walk reaches a state $(p, q)$ where the difference between the shift of $(p, q)$ and the shift of the first edge of $\C_\ell$ (i.e., $(a_\ell, b_\ell)$) is a multiple of $\abs{\varphi_\ell}$.  We now try to bound the maximum change of $\W$'s shift in the rest of the walk steps in the confusing zone.   By Item $4$ of Claim~\ref{cla:cluster}, we have at most $2k$ nodes in the gaps of the $m$ clusters $\C_1, \ldots, \C_m$; call these nodes $G_1$. And by Item $2$ of Claim~\ref{cla:boundary}, we have at most $4c_d k$ nodes in the gaps between $(p_0,q_0)$ and $\C_1$ and between $\C_m$ and $(p_1,q_1)$ (i.e., in the two ends of the confusing zone); call these nodes $G_2$.  By a Chernoff bound, with probability $1 - 1/k^3$, $\W$ will have at most $(\abs{G_1} + \abs{G_2}) \cdot 4\log k \le 20c_d k \log k$ walk states $(p,q)$ where $p \in G_1 \cup G_2$ or $q \in G_1 \cup G_2$. Therefore with probability $1-1/k^3$, the change of shift of $\W$ inside the confusing zone but outside the $m$ clusters is bounded by $20c_d k \log k$. Summing up, with probability $1-1/k^3$, the maximum change of the shift made to $\W$ in the confusing zone (before it misses $(u,v)$) is bounded by 
\begin{equation}
\label{eq:z-1}
\textstyle 20c_d k \log k + \sum_{\ell \in [m]} \abs{\varphi_\ell} < 2c_d k^2.
\end{equation}

Recall that with probability $2/3 \cdot 1/3 = 2/9$, the initial shift of $\W$ in the confusing zone is $d(p_0, q_0) \ge 1$ (Item $1$ of Claim~\ref{cla:boundary}), conditioned on which, by Lemma~\ref{lem:gambler} we have that with probability $1/(4c_d k^2)$ the shift of $\W$ reaches $2c_d k^2$ before $0$. Then by (\ref{eq:z-1}) and a union bound, we conclude that $\W$ will miss $(u,v)$ with probability at least $1 - 1/k^3$.  Thus (if $\O$ does not pass $(u, v)$ then) $\W$ will miss $(u,v)$ with probability at least 
\begin{equation*}
\label{eq:miss-prob}
2/9 \cdot {1}/{(4c_d k^2)} - {1}/{k^3} \ge 1/(20c_d k^2). 
\end{equation*}
Thus if we have $\rho = c_\rho K^2 \log n$ (for a large enough constant $c_\rho$) random walks, the probability that at least one of the $\rho$ walks will miss $(u,v)$ is at least 
$$1 - \left(1 - 1/(20c_d k^2) \right)^{c_\rho K^2 \log n} \ge 1 - 1/n^4.$$   
In other words, if all the $\rho$ random walks pass $(u,v)$ (so that $(u,v)$ is an anchor), then with probability $(1 - 1/n^4)$ the pair $(u,v)$ is included in  $\O$.  By a union bound on at most $n^2$ possible pairs $(u,v)$, we conclude that with probability $(1 - 1/n^2)$ the greedy matching  $\O$ goes through all anchors $(u,v)$ for which $\abs{u-v} \le c_s k$.
\smallskip

We now consider those pairs $(u,v)$ for which $\abs{u-v} > c_s k$.  We will show that with very high probability at least one of the $\rho$ random walks will miss $(u,v)$, and consequently $(u,v)$ is not an anchor.  To see this, consider a random walk $\W$. By Lemma~\ref{lem:CGK} we know that with probability at least $0.9$, the number of progress steps in $\W$ is at most $c_n k^2$ for a large enough constant $c_n$, conditioned on which, by Lemma~\ref{lem:shift}, we have that with probability $0.9$, the shifts of $\W$ will never be more than
$c'_s k$ for a large enough constant $c'_s$. Thus if we set $c_s = 2c'_s$, then with probability $0.9 \cdot 0.9 > 0.8$, $\W$ will miss the pair $(u,v)$. Therefore the probability that at least one of the $\rho$ random walks will miss $(u,v)$ is at least $1 - (1 - 0.8)^{\rho} \ge 1 - 1/2^{\Omega(K^2 \log n)}$. By a union bound on at most $n^2$ such pairs, we conclude that with probability $1 - 1/2^{\Omega(K^2 \log n)}$ (thus the failure probability is negligible), all pairs $(u,v)$ with $\abs{u -v} > c_s k$ will {\em not} be anchors.  This concludes the proof of the lemma.

\section{Streaming}
\label{sec:stream}

In this section we give algorithms for both the simultaneous streaming model in which we can scan the two strings $s$ and $t$ simultaneously in the coordinated fashion, and the standard streaming model in which we can only scan them one by one in one pass.

\subsection{Simultaneous Streaming}
\label{sec:sim-stream}
We show the following theorem for the simultaneous streaming model.
\begin{theorem}
\label{thm:sim-stream}

There exists an algorithm in the simultaneous streaming model that computes the edit distance using $O(K)$ words of space and $O(n + K^2)$ processing time, where $n$ is the input string size and $K$ is the distance upper bound.
The algorithm can be extended to compute all edit operations at the expense of increasing the space to $O(K^2)$ words. 
\end{theorem}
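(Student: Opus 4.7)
The plan is to adapt the classical Landau--Vishkin diagonal dynamic programming to the coordinated streaming model. For each diagonal $d\in[-K,K]$ and each edit budget $e\in[0,K]$, let $f_e[d]$ denote the largest $i$ such that $ed(s[1..i],\,t[1..i-d])\le e$. The standard recurrence is
\[
f_e[d] \;=\; \operatorname{extend}\!\Big(\max\bigl(f_{e-1}[d]+1,\;f_{e-1}[d-1]+1,\;f_{e-1}[d+1]\bigr),\;d\Big),
\]
where $\operatorname{extend}(i,d)$ keeps incrementing $i$ as long as $s[i+1]=t[i+1-d]$. Then $ed(s,t)$ is the smallest $e$ with $f_e[0]=n$. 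Since only the current row and the previous row of the table are needed to propagate the recurrence, the DP itself fits in $O(K)$ words; if additionally all $O(K^2)$ back-pointers are retained, the sequence of edit operations can be reconstructed, giving the $O(K^2)$-word extension.

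First I would establish the $O(n+K^2)$ time bound by the usual telescoping: every successful character match advances the frontier $\max_d(f_e[d])$ by one, the frontier lies in $[0,n]$, so the total number of matching comparisons across all $(e,d)$ is $O(n)$; on top of that there are $O(K^2)$ mismatch events and $O(K^2)$ $\max$-operations for the outer recurrence.

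The heart of the argument is implementing $\operatorname{extend}$ without random access to $s$ and $t$. The key geometric fact is that all frontier positions $f_e[d]$ in the live row satisfy $|f_e[d]-f_e[d']|\le |d-d'|+e=O(K)$, hence at every moment the $s$-positions (and separately the $t$-positions) about to be inspected by the various pending $\operatorname{extend}$ calls lie within an $O(K)$-wide window. I would therefore maintain two $O(K)$-size sliding buffers of the most recently read characters from $s$ and from $t$, and schedule pending $\operatorname{extend}$ calls in a BFS order over $(e,d)$ so that whenever the $s$-pointer (respectively $t$-pointer) is advanced by one, the freshly read character is used to service every diagonal that currently needs it. Because both pointers then advance monotonically and each character is consumed $O(1)$ times, the algorithm respects the coordinated streaming model while keeping only $O(K)$ words (the two DP rows plus the two buffers plus a handful of pointers).

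The main obstacle I expect is formalising the scheduling so that the two pointers really never need to rewind. The clean invariant to prove is that after processing level $e$, the $s$-pointer may be safely retired up to $\min_{d}\{f_e[d]\}-K$ and the $t$-pointer up to $\min_{d}\{f_e[d]-d\}-K$, because any future $\operatorname{extend}$ at level $e'>e$ starts at a position no earlier than the current $\min_d f_e[d]$, while character look-ups are always at most $K$ away from the starting position of an extend. Once this invariant is in place, correctness follows from the standard Landau--Vishkin analysis, and the stated space and time bounds fall out of the telescoping argument; adding the $O(K^2)$-word back-pointer table yields the full edit script, completing the proof of Theorem~\ref{thm:sim-stream}.
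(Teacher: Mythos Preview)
Your overall framework---Landau--Vishkin diagonal DP over the $2K+1$ central diagonals---is the right one and is what the paper uses. But two load-bearing claims in your implementation fail.

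First, the ``key geometric fact'' $|f_e[d]-f_e[d']|\le |d-d'|+e$ is false. Take $s=0^n$, $t=1\cdot 0^{n-1}$: then $f_1[0]=n$ (one substitution at position $1$, then match to the end) while $f_1[1]=1$. Thus within a single level the frontiers can sit $\Theta(n)$ apart, and an $O(K)$-size sliding buffer cannot serve all pending extends; e.g.\ computing $f_2[2]$ requires comparing $s[3]$ with $t[1]$ after the pointers have already been driven to $n$ by the level-$1$ extend on diagonal~$0$. Your retirement invariant based on $\min_d f_e[d]-K$ does not help, because $\min_d f_e[d]$ can stay $O(1)$ while $\max_d f_e[d]=n$, so nothing can ever be retired.

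Second, the telescoping time argument is also wrong: a successful match on diagonal $d$ advances $f_e[d]$, not the global $\max_d f_e[d]$. Character-by-character extending costs $\Theta(nK)$ comparisons in the worst case; the $O(n+K^2)$ bound requires constant-time longest-common-extension queries, which you have not provided.

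The paper repairs both issues with a single device you are missing: it runs in $n/K$ \emph{phases}. Each phase reads the next $O(K)$ characters of $s$ and $t$, builds a suffix tree on those $O(K)$ characters so that LCE queries inside the phase cost $O(1)$, and then extends every diagonal \emph{only up to the phase boundary}. The boundary resynchronises the $2K+1$ frontiers so they never diverge by more than $O(K)$, which is what makes $O(K)$ space sufficient; and since each diagonal is processed at most once per phase plus once per score increment, the total work is $(2K+1)\cdot O(n/K+K)=O(n+K^2)$.
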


\begin{figure}[t]
\centering
\includegraphics[height = 1.8in]{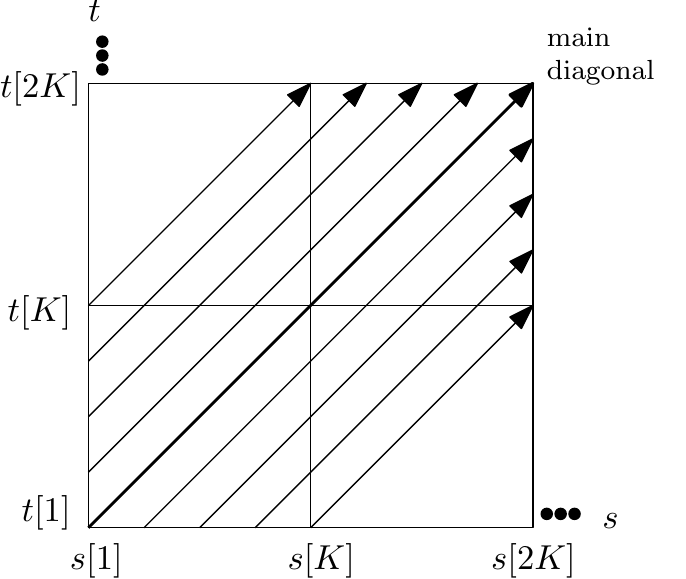}
\caption{Dynamic Programming in the Streaming Model}
\label{fig:DP}
\end{figure}

Our algorithm adopts some ideas from the $O(n + K^2)$ algorithm for computing edit distance in the RAM model~\cite{LMS98}.  The main idea is to perform the dynamic programming in a space efficient way along the $2K+1$ diagonals around the main diagonal in the alignment matrix, which is an $n \times n$ matrix where the $(i,j)$-th cell contains the $ed(s[1..i], t[1..j])$ (called the {\em score} of that cell). See Figure~\ref{fig:DP} for an illustration. 
It is easy to see that the scores in each diagonal from bottom-left to top-right are non-decreasing.  We will show that for computing the edit distance between $s$ and $t$ (or reporting ``error'' if $ed(s,t) > K$), we only need to store $O(1)$ scores in each of the $2K+1$ diagonals at any moment.

We will make use of the suffix tree which, once built, allows to compute the longest common prefix between suffixes of two strings in $O(1)$ time. The suffix tree  can be built in linear time \cite{weiner1973linear}.

\vspace{-2mm}

\paragraph{The Algorithm.} 
Our algorithm runs in $n/K$ phases.  At each phase $i \in [n/K]$ we run over each of the $2K+1$ diagonals towards top-right (see the arrows in Figure~\ref{fig:DP}), up to the first cell where it intersects the $(iK)$-th row  (counting bottom up) or $(iK)$-th column  (counting from left to right); see Figure~\ref{fig:DP}.  We call the $(iK)$-th row and $(iK)$-th column the {\em boundary} of phase $i$.  Each phase $i$ starts with a preprocessing step, in which we read two substrings $s_i = s[\min\{1, (i-2)K\}.. iK]$, $t_i = t[\min\{1, (i-2)K\}.. iK]$, and build a suffix tree which allows to answer the longest common prefix of suffixes of $s_i$ and $t_i$.  During the execution of the phase we try to maintain for each of the $2K+1$ diagonals the following information:
\begin{itemize}
\vspace{-1mm}
\item {\em Score of the diagonal}: the score in the highest cell reached in the diagonal, denoted by $L$.
\vspace{-1mm}
\item {\em Boundary flag}: a bit indicating whether the highest reached cell is on the boundary of the current phase.  Initialized to be $0$ at the beginning of each phase.
\vspace{-1mm}
\item {\em Last change}: The cell on the diagonal at which the score changes from $L-1$
to $L$. 
\vspace{-1mm}
\item {\em Second to the last change}: The cell on the diagonal at which the score changes from $L-2$
to $L-1$. 
\end{itemize}
\vspace{-1mm}
We also maintain $K+1$ lists (list $0$ to list $K$), each consisting of a subset of the $2K+1$ diagonals. More precisely, the $\ell$-th list contains all diagonals whose scores are equal to $\ell$. At the beginning of the first phase all diagonals are in list $0$.  At each phase we start by processing all diagonals in list $0$, then all diagonals in $1$, and so on.  Each time a diagonal in list $\ell$ is processed, it either moves to list $\ell+1$ if its score increases to $\ell+1$, or stays in list $\ell$ if the run of that diagonal hits the phase boundary and the score of the diagonal stays equal to $\ell$. When the score of a diagonal reaches $K+1$ we drop it from further consideration.  During the computation we have the following invariant.
\vspace{-1mm}
\begin{quote}
{\em Invariant: The scores of any two neighboring diagonals cannot differ by more than $1$}. 
\end{quote}

We now show how to update the scores of diagonals in each phase $i$. When we try to process a diagonal $D$ in list $\ell$, by the invariant the scores of its two neighboring diagonals must in $\{\ell-1, \ell, \ell+1\}$. If a neighboring diagonal has score $\ell-1$, then it must be the case that the run of that diagonal already hits the boundary of the current phase.  In this case by the invariant we know immediately that we can extend the run of $D$ to the boundary of the current phase, and the score of $D$ stays equal to $\ell$.  Otherwise the scores of the two neighboring diagonals are in $\{\ell, \ell+1\}$, in which case we extend the run of $D$ using the following information kept in the memory:
\begin{itemize}
\vspace{-1mm}
\item The longest common prefix of the two strings $s[p, iK]$ and $t[q, iK]$, where $(p, q)$ is the current cell reached by the run on $D$.  The information can be obtained in $O(1)$ time by querying the suffix tree. 
\vspace{-1mm}
\item The cell on the left neighboring diagonal at which the score changes from $\ell-1$ to $\ell$. 
\vspace{-1mm}
\item The cell on the right neighboring diagonal at which the score changes from $\ell-1$ to $\ell$. 
\end{itemize}
\vspace{-1mm}
Each of the three pieces of information will give a candidate cell on $D$ up to which the run can extend, and we then take the highest among the three candidates  (this is essentially the same as the taking-maximum step in the standard dynamic programming algorithm for computing edit distance). Now if the run hits the boundary in the middle towards the highest candidate, then we stop the run, set $D$'s boundary flag to $1$, and keep $D$'s score at value $\ell$. Otherwise we update $D$'s score to value $\ell+1$.
\vspace{-2mm}

\paragraph{The Analysis.}  In essence, our algorithm mimics the standard dynamic programming algorithm for computing edit distance (but in a space and time efficient way).   The correctness of our algorithm follows directly from that of the standard dynamic programming.  

For the running time, note that each of the $2K+1$ diagonals will be processed at most $O(n/K + K)$ times, which follows from the fact that at the end of each of the $n/K$ phases, either the run of the diagonal reaches the boundary of the current phase, or the score of the diagonal increments (recall that the diagonal will be dropped once its score reaches $K+1$).  Since each processing takes $O(1)$ time, the total running time is bounded by $O(n + K^2)$.  
For the space, it is easy to see that the algorithm only uses $O(K)$ words since we only store $O(1)$ words of information for each diagonal at any time.  

Finally, the algorithm can be easily modified to reconstruct the sequence of edit operations. We simply notice that the longest successive common prefixes on the same diagonal can be merged so that we do not need more than $K$ of them. We thus can keep the starting and ending positions of those common substrings in the memory, which costs $O(K^2)$ words. At the end of the algorithm we can reconstruct the optimal alignment backwards by looking at the starting and ending positions of those common substrings.

\subsection{Standard Streaming}
\label{sec:standard-stream}

Our algorithm in the standard streaming model follows directly from our sketching algorithm in Section~\ref{sec:sketch}, since the encoding phase in our sketching algorithm can be done in the one-pass streaming model: the CGK embeddings can be performed in the streaming model; the (rolling) hash signatures and lengths of blocks can be computed in the streaming fashion; and the redundancies can also be computed in the streaming fashion by Lemma~\ref{lem:ham}. We thus just need to first sketch string $s$ and keep the sketch in the memory, and then do the same thing for string $t$, and at the end perform the decoding using the two sketches stored in the memory.
 
\begin{theorem}
\label{thm:standard-stream}

There exists an algorithm in the standard streaming model that computes with probability $0.9$ the edit distance and all the edits using $O(K^8 \log^5 n)$ bits of space and $\tilde{O}(K^2 n)$ time, where $n$ is the input string size and $K$ is the distance upper bound.
\end{theorem}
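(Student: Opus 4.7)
The plan is to derive Theorem~\ref{thm:standard-stream} as an immediate corollary of Theorem~\ref{thm:sketching} by verifying that the encoder of the sketching algorithm admits a one-pass implementation. In the streaming protocol I would first scan $s$ once, accumulate its sketch $sk(s)$ of size $O(K^8 \log^5 n)$ in memory, then scan $t$ once to build $sk(t)$, and finally invoke the decoder of Section~\ref{sec:sketch} on the two stored sketches to recover the edit distance and all edits with probability $0.9$ (and report ``error'' with probability $1 - 1/\poly(n)$ when $ed(s,t) > K$).

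The main task is to show that each module of the encoder streams. The CGK embedding is online by construction: with pointers $i$ over $s$ and $j$ over the image $s'$, each step reads one bit of the input and one bit of the random tape, so $O(\log n)$ state suffices per walk and the $\rho = O(K^2 \log n)$ walks can be run in parallel over the single left-to-right scan of $s$. For each walk I would feed the emitted bits of $s'$ simultaneously into the hierarchical structure $P$ and the content structure $Q$ of Section~\ref{sec:sketch}: rolling Karp--Rabin hashes and pre-image-length counters for every level $\ell \in [L]$ are maintained incrementally, and the resulting per-level vectors $V_\ell$ (and the block vector $U$) are passed element-by-element into the Hamming-distance sketching scheme of Lemma~\ref{lem:ham}, which is itself explicitly one-pass with $O(\log n)$ time per element. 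For the shared randomness required by the walks and sketches in both passes I would use Nisan's PRG as discussed in Section~\ref{sec:preliminary}, which produces a streaming-generated random tape from an $\tilde{O}(\log n)$-bit seed; seeking to the same position on the second pass reproduces the bits used in the first.

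The step I expect to require the most care is confirming that the simultaneous in-pass updates to the $L = O(\log n)$ hierarchical hashes, the $L$ Porat sub-sketches $P_\ell$, and the single content sketch $Q$, across all $\rho$ walks, never require buffering more than the claimed sketch-size worth of state: each emitted bit of $s'$ triggers only $O(L)$ constant-time hash/length updates and one streaming Porat update per level, and at most one partial block of size $B = O(\log n)$ per walk per level is held pending completion, so the transient buffers are bounded by $O(\rho \cdot L \cdot B) = \tilde{O}(K^2)$ bits, easily absorbed by $O(K^8 \log^5 n)$. Once this bookkeeping is in place, the time bound $\tilde{O}(K^2 n)$ follows because each of the $\rho$ walks does $\tilde{O}(n)$ work along the scan and the final decoder runs in $\poly(K \log n)$ time, while correctness and the $0.9$ success probability are inherited verbatim from Theorem~\ref{thm:sketching}.
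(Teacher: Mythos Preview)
Your proposal is correct and follows essentially the same approach as the paper: the paper simply observes that the sketching encoder of Section~\ref{sec:sketch} can be executed in one pass because the CGK embedding, the rolling hash signatures and block lengths, and the Hamming sketches of Lemma~\ref{lem:ham} are all streaming, and then sketches $s$, sketches $t$, and decodes. Your write-up is more detailed (explicit buffer accounting, the Nisan PRG remark) but the underlying argument is identical.
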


\section{Concluding Remarks}
In this paper we have proposed an improved algorithm for document exchange that matches the information theoretic lower bound $\Omega(K \log n)$ when $\log K = O(\sqrt{\log n})$ while maintaining almost linear encoding/decoding time. We have also proposed the first sketching and streaming algorithms with sketch/space size $\poly(K \log n)$.

Although we have made a significant progress on the three problems related to edit distance, a number of questions remain open. First, for document exchange, can we further improve the communication to optimal bound $O(K \log n)$ for all values $K$ and  $n$, while maintaining  (almost) linear running time?  Second, in the sketching problem, what are the best polynomial dependencies on $K$ and $\log n$ in the sketch size? Can we prove any lower bounds?  In the analysis of our sketching algorithm we did not attempt to optimize the polynomial dependencies. We guess that with a more careful analysis (e.g., replacing some brute force union bounds) of our algorithm, the dependency on $K$ can be reduced to $K^4$ or even $K^3$, but what is the best that we can achieve? Finally, is it possible to derandomize our algorithm for document exchange to obtain a better error-correcting code for edit distance?
 
\section*{Acknowledgements} The authors would like to thank Hossein Jowhari for helpful discussions at the early stage of this work. The second author would like to thank Funda Ergun, Cenk Sahinalp, Dirk Van Gucht and Erfan Sadeqi Azer for helpful discussions and comments.

\bibliographystyle{plain}
\bibliography{paper}

\appendix

\section{Missing Proofs in the Preliminaries}

\subsection{Proof of Theorem~\ref{thm:IMS} (The Improved IMS Algorithm)}
\label{proof:thm:IMS}

We describe our improved IMS algorithm and the analysis below.

\vspace{-2mm}
\paragraph{Encoding.}
The scheme uses a pair of pairwise independent hash functions $f_1: [n] \to [(K\log n)^{c}]$ and $f_2: [n]^* \to [(K\log n)^{c}]$.  In words, the function $f_2$ is a rolling hash function (e.g., the Karp-Rabin hashing) that map substrings to $[(K\log n)^{c}]$.  The encoding process is divided into levels $\ell = 1, 2, \ldots$. At the top level ($\ell = 1$), Alice divides her string $s$ into $2K$ blocks (substrings) $x_1, \ldots, x_{2K}$ each of length $n/(2K)$, and then sends to Bob a vector $U_1$ of length $2K$ where the $i$-th coordinate is set to be $f_2(x_i)$.   

At the level $\ell \ge 2$, Alice creates a {\em signature} vector $U_\ell [1..(K\log n)^{c}]$ with each coordinate being an integer in $[(K\log n)^{c}]$ (encoded using $c(\log K+\log\log n)$ bits) initialized to $0$.  Alice cuts each block at level $\ell - 1$ into two (approximately) equal length sub-blocks, that is, one of length $\lfloor b/2 \rfloor$ and the other of length $\lceil b/2 \rceil$ if the original block is of length $b$. Then for a block number $i$ containing substring $x$, Alice sets $U_\ell[f_1(i)] = U_\ell[f_1(i)] \XOR f_2(x)$ (here $\XOR$ denotes the bit-wise exclusive-or operator). Alice then uses the scheme in Lemma~\ref{lem:ECC} (setting $p = 1/(K \log n)^{c_p}$ for a large enough constant $c_p$) to compute a redundancy of $U_\ell$ which allows to recover up to $2K$ errors.  Alice continues this process until reaching a level $L$ at which all blocks are of length at most $c(\log K+\log\log n)$ bits. At this level, for the $i$-th block $x$,  Alice sets $U_L (f_1(i)) = U_L[f_1(i)] \XOR x'$, where $x'$ is equal to $x$ padded with zeros if it is of length less than $c(\log K+\log\log n)$, and then computes a redundancy of $U_L$ that allows to recover up to $2K$ errors.  Finally, Alice sends Bob the redundancies of $U_2, \ldots, U_L$ computed at all levels. 

\vspace{-2mm}
\paragraph{Decoding.}
The decoding phase also works by levels.  At each level Bob tries to reconstruct $K$ blocks of Alice's input $s$ by matching their hash signatures against that of the substrings of his input $t$.  At the top level ($\ell = 1$), Bob tries to match every block of Alice's input $s$ against  the $2K+1$ substrings of his input $t$ using Alice's message $U_1$: If the block starts at position $j$ in $s$, then Bob compares its hash signature against the signatures of substrings of $t$ of same length starting at positions ranging from $j-K$ to $j+K$.  In this way at most $K$ blocks will not find a match. Bob then copies the contents of the matching substrings from $t$ into the corresponding blocks in $s$. There will be at most $K$  blocks in $s$ left unmatched at the top level. 

At the level $\ell \ge 2$, Bob creates a vector $V_\ell$ of the same size as $U_\ell$ initialized to $0$.  He then divides each matched block into two sub-blocks, and for each sub-block $x$ that matches the $i$-th block of $s$ at the same level,  sets $V_\ell[f_1(i)] = V_\ell[f_1(i)] \XOR f_2(x)$ (here $\XOR$ is applied coordinate-wise). Bob then uses the redundancy of $U_\ell$ sent by Alice to recover $U_\ell$ from $V_\ell$, and computes the vector $W_\ell = U_\ell \XOR V_\ell$. He can now recover the signatures of $2K$ missing blocks in $s$ as follows. For a block at position $i$, he copies its signature from the entry $W_\ell[f_1(i)]$. He then compares again the signature of each missing block with that of the $2K+1$ substrings of his input $t$ as before, and tries to match at least $K$ of the $2K$ missing blocks. Bob continues in the same way until reaching the bottom level $L$, at which the copied entries will be the actual contents of the $2K$ missing blocks. 

\vspace{-2mm}
\paragraph{Analysis.}
We now prove the correctness of the algorithm and analyze its costs. It is easy to see that at each level $\ell$, the vector $W_\ell$ only contains entries of signatures corresponding to the missing blocks. Moreover, since we have at most $2K$ missing block signatures and the size of vector $W_\ell$ is $(K\log n)^c$, the probability of not having any collision is at least $1 - (2K)^2/(K\log n)^c = 1 - 4/(K^{c-2}\log^{c} n)$.  Applying a union bound on $\log n$ levels, the overall success probability is at least $1- 1/(K \log n)^{\Theta(1)}$.

The communication cost at each level is
$O(\log (2K) + K (\log(K \log n) + \log K + \log(K \log n))) = K (\log K + \log\log n).$
Summing up over all the $\log n$ levels the cost is $O(K(\log K + \log\log n) \log n)$. 
 The preprocessing for computing all $f(x)$ for substrings of $s$ and $t$ can be done in $O(n)$ time using the Rabin-Karp hashing.  At each level the decoding and encoding time of the scheme in Lemma~\ref{lem:ECC} takes $\tilde{O}(n)$.  Filling vectors $U_\ell$ and $V_\ell$ takes time $O(2^\ell K)$ at each level $\ell$, until the bottom level $L$ at which it takes $O(n/(\log K + \log \log n))=o(n)$ time. The only significant steps that remains to analyze is the comparisons of string signatures.  At each level we have $2K$ block signatures, where each signature is compared against $2K + 1$ substrings.  Thus at each level the cost is $O(K^2)$, and summing over all the $\log n$ levels we get $O(K^2 \log n) = O(n)$.

\subsection{Proof of Lemma~\ref{lem:walk-break}}
\label{proof:lem:walk-break}

\begin{proof}
Since $w$ contains no substring of length at least $\ell$ with period at most $\theta$, it is easy to see that two substrings of $w$ of length $\ell$ starting at two positions $z_0, z_1$ with $\abs{z_0 - z_1} \in [1..\theta]$ can not be equal. In particular, that means that if the random walk starts from the state $(p,q)$ with $|(p-i)-(q-j)|\in[1..\theta]$, $p\in [i..i+m-\ell]$ and $q\in [j..j+m-\ell]$, then there must be a progress step before it reaches the state $(p+\ell,q+\ell)$ since the substrings $s[p..p+\ell-1]$ and $t[q..q+\ell-1]$ are not identical. 
\end{proof}

\subsection{Proof of Lemma~\ref{lem:ECC}}
\label{proof:lem:ECC}

\begin{proof}
We first consider a special case where $\lambda = k$.  We use the universal hashing proposed in \cite{carter1977universal} together with an error-correcting code 
(e.g., one can use the Reed-Solomon code with the decoding algorithm by Gao~\cite{Gao03}). Let $f : [u] \to [v]\ (v=k^2/p)$ be a hash function chosen randomly from a $(2,2)$-universal family of hash functions~\cite{dietzfelbinger1992polynomial}. 

Alice first creates a vector $V[1..v]$ initialized to $0$, and for each $i \in [u]$, she sets $V[f(i)] = V[f(i)] \oplus a[i]$.  She then computes a redundancy that allows to correct up to $k$ errors on $V$ using a systematic Reed-Solomon code, and sends it to Bob. The size of the redundancy is $O(k(\log v+\log \sigma))=O(k(\log\sigma+\log k+\log (1/p)))$, and the size of the description of the hash function $f$ is $O(\log v+\log\log u)$.

Bob can now recover the vector $a$ as follows. He first computes an array $V'$ (initialized to $0$) as follows: for all $i \in [u]$ that are not in the $k$ coordinates where $a$ and $b$ may differ (recall that Bob knows the indices of these coordinates), he sets $V'[f(i)] = V'[f(i)] \oplus b[i]$. He then uses the redundancy received from Alice to recover $V$ from $V'$, and from there computes $W = V \oplus V'$. Bob outputs ``error'' if the error-correcting step (i.e., recovering $V$) fails. Finally for all $i$ in the $k$ coordinates where $a$ and $b$ may differ, Bob sets $a[i]=W[f(i)]$. 

We now show that the recovery succeeds with probability $1-p$. To see this, notice that since we have removed all the $(u - k)$ indices at which $a$ and $b$ must agree by computing $W = V \oplus V'$, the resulting information in $W$ are those $k$ coordinates $a[i]$ where it is possible that $a[i] \neq b[i]$. We can easily recover those $k$ coordinates if there is no collision, that is, no pair of $(a[i], a[j]) \ (i \neq j)$ among those $k$ coordinates such that $f(i) = f(j)$, which holds with probability at least $1 - p$ by using the universal hash function $f$.
The running time of the algorithm is linear except for the encoding/decoding of the error-correcting which takes time $O(v \ \mathtt{polylog} (u))$~\cite{Gao03}. 

We now consider the general $\lambda \ge k$.  The algorithm is similar to the special case above, with a few modifications: (1) we set $v = 4k\lambda/p$; and (2) once Bob finds out the set of (at most) $k$ coordinates (denoted by $X$) where $V$ and $V'$ differ after the error-correcting step, for each $a[i]$ in the $\lambda$ coordinates where errors could occur, he checks whether $f(i) \in X$, and if so sets $a[i] = W[f(i)]$.  Now the algorithm can fail in two ways. The first is again due to collisions, whose probability is upper bounded by $2k^2/v \le p/2$.  The second is due to false positives, that is, there exists a coordinate $b[i]$ among the $\lambda$ candidates such that $a[i] = b[i]$ and $f(i) \in X$, which happens with probability at most $2\lambda/v \le p/2$.  Thus the total error probability of the algorithm is at most $p$.  The communication cost and running time can be computed in the same way as the special case $\lambda = k$.
\end{proof}

\section{Missing Proofs in Document Exchange}

\subsection{Proof of Lemma~\ref{lem:period_elim} (Periods Elimination)}
\label{proof:lem:period_elim}

\begin{proof}
Let $k = ed(s,t)$.
The proof is via LCS (the longest common subsequence). We first consider a simple version of the edit distance where we only allow insertions and deletions. We will consider later the standard version where substitutions are allowed.  Given two strings $s$ and $t$ of lengths $m$ and $n$ respectively, with edit distance $k$ and LCS $z$, it is easy to see that 
\begin{equation}
\label{eq:a-1}
m+n=k+2z.
\end{equation}  
Let us build a bipartite graph, in which nodes on the left side correspond to characters of $s$ and nodes on the right side correspond to characters of $t$.  The LCS simply corresponds to the bipartite graph (denoted by $G$) with the largest number of edges such that the following two properties hold.
\begin{enumerate}
\item For any edge $(i,j)$, we have that $s[i]=t[j]$. 
\item (\em Non-crossing) For any two edges $(i,j)$
and $(i',j')$, we have that $i\neq i'$, $j\neq j'$
and moreover $i<i'$ iff $j<j'$. 
\end{enumerate}

What we will show is that 
\begin{equation}
\label{eq:a-2}
\text{LCS}(s',t') \ge \text{LCS}(s,t) - \pi.
\end{equation}
Together with (\ref{eq:a-1}) we immediately have $ed(s', t') \le ed(s,t)$, since $s'$ and $t'$ are generated by removing the central substrings of lengths $\pi$ from $s$ and $t$. To show (\ref{eq:a-2}) we just need to show that we can build a bipartite graph $G'$ corresponding to an alignment between the strings $s'$ and $t'$ that has at most $\pi$ edges fewer than $G$. 

We denote $s$ by $p_1p_2p_3\ (p_1 = p_2 = p_3 = p)$, and $t$ by $q_1q_2q_3$; each $q_i$ or $p_i$ is of length $\pi$.  To construct $G'$, we keep all edges in $G$ that do not connect to nodes (characters) in $p_2$ or $q_2$. Thus the only edges we could miss are those that either connect a node in $p_2$ with a node in $q_2$, or a node in $p_2$ with a node in $q_i$ with $i \neq 2$, or a node in $q_2$ with a node in $p_i$ with $i\neq 2$. Let us consider all edges connected with nodes in $q_2$ (sorted in the increasing 
order of the node they connect in $q_2$). We can only have five cases: 
\begin{enumerate}
\item No such edge exists.  This is an easy case: all edges that we can lose are those connected with $p_2$, and thus the number of lost edges cannot be more than $\pi$.  

\item Both the first edge and the last edge are connected with $p_2$.  In this case, all lost edges are connected with $p_2$, and thus the number of lost edges cannot be more than $\pi$. 

\item The first edge is connected with $p_1$ and the last one is connected with $p_3$. Then clearly all edges from $p_2$ will end up in $q_2$. It is then evident that number of lost edges is at most $\pi$, since all lost edges have to be connected with $q_2$. 

\item The first edge is connected with $p_1$ and the last one is connected with either $p_1$ or $p_2$. This is the case that we will consider below. 

\item The first edge is connected with either $p_2$ or $p_3$ and the last one is connected with $p_3$. This is symmetric to the previous case.
\end{enumerate}

We now consider the fourth case. We will show that the number of lost edges that cannot be restored is at most $\pi$. Let $m_1$ be the number of edges that connect $p_1$ with $q_2$, and let $m_2 = \pi - z$ where $z$ is the last position in $p_2$ that is connected to a node in $q_2$.  In other words $m_2$ is the maximal range of positions at the end of $p_2$ that do not contain any node connected to $q_2$.   First, it is evident that all nodes in last $m_1$ positions of $p_1$ can only be connected to nodes in $q_2$; since we have suppressed $q_2$, all those positions in $p_1$ will no longer be connected with any edge.  We next consider the $m_3$ edges that connect $p_2$ to $q_2$. It is easy to see that $m_3\leq \pi-\mathtt{max}(m_1,m_2)$. We then consider the edges that connect $p_2$ to $q_3$. Those edges will all be suppressed. However, since the string $s$ is periodic and the last $m_1$ positions in $p_1$ are free, we can restore all edges that connect $q_3$ to the last $m_1$ positions in $p_2$ by connecting them to positions in $p_1$ instead. We thus cannot lose more than $\mathtt{max}(m_2-m_1,0)$ from this part. We now summarize that the edges we may lose: 
\begin{enumerate}
\item The edges that connect $p_1$ to $q_2$. This number is no more than 
$m_1$. 

\item The edges that connect $p_2$ to $q_2$. This number is no more than 
$\pi-\mathtt{max}(m_1,m_2)$. 

\item The edges that connect $p_2$ to $q_3$ and cannot be restored.  This number is at most $\mathtt{max}(m_2-m_1,0)$. 
\end{enumerate}
Thus the total number of edges that can be lost is at most 
\begin{equation}
\label{eq:a-3}
\pi-\mathtt{max}(m_1,m_2)+m_1+\mathtt{max}(m_2-m_1,0).
\end{equation}
Suppose now that $m_2\geq m_1$. Then (\ref{eq:a-3}) simplifies to $\pi-m_2+m_1+(m_2-m_1)=\pi$.  Otherwise if $m_2 < m_1$, the quantity simplifies to $\pi-m_1+m_1=\pi$.  We thus have proved the lemma for the simple version of edit distance where we do not have substitutions.

To extend the proof to the standard version of edit distance, we redefine the bipartite graph as follows.  We consider two types of edges. The first type is called \emph{matching} edges, and second type is called \emph{mismatching} edges. We require $s[i]\neq t[j]$ for a mismatching edge $(i,j)$. Let $z_1$ be the number of matching edges and $z_2$ be the number of mismatching edges. We keep the requirement that for any two edges $(i,j)$ and $(i',j')$ (regardless of their types), we have that $i\neq i'$, $j\neq j'$ and that $i<i'$ iff $j<j'$. It is easy to see that the standard edit distance is $k=m+n-2z_1-z_2$. We note that the proof for the simple version of edit distance still carries through, since we can prove that by deleting $p_2$ and $q_2$ we get a graph in which all but $\pi$ lost edges can be restored.  In particular, we can show in the fourth case at least $m_1$ edges can be restored regardless of their type.  Assuming the worst scenario in which all lost and non-restored edges are matching edges, we still have that the edit distance cannot increase since the term $2z_1+z_2$ cannot decrease by more than $2\pi$. This finishes the proof of the lemma. 
\end{proof}

\section{Missing Proofs in Sketching}

%

\subsection{Proof of Lemma~\ref{lem:recover}}
\label{proof:lem:recover}

\begin{proof}
First by the property of the random walk, for each block $u'$ in $s$ (or $t$) of size $B = 4\log n$, by a Chernoff bound its pre-image $u$ in $s$ (or $t$) must of size at least $3$ with probability at least $1 - 1/n^3$.  Then by a union bound on at most $O(n/\log n)$ blocks, with probability $1 - 1/n^2$ this holds for all blocks in $s$ and $t$, which we condition on in the rest of the proof.

Now define a mapping $f_s$ from the image string $s'$ (after the CGK embedding) back to the original string $s$, such that $f_s([i'..j']) = [i..j]$ iff character $s'(i')$ was copied from $s[i]$, and $s'(j')$ was copied from $s[j]$. Define $f_t$ the same way for $t'$ and $t$.


Consider the $n/B$ pairs of blocks of strings $s'$ and $t'$. Let $1 \le i_1 < \ldots < i_Z \le n/B$ denote the indices of those mismatched blocks.  Our goal is to recover $f_s([(i_z-1)B+1, i_z B])$ and $f_t([(i_z-1)B+1, i_z B])$ for all $z \in [Z]$.  We will show how to recover the position (starting and ending indices) of a particular block $f_s([(i_z-1)B+1, i_z B])$ in $s$.  We can do the same thing for $t$.  

Fix a $z \in [Z]$. Let $[x_1 .. y_1] = f_s([(i_z-1)B+1, i_z B])$.  Our goal is to compute $x_1$ and $y_1$.  Recall that each mismatched block in $s$ corresponds to a root-leaf path in the binary tree constructed for $s$ in the hierarchical structure, which can be identified by looking at the pairs of nodes with different hash signatures, which can be obtained from the differences between $P$ and $P'$.  From the length information stored in the leaf (level $1$) node of the path we can compute $\ell_1 = y_1 - x_1 + 1$.  Now let $[x_2 .. y_2] = f_s([(\lceil i_z/2 \rceil - 1)  2B + 1, \lceil i_z/2\rceil  2B])$ be the position of the substring corresponding to the level $2$ node of the path. We must have
\begin{eqnarray*}
\begin{array}{l}
(x_1, y_1)  =  \left\{
  \begin{array}{rl}
   (x_2, x_2 + \ell_1 - 1), & \text{if $i_z$ is odd},\\
   (y_2 - \ell_1 + 1, y_2), & \text{if $i_z$ is even}.
  \end{array}
  \right.
\end{array}
\end{eqnarray*} 
Thus once $(x_2, y_2)$ is obtained, we can also obtain $(x_1, y_1)$.  We thus can compute $(x_1, y_1)$ in a recursive way, and the recursion will finally reach the root where we have $[x_L .. y_L] = [1..n]$ (thus $x_L = 1$ and $y_L = n$), where $L = \log(3n/B)$ is the height of the tree. 

Once we have the positions of all the mismatched blocks, we can use the differences between $Q$ and $Q'$ to recover their contents. 

Now we can enumerate the edges of the corresponding effective alignment $\A$ in a backward greedy fashion.  Consider for a $z \in [Z]$ with $i_z + 1 < i_{z+1}$ the gap between the $i_z$-th pair of mismatched blocks and the $i_{z+1}$-th pair of mismatched blocks in $s'$ and $t'$. Let $p = i_zB + 1$ and $q = (i_{z+1} - 1)B$.  We thus have $s'[p..q] = t'[p..q]$.   Let $u = f_s(q)$ and $v = f_t(q)$.   Since  $s'[q]=t'[q]$, we must have $s[u] = t[v]$ and the random walk must take the same action when reading $s[u]$ and $t[v]$, that is, either the next walk state is $(u,v)$ or $(u+1, v+1)$.  Now we also have $s'[q-1]=t'[q-1], \ldots, s'[p] = t'[p]$, we can thus apply this argument backwards, and get $s[u-1] = t[v-1], s[u-2] = t[v-2], \ldots$, which means we can add edges $(u-1, v-1), (u-2, v-2), \ldots$ to $\A$.  We can continue this process until we reach the edge $(u-\beta, v-\beta)$ where either $u-\beta = f_s(p-1) + 1$ or $v-\beta = f_t(p-1) + 1$, at which point we have three cases:
\begin{enumerate}
\item We have both $u-\beta = f_s(p-1) + 1$ and $v-\beta = f_t(p-1) + 1$.  In this case we do nothing.

\item We have $u-\beta = f_s(p-1) + 1$ but $v-\beta = f_t(p-1) + 2$.  In this case we make $t[v - \beta -1]$ a singleton.

\item We have $v-\beta = f_t(p-1) + 1$  but $u-\beta = f_s(p-1) + 2$.  In this case we make $s[v - \beta -1]$ a singleton.
\end{enumerate}
The construction of $\A$ completes when all the matching edges are added.  In fact, we do not even need to enumerate edges one by one but just compute the first and last edges of these clusters of consecutive edges.
\end{proof}

\subsection{Proof of Lemma~\ref{lem:decode}}
\label{appendix:proof_decode}
We first show how to compute $\I$ using a $\rho$-way merge.  Recall that each effective alignment $\A$ can be represented by a set of clusters in the form of $\C_\ell = (u_\ell, v_\ell, \eta_\ell)$ where $(u_\ell, v_\ell)$ is the first edge of the $\ell$-th cluster in $\A$ and $\eta_\ell$ is the number of consecutive edges in the cluster, plus at most $N$ singletons in between  (recall that $N = \poly(K \log n)$ is a parameter we introduce for the definition of a good random walk). The number of clusters is clearly upper bounded by the number of singletons (plus $1$).  For each effective alignment $\A_j$, let $G_j$ be the set of nodes consisting of all the singletons and all the boundary (first and last) nodes of the clusters in  $s$.  Let $G = \cup_{j \in [\rho]} G_j$.  Thus $\abs{G} = \poly(K \log n)$.  We now scan the nodes in $G$ sequentially from left to right, during which we maintain a binary search tree $\T$ to keep track of all currently ``active'' clusters, that is, when encountering a node $G$ that is the first node of a cluster $\C = (u, v, \eta)$,  we add $\C$ to $\T$ with the key $(u - v)$ (i.e., the shift of the edges in $\C$), and when encountering the last node of a cluster $\C$, we simple remove $\C$ from $\T$.  At any step during the scan, if $\T$ has $\rho$ nodes all of which have the same key (shift), then we add the corresponding edge (determined by the node in $s$ and the shift) to $\I$. In this way the total time for computing $\I$ is bounded by $\poly(K \log n)$.

Since all the edges in $\I$ will be included in the optimal alignment,  we just need to align the remaining nodes in the gaps between the clusters in $\I$ in the optimal way. We will make use of the algorithms in \cite{LMS98} which involves $O(K^2)$ longest common prefix (\LCP) queries (we refer the readers to Section~\ref{sec:sim-stream} for the idea of the algorithm in \cite{LMS98}, but presented in a space-efficient manner).  We will show that we can construct a data structure $\D$ of size $\poly(K \log n)$ to answer the $O(K^2)$ \LCP\ queries in $\poly(K \log n)$ time, by making use of the fact that most of the remaining nodes (except $\poly(K \log n)$ ones) form at most $\poly(K \log n)$ periodic substrings with periods of lengths at most $\poly(K \log n)$.

We partition $[1..n]$ into two types of alternative regions, called {\em short} regions and {\em long} regions respectively, as follows. We view the nodes in $G$ naturally partitioning $[1..n]$ into a set of intervals, and call an interval {\em big} if its length is larger than $6N$, and {\em small} otherwise.  Note that a big interval is always followed by a small interval  because there must be at least one singleton after the big interval. We now scan the nodes in $G$ from left to right. The first (short) region is the concatenation of all small intervals plus the first $2N$ coordinates of the first big interval  (denoted by $[x..y]$) encountered; the second (long) region consists of the coordinates $[x+N .. y-N]$ inside the big interval $[x..y]$; the third (short) region consists of a set of small consecutive intervals plus the $2N$ coordinates of the preceding and succeeding big intervals; and so on.  Note that each pair of adjacent short and long regions overlap on $N$ nodes.

The observation is that for each long region $[x + N .. y - N]$, both $s[x + N .. y - N]$ and $t[x + N .. y - N]$ are periodic with the same period of length $\pi \le 2N$.  To see this, note that each node $s[i]\ (i \in [x,y])$ is connected by at least two edges $(i, i')$ and $(i, i'')\ (i' \neq i'')$ in two effective alignments such that $\abs{i - i'} \le N$ and $\abs{i - i''} \le N$, which implies $\abs{i' - i''} \le 2N$ and $t[i'] = t[i'']$, and consequently $t[x+N .. y-N]$ is periodic with period length at most $2N$.  Applying a similar argument we can show that $s[x+N.. y-N]$ is periodic with periods of the same length.  

For each short region $[i..j]$, we build a data structure $\ST$ consisting of a suffix tree on $s[i..j]$ and $t[i..j]$, which is used for answering \LCP\ queries within the region.  For each long region $[i..j]$, we build a data structure $\PE$ consisting of the period length $\pi$ of the region and a suffix tree on $s[i + N .. i + N + 2\pi - 1]$ and $t[j + N .. j + N + 2\pi - 1]$.  We can answer an \LCP\ query $(\alpha, \beta)$ within each long region as follows: we first query the suffix tree in $\PE$ with $((\alpha - 1) \bmod \pi + 1, (\beta - 1) \bmod \pi + 1)$. If the answer is less than $\pi$, then the answer is immediately returned; otherwise if the answer is in the range $[\pi, 2\pi]$, since $s[i+N .. j-N]$ and $t[i+N .. j-N]$ are periodic, we know that the \LCP\ starting from $(\alpha, \beta)$ can be further extended, in which case the answer is $r \pi + \text{\LCP}(\alpha + r\pi, \beta + r\pi)$, where $r$ is the smallest integer such that both $\alpha + r\pi$ and $\beta + r\pi$ fall into the range $[j - 2N + 1 .. j]$, and $\text{\LCP}(\alpha + r\pi, \beta + r\pi)$ is the answer to the \LCP\ query on $(\alpha + r\pi, \beta + r\pi)$ for which we will recursively query the $\ST$ structure of the next short region.

The whole data structure can be represented as $\D = (\ST_1, \PE_1, \ST_2, \PE_2, \ldots, \ST_Z)$.  Give a query $(\alpha, \beta)$ with $\abs{\alpha - \beta} \le K$, we scan the sequence of regions to locate the first region $[i..j]$ such that $i \le \alpha, \beta \le j$.  Note that we can always find such a region since the adjacent regions overlap on at least $N \ge K$ nodes.  Let $\ST_z$ or $\PE_z$ $(z \in [Z])$ be the associated data structure of that region, using which we can get an answer $h$ for $\text{LCP}(\alpha, \beta)$ {\em within} the region such that $s[\alpha .. \alpha + h - 1] = t[\beta .. \beta + h - 1]$. We now have two cases.
\begin{enumerate}
\item We have $\alpha + h - 1 = j$ or $\beta + h - 1 = j$.  In this case, $\text{\LCP}(\alpha, \beta) = h + \text{LCP}(\alpha + h, \beta + h)$, and we can compute $\text{LCP}(\alpha + h, \beta + h)$ by recursively querying $\PE_z$ or $\ST_{z+1}$.  

\item We have $\alpha + h - 1 < j$ and $\beta + h - 1 < j$. In this case we simply return $h$.
\end{enumerate}

It is easy to see that the size of $\D$ can be bounded by $\poly(K \log n)$, and can be constructed in time $\poly(K \log n)$.  An \LCP\ query can be answered in $O(N)$ time, and thus $O(K^2)$ queries can be answered in time $O(K^2 N)$.  We conclude that both the space and time used in the decoding phase are both upper bounded by $\poly(K \log n)$.

\end{document}